\numberwithin{equation}{section}
\theoremstyle{plain}
	\newtheorem{theorem}{Theorem}[section]
	\newtheorem{lemma}[theorem]{Lemma}
	\newtheorem{corollary}[theorem]{Corollary }
	\newtheorem{proposition}[theorem]{Proposition}
\theoremstyle{definition}
        \newtheorem{assumption}{Assumption}[section]
        \newtheorem{conjecture}{Conjecture}[section]
\theoremstyle{remark}
	\newtheorem{remark}{Remark}[section]
\newcommand{\N}{\mathbb N}
\newcommand{\Z}{\mathbb Z}
\newcommand{\R}{\mathbb R}
\newcommand{\C}{\mathbb C}
\renewcommand{\O}{\mathrm O}
\newcommand{\U}{\mathrm U}
\newcommand{\E}{\mathbb{E}\,}
\newcommand{\dv}{\mathrm{d}}
\newcommand{\tr}{\mathrm{Tr}\,}
\newcommand{\diag}{\mathrm{diag}}
\newcommand{\sgn}{\mathrm{sgn}}
\newcommand{\ind}{\mathbbm{1}}
\newcommand{\re}{\mathrm{Re}\,}
\newcommand{\im}{\mathrm{Im}\,}
\newcommand\oast{\stackMath\mathbin{\stackinset{c}{0ex}{c}{0ex}{\ast}{\bigcirc}}}
\begin{document}

\title[Matrix harmonic analysis at high temperature]{Matrix harmonic analysis at high temperature via the Dirichlet process}

\author{Jiyuan Zhang}
\email[JZ]{jiyuanzhang@scut.edu.cn}
\address[JZ]{School of Mathematics, South China University of Technology, China}
\date{\today}
\keywords{Matrix harmonic analysis; High-temperature limit; Dirichlet process; Markov-Krein correspondence; Spherical integral}

\begin{abstract}
\noindent We investigate harmonic analysis of random matrices of large size with their Dyson indices going simultaneous to zero, that is in the high temperature limit. In this regime, we show that the multivariate Bessel function/Heckman-Opdam hypergeometric function of the empirical spectral distribution converges to the Fourier/Mellin transform of a measure, which and the limiting empirical distribution are intimately related by the Markov-Krein correspondence. The uniqueness, existence and other properties of the Markov-Krein correspondence can be studied using the theory of the Dirichlet process.
\end{abstract}

\maketitle

\section{Introduction and Main results}

\subsection{Overview}

Harmonic analysis studies functions and measures through their spectral representations, most notably via Fourier transforms or characteristic functions. One of the key strengths of harmonic analysis is that all information about a measure (or distribution) is encoded in its characteristic function.

In the matrix setting, for instance, the \textit{Harish-Chandra integral} (or the \textit{spherical integral}~\cite{HC,IZ,Helgason_book}), which generalises classical Fourier transform to non-commutative settings, serves as a fundamental tool in representation theory and harmonic analysis on symmetric spaces. In the perspective of random matrix theory, the spherical integral is explicitly given by
\begin{equation}
    I_N^{(\beta)}(B_N;A_N):=\int_{\U_\beta(N)}\exp\tr (UB_NU^*A_N)\dv U,\quad \beta=1,2.
\end{equation}
where $U_1(N)$ denotes the orthogonal group $\O(N)$; $\U_2(N)$ denotes the unitary group $\U(n)$; $\dv U$ denotes the normalised Haar measure on $U_\beta(N)$ respectively. Taking $B_N=\diag(u,0,\ldots,0)$, we say this is a rank-one version of the spherical integral, denoted by $I_N^{(\beta)}(u;A_N)$. 

A particular interest is the asymptotics of the rank-one spherical integral, when the size of the matrix $N\to+\infty$. A celebrated result~\cite{Benoit_thesis,GZ02,GM05} establishes a profound link between its logarithmic asymptotics and $\mathcal R$-transform from free probability theory, that is, an analogue of the Fourier transform in the non-commutative setting. Explicitly, if the spectral measure (the normalised counting measure of the eigenvalues) of the matrix $A_N$ converges weakly to a deterministic measure $\rho$, then as $N\to \infty$, with certain analytic conditions one has,
\begin{equation}
    \frac{2}{N\beta}\log I_N^{(\beta)}\left(\frac{N\beta}{2}u;A_N\right)\to\int_0^{u}\mathcal R_{\rho}(s)\dv s,
\end{equation}
where $\mathcal R_{\rho}$ is the $\mathcal R$-transform of the measure $\rho$. For recent developments in this direction see~\cite{BG_rectangular, GuionnetHusson2021, HussonKo2022}.
\\

The previous result is shown only for $\beta=1,2$. While classically $\beta$ represents the symmetry of the matrix spaces (with $\beta = 1,2$ corresponding to orthogonal/unitary ensembles), more recent viewpoints treat $\beta$ not merely as a discrete symmetry parameter but as a continuous deformation parameter, reflecting the continuous nature of the inverse temperature of an associated interacting particle system. Two complementary frameworks allow one to extend the theory from the classical Dyson indices to all $\beta>0$.

In the flavour of statistical mechanics, the first framework, called the $\beta$-ensemble, is to identify eigenvalues as a one-dimensional Coulomb gas model subject to a confining potential and pairwise logarithmic repulsion. In this interpretation the parameter $\beta$ plays the role of the strength of repulsion (or inverse temperature), interpolating continuously between the classical symmetry classes and enabling a thermodynamic viewpoint on spectral statistics. A canonical example is the Gaussian $\beta$ ensemble, which has a joint density
\begin{equation}\label{GbetaE}
    p(\dv \lambda_1,\ldots,\dv\lambda_N)\propto \exp\left(\beta\left(\sum_{j<k}\log|\lambda_j-\lambda_k|-\sum_{j=1}^N\frac{ x_j^2}{4}\right)\right).
\end{equation}
We refer the readers to~\cite{AGZ,ForresterBook,DE02,BEY14,BG13} for an overview of the classical $\beta$-ensembles.

Within the same Coulomb–gas interpretation, it is curious to allow the interaction strength $\beta$ itself to vary with the system size. This leads to the study of $N$-dependent $\beta$-ensembles, in which $\beta=\beta_N$ is a prescribed sequence depending on $N$. Of particular interest is the regime $\beta=\beta_N=O(\frac{1}{N})$ known as the \textit{high temperature regime}, where the repulsion becomes asymptotically weak relative to the confining potential. In this regime the repulsion is sufficiently weak that classical log-gas methods must be modified, and many features of the spectrum exhibit a crossover between independent-particle behaviour and strongly correlated Coulomb gas behaviour. Following the pioneering analysis of the high temperature Gaussian $\beta$-ensemble in~\cite{ABG12}, a series of works have investigated high temperature $\beta$-ensembles for more general potentials and related particle systems~\cite{AG21,MP22,DGM24,AB19,FM21,Forrester22,Magaldi22,NT18,NT20,NTT25}. 

Although the $\beta$-ensemble viewpoint motivates why continuous $\beta$ is natural, our main results rely on the second deformation framework---one does not deform the repulsion strength but instead deforms the spherical integral itself. Such framework has been introduced by using theory of Macdonald polynomials (see e.g.~\cite{Macdonald_book,BG15,GM20}). A specific reduction of the Macdonald polynomial is the \textit{multivariate Bessel function} (see e.g.~\cite{Dunkl,Opdam93,dJ93,GK02,Cuenca21,Sun16}), acting as the continuous deformation for the spherical integral, which coincides with the spherical integral for $\beta=1,2$,
\begin{equation}
    \mathcal B_{\vec a}\left(\vec b;N,\frac{\beta}{2}\right)=I_N^{(\beta)}(\diag(\vec a);\diag(\vec b)),
\end{equation}
where $\diag(\vec a)$ denotes the diagonal matrix with eigenvalues $\vec a=(a_1,\ldots,a_N)$. The full definition of the multiplicative Bessel function will be given in the next subsection. Although strictly speaking such a $\U_\beta(N)$ group does not exists for general $\beta$, it helps our understanding to view the sequence $\vec a=(a_1,\ldots,a_N)$ as eigenvalues of a virtual ``$\beta$-matrix'', and thus from a harmonic analysis viewpoint, the multivariate Bessel function can be seen as the characteristic function of such a $\beta$-matrix. 

Again, we would like to consider $\beta$ being an $N$-dependent parameter, i.e. $\beta=\beta_N$ (we will omit the subscript $N$ throughout the rest of the paper). It turns out that there are three regimes of $\beta$:
\begin{alignat*}{2}
    &\text{Classical regime: }&\beta&=o\left(\frac{1}{N}\right),\\
    &\text{High temperature regime: }\quad&\beta&=\frac{2c}{N}+o\left(\frac{1}{N}\right),\text{ for some fixed }c>0,\\
    &\text{Free regime}&\beta&\gg \frac{1}{N},
\end{alignat*}
as $N$ goes to infinity (the second regime gets its name as $\beta$ being called the ``inverse temperature''). The asymptotics the multivariate Bessel function behaves differently in each regime. As mentioned above, when $\beta$ is fixed to be $1$ or $2$ (meaning we are in the free regime), the logarithm of the spherical integral tends to the integral of the $\mathcal R$-transform. In the other two regimes, we would like to understand respective asymptotics of the multivariate Bessel function. It turns out, that in the high-temperature limit, if the normalised counting measure of the vector $\vec a$ has a limiting measure $\rho$, its rank-one multivariate Bessel function tends to the Fourier transform of a different measure $\rho^{(c)}$,
\begin{equation}\label{1.4}
    \mathcal B_{\vec a}\left((u,0,\ldots,0);N,\frac{\beta}{2}\right)\xrightarrow{\text{high-temperature limit}}\int_\R e^{ux}\rho^{(c)}(\dv x)
\end{equation}
where $\rho$ and $\rho^{(c)}$ satisfies the \textit{Markov-Krein correspondence}~\eqref{MK}. Such result has been partially studied in~\cite{MP22,Mergny_thesis,BGCG22}, assuming the existence of all moments. In this work we will extend it to more general limiting measures including the ones with unbounded supports and possibly with heavy tails. We also refer the readers to~\cite{Mergny_thesis,BGCG22,NTT25} for several compelling examples of random matrix ensembles in the high temperature limit.
~\\

The Harish-Chandra integral is intemately related to sums of random matrices. Being the matrix characteristic function, it can be verified that for deterministic matrices $A_N,B_N$ and a Haar matrix $U$, the Harish-Chandra integral admits the following property
\begin{equation}
    \E_U I_N^{(\beta)}(S_N;A_N+UB_NU^*)=I_N^{(\beta)}(S_N;A_N)I_N^{(\beta)}(S_N;B_N).
\end{equation}
In relation to this, there is a multiplicative counterpart of the spherical integral, the Gelfand-Naimark integral~\cite{GN57}, given by
\begin{equation}
    J_N^{(\beta)}(\vec s;A_N):=\int_{\U_\beta(N)}\prod_{k=1}^N\det ([UA_NU^{*}]_{k\times k})^{s_j-s_{j+1}}\,\dv U,\quad \beta=1,2,\quad s_{N+1}=0.
\end{equation}
Here $[X]_{k\times k}$ denotes the $k\times k$ principal minor of the matrix $X$. It enjoys a similar property under a symmetrised matrix multiplication: for deterministic semi-positive definite matrices $A_N,B_N$ and a Haar matrix $U$, one has
\begin{equation}
    \E_U J_N^{(\beta)}(\vec s;A_N^{1/2}UB_NU^*A_N^{1/2})=J_N^{(\beta)}(\vec s;A_N)J_N^{(\beta)}(\vec s;B_N).
\end{equation}
This integral has been used to study products of random matrices; see e.g.~\cite{KZ23}. The large $N$ asymptotics of this integral in the free regime has also been shown (for the special cases $\beta=1,2$), where it recovers the S-transform in free probability theory; see ~\cite{MP20,Husson21}.
~\\

Similar to the relationship between the Harish-Chandra integral and the multivariate Bessel function, when generalising the index $\beta$ to a continuous value, the \textit{Heckham-Opdam hypergeometric function}~\cite{HS94,BG15,Sun16} as the multiplicative counterpart of the multivariate Bessel function comes into play. We remark the relation for $\beta=1,2$ as
\begin{equation}
    \mathcal F_{\log \vec a}\left(\vec s-\frac{\beta}{2}(0,1,\ldots,N-1);N,\frac{\beta}{2}\right)=J_N^{(\beta)}\left(\vec s;\diag(\vec a)\right).
\end{equation}
where $\log \vec a=(\log a_1,\ldots,\log a_N)$. The full definition of the Heckman-Opdam hypergeometric function will be given in the next subsection. 

A parallel phenomenon occurs in the multiplicative setting, governed by the asymptotics of the rank-one Heckman-Opdam hypergeometric function in the high-temperature limit. If the normalised counting measure of the vector $\vec a$ has a limiting measure $\rho$, its rank-one Heckman-Opdam hypergeometric function tends to the Mellin transform of the $\rho^{(c)}$, which appears to be the same measure as ion~\eqref{1.4},
\begin{equation}
    \mathcal F_{\log \vec a}\left((u,0,\ldots,0);N,\frac{\beta}{2}\right)\xrightarrow{\text{high-temperature limit}}\int_\R x^u\rho^{(c)}(\dv x).
\end{equation}
This result, as far as the author is aware of, has not been studied elsewhere. Furthermore, we will also look at the asymptotics of both the multivariate Bessel function and the Heckman-Opdam hypergeometric function in the classical limit. 
~\\

The unforeseen appearance of the Markov-Krein correspondence in this problem leads us to investigate the theory of \textit{Dirichlet process}, a well-studied random process intimately relating with the Dirichlet distribution. It originally arises from the studies of Bayesian non-parametric statistics~\cite{Ferguson73,CRa,CRb,CR90,DK96,LR04,LP09}. The Markov-Krein correspondence itself appears extensively in studies in Markov moment problems~\cite{KM77,Kerov_book}, in Rayleigh measures~\cite{Kerov_paper,Kerov_book,Krein_book,Bufetov13,Faraut,FF16,Fourati11a,Fourati11b,Sodin17}, in asymptotic freeness~\cite{ACG23,MT24}, and in properties of its Fourier transform~\cite{DelloSchiavo19,LR04,Yamamoto84}. The relationship between Dirichlet process and Markov-Krein correspondence is also described in Section~\ref{s2}.
~\\

The purposes of this paper are in three folds:

\begin{itemize}
    \item to recognise the counterpart of the asymptotics of the spherical integrals~\cite{GM05,MP20,Husson21} in the high temperature regime;
    \vspace{.5em}
    \item to emphasise the importance of the Dirichlet process in studies of the Markov-Krein correspondence and the high temperature limits;
    \vspace{.5em}
    \item as a side-product, to provide further properties of the random mean of the Dirichlet process.
\end{itemize}

\subsection{Preliminaries and main results}

\subsubsection{Probability measures and convergence assumptions}

In this paper we consider Borel probability measures on the real line. Their supports can be \emph{non-compact}. 
Let $\rho$ be a Borel probability measure on $\R$ and $c>0$. We assume a logarithmic tail control of the measure $\rho$, that is, $\rho\in\mathcal V$ with
\begin{equation}
    \mathcal V=\left\{\rho:\rho\text{ is a probability measure and }\int_\R\log(1+x^2)\rho(\dv x)<+\infty\right\},\label{V}
\end{equation}
For all such $\rho\in\mathcal V$, there exists an other probability measure $\rho^{(c)}$ such that 
\begin{equation}\label{MK}
    \int\frac{\rho^{(c)}(\dv x)}{(z-x)^c}=\exp\left(-c\int_\R\log(z-s)\rho(\dv s)\right),\quad \text{for all }z\in\C\setminus\R,
\end{equation}
We say that $\rho^{(c)}$ and $\rho$ satisfy a \textit{Markov-Krein correspondence}. The existence and uniqueness of $\rho^{(c)}$ are guaranteed by Theorem~\ref{equiv}.
\\

Let us consider a sequence of discrete measures $\{\rho_N\}$:
\begin{equation}\label{discrete}
    \rho_N(\dv x)=\frac{1}{N}\sum_{j=1}^N\delta_{a_j^{(N)}}(\dv x)
\end{equation}
where $a_1^{(N)}<\ldots<a_{N}^{(N)}$ are points on the real line. They can be considered either deterministic or random; in the latter case the measure $\rho_N$ is a random measure.

The sequence $\{\rho_N\}$ approximates a limiting measure $\rho$ in the large $N$ limit. We want to introduce a specific type of convergence. For $0<\eta<1$, the extended $\eta$-Wasserstein distance is given by
\begin{equation}
    \dv_{W_\eta}(\mu,\nu):=\sup_{h\in\mathcal H_\eta}\left|\int_\R h(x)\left(\mu(\dv x)-\nu(\dv x)\right)\right|
\end{equation}
where $\mathcal H_\eta$ is the class of function $h$ satisfying
\begin{equation}
    |h(x)-h(y)|\le\min\{ |x-y|,|x-y|^\eta\}.
\end{equation}
Then we give the following assumptions.
\begin{assumption}[Convergence for a sequence of deterministic measures]\label{a1}
    There exists $0<\eta<1$ such that the extended $\eta$-Wasserstein distance between $\rho_N$ and $\rho$ is small when $N\to+\infty$.
\end{assumption}
\begin{remark}[Premise of Assumption~\ref{a1}]\ 
    \begin{enumerate}
    \item The convergence of the extended $\eta$-Wasserstein distance between probability measures is a strictly stronger requirement than the weak convergence. The reason why we require such an assumption is, as implied in Lemma~\ref{lem_unif}, such convergence implies a uniform control (away from the support) of the logarithmic integral (the $g$-function). The latter is the essential technical requirements needed for our derivations.
    \vspace{.5em}
    \item Since the extended $\eta$-Wasserstein distance is bounded by the Wasserstein distance, which is bounded above by the Prokhorov metric, we conclude that in the case all $\rho_N$ and $\rho$ are uniformly compactly supported, Assumption~\ref{a1} is equivalent to a weak convergence.
    \vspace{.5em}
    \item In~\cite{GM05} an assumption regarding a small \textit{Dudley distance} between $\rho_N$ and $\rho$ has been considered. In the special case where $\rho_N$ and $\rho$ are uniformly compactly supported, the convergence in Dudley distance is equivalent to the weak convergence. We would like to point out the similarity between our assumption and the one in~\cite{GM05}.
    \vspace{.5em}
    \item This assumption quantifies the convergence of $\rho_N$, which is also is in line with~\cite{CNXYZ22}; the latter considers convergence towards \textit{stable distributions}, an important class of heavy-tailed distributions, which plays the role of the normal distribution in consideration with the central limit theorem. We take the extended $\eta$-Wasserstein distance instead of the usual Wasserstein distance, because the convergence according to the latter does not conclude a convergence to the stable distribution with index $\alpha\le 1$.
    \end{enumerate}
\end{remark}

\begin{assumption}[Convergence for a sequence of random measures]\label{a2}
    Let $\{\rho_N\}_N$ be a sequence of random measures, i.e. all $a_j^{(N)}$'s are now random points. There exists $0<\eta<1$ such that the extended $\eta$-Wasserstein distance between $\rho_N$ and $\rho$ satisfies
    \begin{equation}
        \lim_{N\to+\infty}\E \dv_{W_\eta}(\rho_N,\rho)=0,\qquad 
        \sup_N\E\exp\dv_{W_\eta}(\rho_N,\rho)<+\infty.
    \end{equation}
\end{assumption}

\begin{remark}
    By the Markov inequality, the second requirement implies an exponential bound of the tail of $\dv_{W_\eta}(\rho_N,\rho)$ as
    \begin{equation}
        \mathbb P\left[\dv_{W_\eta}(\rho_N,\rho)<y\right]\le e^{-y} \sup_N\E\exp\dv_{W_\eta}(\rho_N,\rho)<+\infty
    \end{equation}
    for $y>0$.
\end{remark}

\subsubsection{Symmetric functions}
For $\lambda$ be a partition with $\lambda_1\le\ldots,\le \lambda_N$ in $\Z^N$, we denote $P_{\vec\lambda}(x_1,\ldots,x_N;q,t)$ the Macdonald polynomial (for its full definition, we refer the readers to~\cite{Macdonald_book}). The Heckman-Opdam hypergeometric function $\mathcal F_{\vec r}$ is defined by a reduction of Macdonald polynomial. For $a_1<\ldots<a_N$ in $\R^N$, in the case $q=e^{-\varepsilon}, t=q^\theta (\theta>0), \vec \lambda=\lfloor\varepsilon^{-1}\vec r\rfloor$ and $\vec x=e^{\varepsilon \vec y}$, it is
\begin{equation}\label{HO_macpoly}
    \mathcal F_{\vec r}(y_1,\ldots,y_N;\theta):=\lim_{\varepsilon\to 0^+}\frac{P_{\vec \lambda}(x_1,x_2,\ldots,x_N;q,t)}{P_{\vec \lambda}(1,t,\ldots,t^{N-1};q,t)}
\end{equation}
The multivariate Bessel function $\mathcal B_{\vec a}$ is defined by a limit of the Heckman-Opdam hypergeometric function,
\begin{equation}
    \mathcal B_{\vec a}(z_1,\ldots,z_N;\theta):=\lim_{\varepsilon\to 0^+}\mathcal F_{\varepsilon \vec a}(\varepsilon^{-1}z_1,\ldots,\varepsilon^{-1}z_N;\theta)
\end{equation}
We are particularly interested in the rank-one reduction of those symmetric functions, that is, the special case where $y_2=y_3=\ldots=y_N=0$. We denote
\begin{equation}
    P_{\vec \lambda}(x_1;N,q,t):=\frac{P_{\vec \lambda}(x_1,t,\ldots,t^{N-1};q,t)}{P_{\vec \lambda}(1,t,\ldots,t^{N-1};q,t)}
\end{equation}
as the rank-one Macdonald polynomial. Then the rank-one Heckman-Opdam hypergeometric function and the multivariate Bessel function are denoted respectively by
\begin{align}
    &\mathcal F_{\vec r}(y_1;N,\theta):=\mathcal F_{\vec r}(\,\underbrace{y_1,0,\ldots,0}_{N\text{ entries}}\,;\theta),
    \quad \mathcal B_{\vec a}(y_1;N,\theta):=\mathcal B_{\vec a}(\,\underbrace{y_1,0,\ldots,0}_{N\text{ entries}}\,;\theta).
\end{align}
We also refer the readers to Propositions~\ref{Bessel} and~\ref{Heckman-Opdam} for integal representations for the two rank-one symmetric functions.

\subsubsection{Main results}

The main results of this paper are the asymptotics of those rank-one symmetric functions in different regimes. For the multivariate Bessel function, we have the following result.

\begin{theorem}[The Fourier transform]\label{thm_main}
    Let $\{\rho_N\}$ be a sequence of discrete probability measures given in~\eqref{discrete}, and $\rho$ be its limiting measure under the convergence specified by Assumption~\ref{a1}. We also assume that $\rho\in\mathcal V$, and that for some $u\in\C\setminus\{0\}$ one has
    \begin{equation}\label{1.21}
        \int_\R e^{\re u\, |x|}\rho(\dv x)<+\infty.
    \end{equation} 
    Then the following statements hold.
    \begin{enumerate}
    \item \textbf{Classical regime:} for $\beta_N=o(1/N)$, we have
    \begin{equation}\label{1.20}
        \lim_{N\to+\infty}\mathcal B_{\vec a^{(N)}}\left(u;N,\frac{\beta_N}{2}\right)=\int_\R e^{u x}\rho(\dv x).
    \end{equation}
    \item \textbf{High-temperature regime:} for $\beta_N=\frac{2c}{N}+o(1/N)$, in the case $c>1$, we have
    \begin{equation}\label{1.24}
        \lim_{N\to+\infty}B_{\vec a^{(N)}}\left(u;N,\frac{\beta_N}{2}\right)=\int_\R e^{u x}\rho^{(c)}(\dv x).
    \end{equation}
    where $\rho^{(c)}$ and $\rho$ satisfies a Markov-Krein correspondence.
    \end{enumerate}
\end{theorem}

\begin{theorem}[The Fourier transform; the random case]\label{thm_3}
    Let $\{\rho_N\}$ be a sequence of \textbf{random} discrete probability measures given in~\eqref{discrete}, and $\rho$ be its \textbf{deterministic} limiting measure under the convergence specified by Assumption~\ref{a2}. Then in the high-temperature limit $\beta_N=\frac{2c}{N}+o(1/N)$ with $c>1$, 
    \begin{equation}\label{1.24a}
        \lim_{N\to+\infty}\E B_{\vec a^{(N)}}\left(it;N,\frac{\beta_N}{2}\right)=\int_\R e^{it x}\rho^{(c)}(\dv x),\quad t\in\R\setminus\{0\},
    \end{equation}
    where $\rho^{(c)}$ and $\rho$ satisfies a Markov-Krein correspondence, and the expectation takes over the randomness of $\vec a^{(N)}$.
\end{theorem}

\begin{remark}
    The asymptotics of the multivariate Bessel function in both the classical and high-temperature regimes can be partly recovered by~\cite{BGCG22}. In Section~\ref{s4} we will discuss the difference between their results and ours.
\end{remark}

The following theorem is the counterpart of Theorem~\ref{thm_main} for the Heckman-Opdam hypergeometric function.

\begin{theorem}[The Mellin transform]\label{thm_main_m}
    Let $\{\rho_N\}$ be a sequence of discrete probability measures on $\R_+$ given in~\eqref{discrete}, and $\rho$ be its limiting measure on $\R_+$ under the convergence specified by Assumption~\ref{a1}. We also assume that $\rho\in\mathcal V$, and that for some $u\in\C\setminus\{0\}$
    \begin{equation}
        \int_0^{+\infty} x^{\re u}\,\rho(\dv x)<+\infty.
    \end{equation}
    Then the following statements hold.
    \begin{enumerate}
    \item \textbf{Classical regime:} for $\beta_N=o(1/N)$, we have
    \begin{equation}\label{1.22}
        \lim_{N\to+\infty}\mathcal F_{\log\vec a^{(N)}}\left(u;N,\frac{\beta_N}{2}\right)=\int_0^{+\infty} x^{u}\rho(\dv x).
    \end{equation}
    \item \textbf{High-temperature regime:} for $\beta_N=\frac{2c}{N}+o(1/N)$ and $\re u\ge 1$, in the case $c>1$ we have
    \begin{equation}\label{1.26}
        \lim_{N\to+\infty}\mathcal F_{\log\vec a^{(N)}}\left(u;N,\frac{\beta_N}{2}\right)=\int_0^{+\infty} x^{u}\rho^{(c)}(\dv x),\quad \re u\ge 1.
    \end{equation}
    where $\rho^{(c)}$ and $\rho$ satisfies a Markov-Krein correspondence~\eqref{MK}.
    \end{enumerate}
\end{theorem}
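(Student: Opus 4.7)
The plan is to parallel the proof of Theorem~\ref{thm_main} by realising the rank-one Heckman–Opdam hypergeometric function as the $u$-th Mellin moment of a Dirichlet-weighted sum. Concretely, the starting point is the integral representation
\begin{equation*}
\mathcal F_{\log\vec a}\!\left(u;N,\tfrac{\beta}{2}\right)\;=\;\mathbb E_T\!\left[\left(\sum_{j=1}^N a_j T_j\right)^{\!u}\right],\qquad T\sim\mathrm{Dirichlet}\!\left(\tfrac{\beta}{2},\ldots,\tfrac{\beta}{2}\right),
\end{equation*}
which for $\beta=1,2$ follows from the identity $J_N^{(\beta)}((u,0,\ldots,0);\diag(\vec a))=\mathbb E_U[(UA_NU^*)_{11}^u]$ together with the fact that $(|U_{1j}|^2)_{j=1}^N$ is Dirichlet$(\beta/2,\ldots,\beta/2)$, and for general $\theta=\beta/2>0$ is obtained by analytic continuation in $\theta$ of the explicit Dirichlet kernel (the Mellin counterpart of the integral representation used for the multivariate Bessel function). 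Writing $S_N:=\sum_j a_j^{(N)}T_j$, the theorem reduces to identifying the weak limit of $S_N$ and justifying exchange of limit and expectation for $S_N^u$.

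\textbf{Classical regime.} When $\theta=\beta_N/2=o(1/N)$, i.e.\ $N\theta\to 0$, the Dirichlet distribution concentrates on the vertices of the simplex: $T$ is asymptotically $e_J$ with $J$ uniform on $\{1,\ldots,N\}$, so that $S_N$ is close in law to $a_J^{(N)}\sim\rho_N$. Assumption~\ref{a1} gives $\rho_N\to\rho$ with quantitative tail control, and combined with $\int x^{\re u}\rho(dx)<\infty$ this yields uniform integrability of $\{S_N^{\re u}\}$; hence $\mathbb E[S_N^u]\to\int x^u\rho(dx)$, proving~\eqref{1.22}.

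\textbf{High-temperature regime.} With $\theta=c/N+o(1/N)$, the parameter vector $(\theta,\ldots,\theta)$ is precisely the $N$-point discretisation whose large-$N$ limit produces the Dirichlet process of concentration $c$ with base measure $\rho$. The classical Ferguson–Cifarelli–Regazzini theory (see Section~\ref{s2}) identifies the weak limit of $S_N$ as the random mean of this Dirichlet process, whose law is $\rho^{(c)}$, determined by the Markov–Krein correspondence~\eqref{MK}. The restriction $\re u\ge 1$ enters through Jensen's inequality: convexity of $x\mapsto x^{\re u}$ gives
\begin{equation*}
\mathbb E[S_N^{\re u}]\;\le\;\sum_j \mathbb E[T_j]\,a_j^{\re u}\;=\;\int x^{\re u}\,\rho_N(dx),
\end{equation*}
which is uniformly bounded under Assumption~\ref{a1} and the moment hypothesis. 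This controls the first-order moment, but full uniform integrability requires additional input: if $\rho$ is compactly supported then $|S_N|\le\max_j|a_j^{(N)}|$ is uniformly bounded; if instead $c>1$, one uses that higher moments of the Dirichlet weights are summable in that regime to obtain a moment of order strictly larger than $\re u$, giving the needed UI.

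\textbf{Main obstacle.} The delicate point is precisely the uniform integrability step in the unbounded-support case. When $c\le 1$ and $\rho$ has unbounded support, the Dirichlet process places substantial mass on a small number of atoms, and $\rho^{(c)}$ can have much heavier tails than $\rho$; our moment bound on $\rho$ alone is then insufficient to guarantee that no mass of $S_N^{\re u}$ escapes to infinity. This is the structural reason for the case distinction in the statement, and the technical heart of the proof lies in the quantitative tail estimate for Dirichlet-weighted sums at concentration $c>1$ that matches the UI one gets trivially from compactness in the alternative case.
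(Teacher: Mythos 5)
Your high-level strategy coincides with the paper's: express $\mathcal F_{\log\vec a}(u;N,\theta)$ as the $u$-th Mellin moment of the Dirichlet-weighted sum $S_N=\sum_j a_j^{(N)}T_j$ (equivalently, the Mellin transform of $\rho_N^{(c_N)}$ with $c_N=N\theta$, which is Corollary~\ref{cor}), establish weak convergence of $\mathrm{Law}(S_N)$, and then upgrade to convergence of Mellin transforms via an integrability bound. Your Jensen step, $\E[S_N^{\re u}]\le\sum_j\E[T_j]\,a_j^{\re u}=\int x^{\re u}\rho_N(\dv x)$, is essentially the content of Theorem~\ref{thm_tail} and Corollary~\ref{cor_tail}. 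So the architecture is sound, but there are genuine gaps in the execution.

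First, the starting identity for general $\theta>0$ is asserted, not proved. Agreement at $\theta=1/2,1$ plus the phrase \emph{analytic continuation in $\theta$} does not determine an analytic function; two interpolation points are not enough. The paper establishes this by matching two independently derived contour-integral formulas (Theorem~\ref{Heckman-Opdam} for the Heckman--Opdam function via Macdonald degeneration, Theorem~\ref{Mellin} for the Mellin transform of $\rho^{(c)}$ via the Lauricella/Markov--Krein machinery); some such derivation is needed. Second, you misidentify the role of the hypothesis $c>1$. You attribute it to uniform integrability of $S_N^{\re u}$ (via \emph{summable higher moments of the Dirichlet weights}), but the paper instead needs $c>1$ (or compact support) already to prove the weak convergence $\rho_N^{(c_N)}\to\rho^{(c)}$: Proposition~\ref{prop_ht} rests on the Fourier contour formula of Theorem~\ref{Fourier}, and the integrand along $\R-i\delta\,\sgn(t)$ is absolutely integrable precisely when $c>1$ or $\rho$ is compactly supported. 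The UI bound you actually need, once weak convergence is available, is supplied by Corollary~\ref{cor_tail} for $\re u\ge 1$ and has nothing to do with $c>1$. Third, the weak convergence itself is the technical core and is not dispatched by quoting Ferguson--Cifarelli--Regazzini: that theory (and Corollary~\ref{weak_conv0}) treats $\rho\mapsto\rho^{(c)}$ for a \emph{fixed} $c$, whereas here $c_N\to c$ and $\rho_N\to\rho$ simultaneously, and in the classical regime $c_N\to0$ corresponds to no Dirichlet process at all. The paper handles these two limits by an expansion of the Markov--Krein relation at $c_N=0$ combined with Stieltjes-transform convergence (Section~\ref{s3.3}) and by explicit bounds on $|F_N(t)-F(t)|$ (Section~\ref{s3.4}). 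Your \emph{concentration at the vertices of the simplex} heuristic and the appeal to classical theory point at the right phenomena, but the quantitative estimates (uniform in $N$, in the double limit $c_N\to c$, $\rho_N\to\rho$) that make them rigorous are exactly what is missing.
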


\begin{remark}[Technical conditions and generalisations]\
    \begin{enumerate}
        \item For Theorem~\ref{thm_main}, Theorem~\ref{thm_3} and Theorem~\ref{thm_main_m} in the high-temperature regime, the analytic assumptions that $c>1$ is only technical. They assures the contour integral~\eqref{fourier1} (in the case $\re u=0$) to be absolutely integrable, which is a key technical point to the proof. We believe the same result holds without those assumptions, as it is implied partially by results in~\cite{BGCG22}.
        \vspace{.5em}
        \item For Theorem~\ref{thm_main_m} in the high-temperature regime, we furthermore assumed $\re u\ge 1$. This is because by Corollary~\ref{cor_tail} the existence of the Mellin transform of $\rho^{(c)}$ can only be guaranteed in this case. It remains open to extend it to all $u\in \C\setminus\{0\}$.
        \vspace{.5em}
        \item For Theorem~\ref{thm_3} we take $u=it\in i\R$. We believe this setting can be generalised to $u\in\C\setminus\{0\}$ with a similar condition~\eqref{1.21} given in Theorem~\ref{thm_main}. However, this will further requires that the supports of the measures $\rho_N$ are uniformly bounded at least from one side of the real axis. We decide to keep this case open for the sake of convenience.
        \vspace{.5em}
        \item We believe that Theorem~\ref{thm_3} can be generalised to the classical regime. To do so it requires a more detailed analysis on the error term in Section~\ref{s3.3}, and relates the difference in Stieltjes transforms to the difference in Fourier transform.
        \vspace{.5em}
        \item We believe that Theorem~\ref{thm_3} can also be generalised to the case of Mellin transform. However, our integral representation~\eqref{mellin2} requires the support of the measure to be bounded either away from infinity or away from zero (when one takes the inverse of the random variable), which in turn requires for case-by-case more detailed analysis of the difference between the Mellin transforms.
    \end{enumerate}
    
\end{remark}

\subsubsection{Strategy of the proofs}

We will mainly demonstrate how one proves~\eqref{1.20} and~\eqref{1.24}. The proofs of~\eqref{1.22} and~\eqref{1.26} are in a similar manner.

\vspace{1em}
\smallskip\noindent\emph{Step I.} We will show that for probability measures $\rho^{(c)}$ and $\rho$ satisfying the Markov-Krein correspondence, there is a contour integral formula for the characteristic function of $\rho^{(c)}$, in terms of $\rho$ (Theorem~\ref{Fourier}). This is done by investigating the theory of Dirichlet process.

\vspace{1em}
\smallskip\noindent\emph{Step II.} We show that on the other hand, the multivariate Bessel function has an integral expression (Proposition~\ref{Bessel}). This agrees with the expression above (Corollary~\ref{cor})in the following sense: when $\rho_N$ is a discrete measure, for any $c=\beta N/2$, the integral formula between the measures $\rho_N^{(c)}$ and $\rho_N$ is exactly the same as the one for the multivariate Bessel function $\mathcal B_{\vec a}(s;N,\beta/2)$, where $\vec a=(a_1,\ldots,a_N)$ gives the point mass of the discrete measure $\rho_N$. This is also a finite $N$ and discrete version of~\eqref{1.24}. This theorem is shown by generalising the existing integral formula for the multivariate Bessel function in~\cite{Cuenca21} using a contour deformation.

Now we have for $c_N=N\beta_N/2$,
\begin{equation}
    \mathcal B_{\vec a}\left(u;N,\frac{\beta_N}{2}\right)=\int_\R e^{ux}\rho_N^{(c_N)}(\dv x)
\end{equation}
holds for all $N$, the remaining task is to show that in the classical or high-temperature limit, i.e. $c_N\to 0$ or $c_N\to c>0$, the right hand side converges to the Fourier transform of $\rho^{(c)}$, that is, a weak convergence $\rho_N^{(c_N)}\to\rho^{(c)}$.

\vspace{1em}
\smallskip\noindent\emph{Step III.} To show~\eqref{1.20}, since $\rho_N^{(c_N)}$ and $\rho_N$ satisfy a Markov-Krein correspondence, we expand this correspondence for large $N$. This gives us the dominating contribution for the measure $\rho_N^{(c_N)}$ (which is of course $\rho^{(c)}$) and the error terms. A Stieltjes inversion theorem has been used to transfer convergence of the Stieltjes transforms back to the weak convergence of measures. See Section~\ref{s3.3}.

\vspace{1em}
\smallskip\noindent\emph{Step IV.} To show~\eqref{1.24}, we will directly show the convergence between the Fourier transforms of $\rho_N^{(c_N)}$ and $\rho^{(c)}$. This is done by studying the asymptotics of the characteristic function of $\rho^{(c_N)}_N$ and show that the error between it and the characteristic function of $\rho^{(c_N)}_N$ is small; see Section~\ref{s3.4}.

\vspace{1em}
\smallskip\noindent\emph{Step V.} To show the respective results for the multiplicative case, we follow Step I and II to show that
\begin{equation}
    \mathcal F_{\log\vec a}\left(u;N,\frac{\beta_N}{2}\right)=\int_0^{+\infty }x^u\rho_N^{(c_N)}(\dv x).
\end{equation}
According to the weak convergence $\rho_N^{(c_N)}\to\rho^{(c)}$, the convergence between their Mellin transforms is naturally obtained.

\vspace{1em}
\smallskip\noindent\emph{Step VI.} Finally, to show Theorem~\ref{thm_3}, one needs to investigate the difference between both sides, which has been obtained in Step IV. Applying Assumption~\ref{a2} to the difference gives the result.

\subsubsection{Organisation of the paper}

In Section~\ref{s2} we will make a brief introduction of Dirichlet process, the relationship with the Markov-Krein correspondence, and derive several new results including:
\begin{itemize}
\item a convergence result of $\rho^{(c)}$ (Corollary~\ref{weak_conv0});
\item a comparison of the tail behaviour between $\rho$ and $\rho^{(c)}$ (Theorem~\ref{thm_tail} and Corollary~\ref{cor_tail});
\item the Fourier and Mellin transform of $\rho^{(c)}$ in terms of contour integrals (Theorem~\ref{Fourier} and~\ref{Mellin}).
\end{itemize}

In Section~\ref{s3}, we will give two contour integral expressions for multivariate Bessel function and the Heckman-Opdam hypergeometric function (Proposition~\ref{Bessel} and Proposition~\ref{Heckman-Opdam}). Using these and together with~Theorem~\ref{Fourier} and~\ref{Mellin}, we provide proofs for Theorem~\ref{thm_main} and~\ref{thm_main_m}. We then end the paper with some Discussions in Section~\ref{s4}.

\section{Dirichlet process and the Markov-Krein correspondence}\label{s2}

In this section we review the theory of the Dirichlet process using the notation of~\cite{LR04,DelloSchiavo19}. Our aim is to familiarise the reader with the basic framework while also introducing several new results, namely Corollary~\ref{weak_conv0}, Theorem~\ref{thm_tail}, Theorem~\ref{Fourier}, and Theorem~\ref{Mellin}. Some of the statements from~\cite{LR04} will be recalled here for completeness. We note that a few proofs in~\cite{LR04} contain minor gaps (in particular, Theorem~\ref{Fourier} i.e.~\cite[Theorem 7]{LR04}). These issues are purely technical and do not affect the essence of the results, while for the sake of a self-contained exposition, we provide complete proofs and fill in the missing details in the Appendix.

To begin with, we recall that the Dirichlet measure is a compactly supported probability measure on $\R^N$ given by
\begin{equation}
    \mathrm{Dir}_{\alpha_1,\ldots,\alpha_N}(\dv\vec \sigma):=\frac{\prod_{j=1}^N\Gamma(\alpha_j)}{\Gamma(\alpha_1+\ldots+\alpha_N)}\ind_{\vec \sigma\in \Delta_N}\prod_{j=1}^N\sigma_j^{\alpha_j-1}\dv\sigma_j,
\end{equation}
with support given by
\begin{equation}
    \Delta_N:=\left\{  \vec\sigma\in[0,1]^N: \sum_{j=1}^N\sigma_j=1\right\}.
\end{equation}

\subsection{Dirichlet process and its random mean}\label{s2.1}

Let $\R$ equipped with the standard topology and the Borel $\sigma$-algebra $\mathcal B(\R)$. Denote $C_c(\R)$ the set of all compactly supported continuous functions on $\R$. We let $\mathcal P(\R)$ be the space of all probability measures defined on $\R$, equipped with the weak${}^*$ topology which is induced by the duality of $C_c(\R)$. Equipping it with $\mathcal B(P(\R))$ the Borel $\sigma$-algebra induced by this topology makes $(\mathcal P(\R),\mathcal B(\mathcal P(\R)))$ a measurable space of probability measures.

Let $(\Omega,\Sigma,\mathbb P)$ be the probability space. A random probability measure $\zeta$ is a probability-measure-valued random variable, that is, a measurable map between $(\Omega,\Sigma,\mathbb P)$ and $(\mathcal P(\R),\mathcal B(\mathcal P(\R)))$. Let $\rho$ be a probability measure on $\R$ and $c>0$. A Dirichlet process $D_{c\rho}$ with parameter $c\rho$ is the unique random probability measure on $X$, such that for any finite measurable partition $A_1,\ldots,A_N\in\mathcal B(\mathcal P(\R))$ ($A_1\cup\ldots \cup A_N=\R$ and $A_j\cap A_k=\emptyset$ for any $j,k$), the mapping from $\Omega$ to $\R^N$
$$\omega\mapsto (D_{c\rho}(\omega)(A_1),\ldots,D_{c\rho}(\omega)(A_N))$$ 
being an $\R^N$-valued random variable, following the Dirichlet distribution $\mathrm{Dir}_{(c\rho(A_1),\ldots,c\rho(A_N))}$ on $\Delta_N$. Here $\omega\in\Omega$ is an event and $D_{c\rho}(\omega)\in\mathcal P(X)$ is a probability measure on $X$. One also shortens the notation by writing $D_{c\rho}=D_{c\rho}(\omega)$ if it is clear from the context. The existence and uniqueness of this random measure can be verified using the Kolmogorov extension theorem; see e.g.~\cite{Ferguson73}.

The random mean of $D_{c\rho}$ is given by the law of the following $\R$-valued random variable:
\begin{equation}
    \omega\mapsto\int_\R x\,D_{c\rho}(\omega)(\dv x),
\end{equation}
where $\omega\in\Omega$ and $D_{c\rho}(\omega)\in\mathcal P(\R)$. We denote the law by $\rho^{(c)}$---clearly, $\rho^{(c)}$ depends both on $\rho$ and on $c$.

In general there is no guarantee of the existence of $\rho^{(c)}$, simply because the integral $\int_\R x\,D_{c\rho}(\omega)(\dv x)$ may not converge for all $D_{c\rho}(\omega)$. In~\cite[Theorem 1]{LR04} it is shown that this integral is absolutely integrable almost surely if and only if the logarithm function is integrable with respect to the measure $\rho$, and thus $\rho^{(c)}$ is well-defined.

\begin{theorem}[~\cite{LR04} Existence of the random mean]\label{equiv}
    Let $c>0$. A probability measure $\rho$ satisfies
    \begin{equation}\label{cond_nu}
        \int_\R\log(1+x^2)\rho(\dv x)<+\infty
    \end{equation}
    if and only if the Dirichlet process $D_{c\rho}$ satisfies
     \begin{equation}
        \mathbb P\left(\int_\R |x|\,D_{c\rho}(\dv x)<+\infty\right)=1.
    \end{equation}
\end{theorem}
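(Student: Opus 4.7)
The plan is to realise the Dirichlet process via its Poisson--Gamma (Ferguson--Klass) construction and reduce the claim to the almost-sure convergence of a Poisson sum, which is then settled by the L\'evy--Khintchine criterion together with a routine logarithmic estimate.

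First I would invoke the classical construction: let $\Pi$ be a Poisson point process on $(0,\infty)\times\R$ with intensity $y^{-1}e^{-y}\,\dv y\otimes c\rho(\dv x)$, enumerated as $\{(Y_k,X_k)\}_{k\ge 1}$. Then $\Gamma_\infty:=\sum_k Y_k$ is $\Gamma(c,1)$-distributed, hence a.s.\ finite and strictly positive, and
\begin{equation*}
D_{c\rho}=\frac{1}{\Gamma_\infty}\sum_{k\ge 1}Y_k\,\delta_{X_k}
\end{equation*}
realises the Dirichlet process of parameter $c\rho$: its finite-dimensional marginals on any measurable partition $A_1,\dots,A_N$ of $\R$ follow $\mathrm{Dir}_{(c\rho(A_1),\dots,c\rho(A_N))}$ by the Poisson colouring theorem combined with the aggregation identity for independent gamma variables. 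Since $\Gamma_\infty>0$ a.s., the event $\bigl\{\int_\R|x|\,D_{c\rho}(\dv x)<+\infty\bigr\}$ coincides, up to a null set, with $\{S<+\infty\}$ where $S:=\sum_{k\ge 1}Y_k|X_k|$.

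Next, $S$ is a nonnegative Poisson functional, so the standard zero--one law for Poisson integrals (Campbell's theorem, equivalently L\'evy--Khintchine) yields $S<+\infty$ a.s.\ if and only if
\begin{equation*}
\int_0^{\infty}\!\int_{\R}\min\{1,y|x|\}\,\frac{e^{-y}}{y}\,\dv y\,c\rho(\dv x)<+\infty.
\end{equation*}
Splitting the inner integral at $y=1/|x|$ gives $|x|\int_0^{1/|x|}e^{-y}\,\dv y+E_1(1/|x|)$, which is bounded on $|x|\le 1$ and asymptotic to $\log|x|$ as $|x|\to\infty$. Hence the inner integral is comparable to $\log(1+x^2)$ on all of $\R$, so the above double integral is finite iff $\int_\R\log(1+x^2)\,\rho(\dv x)<+\infty$, proving both implications simultaneously.

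The main obstacle is the L\'evy--Khintchine/Campbell step: establishing the clean dichotomy between a.s.\ finiteness and a.s.\ infiniteness of $S$ and pinning down the sharp criterion $\int\min\{1,y|x|\}\,\nu(\dv y,\dv x)<+\infty$ for $\nu=y^{-1}e^{-y}\,\dv y\otimes c\rho(\dv x)$. A self-contained argument decomposes $\Pi$ into jumps with $Y\le 1$ and $Y>1$: the former contribute a.s.\ finitely whenever $\E[S\,\mathbf 1_{Y\le 1}]<+\infty$, computed by Campbell's formula for means, and the latter are treated via Kolmogorov's three-series theorem applied to the a.s.\ finite collection of large jumps. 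The remaining ingredients---the construction of $D_{c\rho}$ from $\Pi$ and the equivalence of the two events modulo null sets---are then routine.
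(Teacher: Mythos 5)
Your argument is correct and takes a genuinely different route from the paper's. The paper proves Theorem~\ref{equiv} directly from the extended Lauricella identity~\eqref{Lauricella1}: taking $z=1$ and $\varphi(x)=-|x|$ gives
\begin{equation*}
\E\left[\left(1+\int_\R|\xi_n(x)|\,D_{c\rho}(\dv x)\right)^{-c}\right]=\exp\left(-c\int_\R\log(1+|s|)\,\xi_{n,\sharp}\rho(\dv s)\right),
\end{equation*}
and letting $n\to\infty$ shows that the left side is nonzero exactly when~\eqref{cond_nu} holds, which in turn is equivalent to $\lim_n\int|\xi_n(x)|\,D_{c\rho}(\dv x)<\infty$ a.s.\ by positivity. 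You instead realise $D_{c\rho}$ as a normalised gamma (Ferguson--Klass) process driven by a Poisson point process with intensity $y^{-1}e^{-y}\,\dv y\otimes c\rho(\dv x)$, observe that a.s.\ finiteness of $\int|x|\,D_{c\rho}(\dv x)$ is equivalent to a.s.\ finiteness of the Poisson functional $S=\sum_k Y_k|X_k|$, and then apply the Kingman dichotomy $S<\infty$ a.s.\ iff $\int\min(1,y|x|)\,\nu(\dv y,\dv x)<\infty$, finishing with the elementary estimate that the inner $y$-integral is comparable to $\log(1+x^2)$. Both are valid; the paper's fits naturally into the Lauricella/Markov--Krein machinery already built in Section~\ref{s2} and avoids any explicit construction of the Dirichlet process, while yours makes the dichotomy transparent, since it is inherited from the built-in zero--one law for nonnegative Poisson integrals rather than read off from the positivity of an expectation. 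One small point of precision: what you invoke for the equivalence is Kingman's existence theorem for $\int f\,\dv\Pi$ with $f\ge 0$, not Campbell's formula (which only computes the mean and does not by itself give the ``if and only if''); your fallback via splitting at $Y=1$ and the three-series theorem is fine, but citing the existence theorem directly is cleaner.
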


\begin{remark}
    In~\cite{LR04} the condition~\eqref{cond_nu} for $\rho$ is instead $\int_\R\log(1+|x|)\rho(\dv x)<+\infty$. Though, it is easy to see the equivalence via the inequality
    \begin{equation}
        \log(1+|x|)-1<\log(1+x^2)<2\log(1+|x|)
    \end{equation}
    for $x\in\R$. We take~\eqref{cond_nu} here in order to avoid the non-analyticity of the absolute value function at $0$.
\end{remark}

Intimately related to the random mean of a Dirichlet process is a functional equation called the Markov-Krein correspondence, which gives us a bijective relation between $\rho^{(c)}$ and $\rho$. It is shown that under condition~\eqref{cond_nu}, the Markov-Krein correspondence can be satisfied by a unique choice $\rho^{(c)}$.

\begin{theorem}[\cite{LR04} Existence and Uniqueness of the Markov-Krein correspondence]\label{Exs_unq}
    For any probability measure $\rho$ satisfying~\eqref{cond_nu} and any $c>0$, there exists a unique probability measure $\rho^{(c)}$ satisfying the Markov-Krein correspondence 
    \begin{equation}\label{MKR}
    \int_\R\frac{\rho^{(c)}(\dv s)}{(z-s)^c}=\exp\left(-c\int_\R\log(z-s)\rho(\dv s)\right)
    \end{equation}
    for all $z\in\C\setminus\R$. Also this probability measure $\rho^{(c)}$ is given precisely by the random mean of the Dirichlet process $D_{c\rho}$.
\end{theorem}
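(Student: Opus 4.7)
The plan is to prove existence by exhibiting $\rho^{(c)}$ as the law of the random mean $M:=\int_\R x\,D_{c\rho}(\dv x)$ and verifying~\eqref{MKR} via the Gamma-process construction of the Dirichlet process; uniqueness then follows by showing that the $c$-Stieltjes transform $z\mapsto\int_\R(z-s)^{-c}\mu(\dv s)$ is injective.

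For existence, by Theorem~\ref{equiv} the random mean $M$ is almost surely finite under~\eqref{cond_nu}, so its law is well defined and I take it to be $\rho^{(c)}$. I would realise $D_{c\rho}=G/G(\R)$ where $G$ is a completely random measure on $\R$ with Lévy intensity $c\rho(\dv x)\otimes s^{-1}e^{-s}\,\dv s$; this gives two facts I combine. First, for $f\ge 0$ the Laplace functional of $G$ is $\E[\exp(-\int f\,\dv G)]=\exp(-c\int\log(1+f)\,\dv\rho)$. Second, $Z:=G(\R)\sim\mathrm{Gamma}(c,1)$ is independent of $D_{c\rho}$ (a classical property of Ferguson's construction), hence of $M$. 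Assuming first that $\rho$ is compactly supported, for $t>0$ and real $z>\sup\mathrm{supp}\,\rho$ I apply the Laplace functional to $f(x)=t(z-x)$: the right-hand side becomes $\exp(-c\int\log(1+t(z-x))\rho(\dv x))$, while the left factorises using independence and the Gamma Laplace transform $\E[e^{-\alpha Z}]=(1+\alpha)^{-c}$ to $\E[(1+t(z-M))^{-c}]$. Substituting $w=z+1/t$ and cancelling a factor of $t^c$ yields
\[
\E\bigl[(w-M)^{-c}\bigr]=\exp\!\left(-c\int_\R\log(w-x)\,\rho(\dv x)\right)
\]
for all real $w>\sup\mathrm{supp}\,\rho$. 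Both sides extend holomorphically in $w$ to $\C\setminus\R$ using the principal branches of $(w-s)^{-c}$ and $\log(w-s)$, and agreement on a real ray forces agreement on all of $\C\setminus\R$. For $\rho$ with unbounded support, I would truncate $\rho_R:=\rho|_{[-R,R]}$ (renormalised), apply the bounded case, and let $R\to\infty$; the logarithmic moment bound~\eqref{cond_nu} provides the majorant needed for dominated convergence of $\int\log(z-s)\rho_R(\dv s)$.

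For uniqueness, suppose $\mu_1,\mu_2$ both satisfy~\eqref{MKR}, so their $c$-Stieltjes transforms coincide on $\C\setminus\R$. For $z=x+iy$ with $y>0$, the representation $(z-s)^{-c}=\Gamma(c)^{-1}(-i)^c\int_0^\infty\tau^{c-1}e^{i\tau(z-s)}\,\dv\tau$ (valid because $\im z>0$ makes the integral absolutely convergent) together with Fubini gives
\[
\int_\R\frac{\mu_j(\dv s)}{(z-s)^c}=\frac{(-i)^c}{\Gamma(c)}\int_0^\infty\tau^{c-1}e^{-\tau y}e^{i\tau x}\,\widehat{\mu_j}(-\tau)\,\dv\tau,
\]
where $\widehat{\mu_j}(\xi):=\int e^{i\xi s}\mu_j(\dv s)$. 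Equality of the two sides for all real $x$ (with $y$ fixed) amounts to vanishing of the Fourier transform in $x$ of $\tau^{c-1}e^{-\tau y}[\widehat{\mu_1}(-\tau)-\widehat{\mu_2}(-\tau)]\ind_{\tau>0}$, forcing $\widehat{\mu_1}=\widehat{\mu_2}$ on $(-\infty,0)$ and, via Hermitian symmetry and $\widehat{\mu_j}(0)=1$, on all of $\R$; hence $\mu_1=\mu_2$. The main technical obstacle is the consistent selection of branches of $(\cdot)^{-c}$ and $\log(\cdot)$ throughout the analytic continuation, and, in the existence step for unbounded $\rho$, controlling the truncation limit $R\to\infty$ uniformly on compact subsets of $\C\setminus\R$.
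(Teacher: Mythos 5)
Your proposal is correct, but it takes a genuinely different route from the paper, most notably in the existence step. The paper proves~\eqref{MKR} by taking the Lauricella formula~\eqref{Lauricella1} with $\varphi(x)=ux$ (which was itself established by approximating $D_{c\rho}$ with simple functions $\xi_n$, reducing to the explicit finite-dimensional Dirichlet identity~\eqref{Dir1}) and then passing to the $n\to\infty$ limit via dominated convergence; this is essentially the argument of Lijoi--Regazzini, phrased in terms of push-forward measures $\xi_{n,\sharp}\rho$. You instead invoke the completely-random-measure realization $D_{c\rho}=G/G(\R)$ of a normalized Gamma process, combine the Frullani--Lévy Laplace functional $\E[\exp(-\int f\,\dv G)]=\exp(-c\int\log(1+f)\,\dv\rho)$ with the independence of $Z=G(\R)\sim\mathrm{Gamma}(c,1)$ from $D_{c\rho}$, and read off~\eqref{MKR} after the substitution $w=z+1/t$ and analytic continuation. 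Both are valid: the paper's discrete-approximation route has the advantage of directly producing the slightly more general identity~\eqref{MKR'} (with the extra $u$ parameter) and stays entirely within the Dirichlet-distribution framework, which the paper re-uses in Lemma~\ref{weak_conv} and Theorem~\ref{thm_tail}; your Gamma-process route is arguably the more structural argument (it isolates the independence-of-total-mass property as the mechanism producing the $(\,\cdot\,)^{-c}$ on the left) and avoids the bookkeeping of the simple-function sequence $\xi_n$, at the price of importing the normalized-Gamma construction and its independence lemma as black boxes. For uniqueness, both arguments ultimately reduce to Fourier injectivity, but you do it directly and self-containedly via the one-sided representation $(z-s)^{-c}=\Gamma(c)^{-1}(-i)^c\int_0^\infty\tau^{c-1}e^{i\tau(z-s)}\dv\tau$ and Fubini, whereas the paper defers to Theorem~\ref{Fourier} (its later contour-integral formula for the characteristic function of $\rho^{(c)}$) and invokes Fourier inversion; your version is more elementary and does not rely on a result proved later in the section. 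One small point you should make explicit when writing this up: in the truncation step for unbounded $\rho$, you also need convergence of the random means $M_R\to M$ in distribution (which Lemma~\ref{weak_conv} of the paper supplies), not just of the $g$-function $\int\log(z-s)\rho_R(\dv s)$; and in the Fubini step of the uniqueness argument you should record that absolute integrability follows from $\int_0^\infty\tau^{c-1}e^{-\tau y}\dv\tau=\Gamma(c)y^{-c}<\infty$ for $y>0$ together with $|\widehat{\mu_j}|\le 1$.
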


We will briefly illustrate how one shows the existence of $\rho^{(c)}$ in Theorem~\ref{Exs_unq}; further details can be checked in~\cite{LR04}. The technique used to extend the finitely supported case to the general case is to use an extended Lauricella formula, which has been stated as~\cite[Eqn. (3.1)]{LR04}. Detailed proofs of Theorem~\ref{equiv} and Theorem~\ref{Exs_unq} will be given in Appendix~\ref{A1}.

\subsection{Convergence of measures} 

One aims at showing a relation between weak convergence in $\rho^{(c)}$ and in $\rho$. To do so, let us introduce a functional description (c.f.~\cite{DelloSchiavo19}) of Dirichlet process and its random mean, in comparison with the probabilistic description introduced in Section\ref{s2.1}.

Let $\zeta$ be a random measure (i.e., a measurable map from the probability space to $(\mathcal P(\R),\mathcal B(\mathcal P(\R)))$). For any measurable function $\varphi$ on $\R$ we denote the \textit{evaluation map} as
\begin{equation}
    \mathrm{ev}^\varphi:\mathcal P(\R)\mapsto \R,\quad \mathrm{ev}^\varphi(p):=\int_\R\varphi(x)p(\dv x)
\end{equation}
provided that the integral converges absolutely. For a bounded $\varphi$, one can check that $\mathrm{ev}^\varphi$ is continuous, and hence a measurable map from $(\mathcal P(\R),\mathcal B(\mathcal P(\R)))$ to $(\R,\mathcal B(\R))$. A simple composition of $\mathrm{ev}^\varphi$ and $\zeta$ gives a real-valued random variable, i.e. a measurable map from the probability space to $(\R,\mathcal B(\R))$. One denotes $\rho_\sharp f$ the push-forward measure of $\rho$ under a measurable map $f$. Then we have the following commutative diagram between measure spaces:
\[
\begin{tikzcd}
	(\Omega,\Sigma,\mathbb P) && (\mathcal P(\R),\mathcal B(\mathcal P(\R)),\zeta_\sharp\mathbb P) \\
 \\
	&& (\R,\mathcal B(\R),(\mathrm{ev}^\varphi\circ\zeta)_\sharp\mathbb P)
	\arrow["\zeta", from=1-1, to=1-3]
	\arrow["\mathrm{ev}^\varphi\circ\zeta" ', from=1-1, to=3-3]
	\arrow["\mathrm{ev}^\varphi", from=1-3, to=3-3]
\end{tikzcd}
\]
for any bounded measurable function $\varphi$. In the case $\varphi(x)=\ind_{x\in A}$ being an indicator function of a measurable set $A\subset \mathcal B(\R)$, one also denotes its evaluation map as $\mathrm{ev}^A$.

Now one introduces an alternative definition for the Dirichlet process. Let $\rho$ and $c>0$ be stated as in Section~\ref{s2.1}. A Dirichlet process $D_{c\rho}$ with parameter $c\rho$ is the unique $\mathcal P(\R)$-valued random variable such that for any finite measurable partition $A_1,\ldots,A_N\in\mathcal B(\R)$, the vector 
$$(\mathrm{ev}^{A_1}\circ D_{c\rho},\ldots,\mathrm{ev}^{A_N}\circ D_{c\rho})$$ 
is an $\R^N$-valued random variable, with its law being the Dirichlet distribution $\mathrm{Dir}_{(c\rho(A_1),\ldots,c\rho(A_N))}$ on $\Delta_N$. Then the law of the random mean $\rho^{(c)}$ is the law of the real-valued random variable $\mathrm{ev}^{\mathrm{id}}\circ D_{c\rho}$ where $\mathrm{id}$ denotes the identity function. Precisely, it is
\begin{equation}
    \rho^{(c)}:=(\mathrm{ev}^{\mathrm{id}}\circ D_{c\rho})_\sharp\mathbb P.
\end{equation}
Theorem~\ref{equiv} tells us that $\rho^{(c)}$ is well-defined, namely, the integral $\mathrm{ev}^{\mathrm{id}}(p)$ converges absolutely for any $p\in D_{c\rho}(\Omega)\subset\mathcal P(\R)$, if and only if~\eqref{cond_nu} holds for $\rho$. Thus we have the following commutative diagram.
\[
\begin{tikzcd}
	(\Omega,\Sigma,\mathbb P) && (D_{c\rho}(\Omega),\mathcal B(D_{c\rho}(\Omega)),(D_{c\rho})_\sharp\mathbb P) \\
 \\
	&& (\R,\mathcal B(\R),(\mathrm{ev}^\mathrm{id}\circ D_{c\rho})_\sharp\mathbb P)
	\arrow["D_{c\rho}", from=1-1, to=1-3]
	\arrow["\mathrm{ev}^\mathrm{id}\circ D_{c\rho}" ', from=1-1, to=3-3]
	\arrow["\mathrm{ev}^\mathrm{id}", from=1-3, to=3-3]
\end{tikzcd}
\]

\begin{theorem}[\cite{DelloSchiavo19}]
    For any $c>0$, the map $\mathcal D_c:\mathcal P(\R)\mapsto\mathcal P(\mathcal P(\R)), \mathcal D_c(\rho)=D_{c\rho}$ is continuous with respect to the narrow topologies on $\mathcal P(\R)$ and $\mathcal P(\mathcal P(\R))$. The narrow convergence of the latter is induced by duality of bounded continuous functions on $\mathcal P(\R)$.
\end{theorem}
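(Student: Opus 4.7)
The plan is to invoke the classical Prokhorov strategy: first show that whenever $\rho_n\to\rho$ narrowly in $\mathcal P(\R)$, the family $\{\mathcal D_c(\rho_n)\}_n$ is tight in $\mathcal P(\mathcal P(\R))$; then identify every weak subsequential limit by matching finite-dimensional distributions along $\rho$-continuity partitions. Tightness together with uniqueness of the subsequential limit will then force narrow convergence of the whole sequence.

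For tightness I would exploit the first-moment identity: the $N=2$ defining marginal shows that the mean measure of $D_{c\rho_n}$ equals $\rho_n$ itself, since $\mathrm{Dir}_{(c\rho_n(A),c\rho_n(A^c))}$ has first coordinate mean $\rho_n(A)$. The narrowly convergent sequence $\{\rho_n\}$ is tight on $\R$, so for each $\varepsilon>0$ and $k\in\N$ one picks a compact $K_{k,\varepsilon}\subset\R$ with $\rho_n(K_{k,\varepsilon}^c)<\varepsilon\,2^{-2k}$ uniformly in $n$. Markov's inequality gives $\mathbb P(D_{c\rho_n}(K_{k,\varepsilon}^c)>2^{-k})<\varepsilon\,2^{-k}$; intersecting the complementary events over $k$ confines $D_{c\rho_n}$ with probability at least $1-\varepsilon$ to a Prokhorov-compact subset of $\mathcal P(\R)$. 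This is exactly tightness of $\{\mathcal D_c(\rho_n)\}_n$ in $\mathcal P(\mathcal P(\R))$.

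To identify a subsequential limit $Q$ along $n_k$, I would fix a finite Borel partition $A_1,\ldots,A_N$ of $\R$ with $\rho(\partial A_j)=0$ for every $j$. Portmanteau gives $\rho_{n_k}(A_j)\to\rho(A_j)$; by the Gamma representation $(Y_1,\ldots,Y_N)/\sum_\ell Y_\ell$ with $Y_j$ independent $\mathrm{Gamma}(c\rho_{n_k}(A_j),1)$ one obtains $\mathrm{Dir}_{(c\rho_{n_k}(A_j))_j}\to\mathrm{Dir}_{(c\rho(A_j))_j}$ weakly on $\Delta_N$, the degenerate case where some $\rho(A_j)=0$ simply collapsing the corresponding coordinate to $0$. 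The evaluation map $p\mapsto(p(A_1),\ldots,p(A_N))$ is continuous at any $p$ with $p(\partial A_j)=0$, so the pushforward of $Q$ under it matches that of $\mathcal D_c(\rho)$. Since continuity partitions are cofinal and the Kolmogorov extension construction underlying the definition in Section~\ref{s2.1} determines the law of the random measure from such pushforwards, $Q=\mathcal D_c(\rho)$.

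The most delicate point will be the simultaneous handling of $\rho$-continuity sets in the Portmanteau step and the degeneracies of $\mathrm{Dir}$ at vanishing parameters; both are standard but need careful bookkeeping. A cleaner alternative route would avoid partitions entirely via the Lauricella identity~\eqref{Lauricella1}: for bounded continuous $\varphi$, the map $\rho\mapsto\exp(-c\int\log(z-\varphi(s))\,\rho(\dv s))$ is visibly narrowly continuous in $\rho$ for $z$ off the range of $\varphi$, so by uniqueness of the Cauchy-type transform this gives narrow continuity of the one-dimensional law $\mathrm{Law}(\mathrm{ev}^\varphi\circ D_{c\rho})$; upgrading to joint laws via an analogous multidimensional identity would then characterise narrow convergence in $\mathcal P(\mathcal P(\R))$.
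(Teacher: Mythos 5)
The paper does not actually prove this statement: it is cited verbatim from~\cite{DelloSchiavo19} and used as a black box (notably in Corollary~\ref{weak_conv0}). There is therefore no in-paper proof to compare against, only the external reference, whose own argument proceeds via an explicit formula for the characteristic functional of $D_{c\rho}$.

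Your tightness-plus-identification strategy is a valid and essentially self-contained probabilistic route, and the main steps check out. The mean-measure identity $\mathbb E\,D_{c\rho_n}(B)=\rho_n(B)$ is correct (it is the $N=2$ Beta marginal), Markov's inequality and the dyadic construction do give tightness of $\{\mathcal D_c(\rho_n)\}$ in $\mathcal P(\mathcal P(\R))$, and convergence of the Dirichlet marginals via the Gamma representation, including the degenerate $\alpha\to 0^+$ collapse to $\delta_0$, is standard. Two points deserve to be made explicit rather than left to ``bookkeeping.'' First, to apply the continuous mapping theorem for $\mathrm{ev}^{\vec A}$ to a subsequential limit $Q$, you must know that $Q$-a.e. $p$ satisfies $p(\partial A_j)=0$; this is not automatic, but it does follow once you observe that the mean measure of $Q$ is $\rho$ (take a bounded continuous $f$, note $p\mapsto\int f\,\dv p$ is in $C_b(\mathcal P(\R))$, and pass to the limit), and then $\mathbb E_Q\,p(\partial A_j)=\rho(\partial A_j)=0$. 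Second, the assertion that continuity partitions ``are cofinal'' should be replaced by the precise statement that intervals $(a,b]$ with endpoints in a countable dense set $D\subset\R$ avoiding the atoms of $\rho$ form a generating semiring for $\mathcal B(\R)$, and that the evaluation maps $p\mapsto p(I)$, $I$ in this semiring, generate $\mathcal B(\mathcal P(\R))$; only then does agreement of pushforwards force $Q=\mathcal D_c(\rho)$. Your alternative route via~\eqref{Lauricella1} is closer in spirit to~\cite{DelloSchiavo19}, which works with characteristic functionals; there the one-dimensional narrow continuity is essentially immediate, but upgrading to the full law on $\mathcal P(\mathcal P(\R))$ still requires the same $\sigma$-algebra generation argument, so the two routes are comparable in total effort.
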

Using this one is able to prove the continuity of the map $\rho\mapsto\rho^{(c)}$ for any given $c>0$.

\begin{corollary}\label{weak_conv0}
    Take any $c>0$. Let $\{\rho_n\}$ be a sequence of probability measure satisfying~\eqref{cond_nu}, and let $\rho^{(c)}_n$ be the random mean of $D_{c\rho_n}$. Let the weak convergence $\rho_n\to\rho$ holds as $n$ goes to infinity, and let the limit $\rho$ also satisfy~\eqref{cond_nu}, then there is a weak convergence $\rho^{(c)}_n\to\rho^{(c)}$ with the measure $\rho^{(c)}$ being the random mean of $D_{c\rho}$.
\end{corollary}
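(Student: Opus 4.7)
The plan is to lift the hypothesis $\rho_n\to\rho$ on $\mathcal P(\R)$ to the level of the Dirichlet-process laws on $\mathcal P(\mathcal P(\R))$ via the preceding narrow-continuity theorem, and then descend again to $\mathcal P(\R)$ by pushing forward under the evaluation map $\mathrm{ev}^{\mathrm{id}}$. The first step is immediate: since $\mathcal D_c:\rho\mapsto D_{c\rho}$ is continuous between the narrow topologies,
\[
(D_{c\rho_n})_\sharp\mathbb{P}\longrightarrow (D_{c\rho})_\sharp\mathbb{P}\quad\text{narrowly in }\mathcal P(\mathcal P(\R)),
\]
which means $\mathbb{E}[F(D_{c\rho_n})]\to\mathbb{E}[F(D_{c\rho})]$ for every bounded continuous $F:\mathcal P(\R)\to\R$.

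The technical heart of the proof is that the target functional $\mathrm{ev}^{\mathrm{id}}(p)=\int_\R x\,p(\dv x)$ is neither bounded nor narrowly continuous, so the continuous mapping theorem does not apply directly. I would bypass this by a truncation-and-exchange scheme. For each $k>0$, let $\varphi_k(x):=(-k)\vee(x\wedge k)$, a bounded continuous function on $\R$. For any $g\in C_b(\R)$, the composite
\[
F_{g,k}(p):=g\!\left(\int_\R\varphi_k(x)\,p(\dv x)\right)
\]
is bounded and continuous on $\mathcal P(\R)$ in the narrow topology, so for every fixed $k$ the first step yields
\[
\mathbb{E}\!\left[g\!\left(\int_\R\varphi_k(x)\,D_{c\rho_n}(\dv x)\right)\right]\longrightarrow\mathbb{E}\!\left[g\!\left(\int_\R\varphi_k(x)\,D_{c\rho}(\dv x)\right)\right]\quad\text{as }n\to\infty.
\]

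Finally, I would interchange the limits in $n$ and $k$. On the $\rho$-side, $\rho\in\mathcal V$ combined with Theorem~\ref{equiv} gives $\int_\R|x|\,D_{c\rho}(\dv x)<\infty$ almost surely, so $\int_\R\varphi_k(x)\,D_{c\rho}(\dv x)\to\int_\R x\,D_{c\rho}(\dv x)$ almost surely as $k\to\infty$, and bounded convergence produces $\mathbb{E}[g(\int_\R\varphi_k(x)\,D_{c\rho}(\dv x))]\to\int_\R g(x)\,\rho^{(c)}(\dv x)$. The main obstacle is making the analogous passage uniform in $n$: one needs
\[
\sup_n\mathbb{P}\!\left(\left|\int_\R(x-\varphi_k(x))\,D_{c\rho_n}(\dv x)\right|>\varepsilon\right)\longrightarrow 0\quad\text{as }k\to\infty.
\]
I would obtain this by feeding the Markov--Krein identity~\eqref{MKR} with a suitable $z\in\C\setminus\R$ to control the Cauchy transform of order $c$ of $\rho_n^{(c)}$, converting uniform estimates on $\int_\R\log(z-s)\,\rho_n(\dv s)$ (available from the narrow convergence $\rho_n\to\rho$ together with the log-integrability at the limit) into uniform tail bounds on $\rho_n^{(c)}$, hence on the random mean itself. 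Once this uniform tail estimate is secured, a diagonal argument closes the gap and yields $\int_\R g(x)\,\rho_n^{(c)}(\dv x)\to\int_\R g(x)\,\rho^{(c)}(\dv x)$ for every $g\in C_b(\R)$, which is the required weak convergence.
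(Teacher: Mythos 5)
Your strategy differs from the paper's in a meaningful way: the paper applies the narrow-continuity result of~\cite{DelloSchiavo19} directly to the composite $\varphi\circ\mathrm{ev}^{\mathrm{id}}$, asserting it is ``almost surely bounded and continuous,'' whereas you correctly recognise that $\mathrm{ev}^{\mathrm{id}}$ is neither globally defined nor continuous for the narrow topology on $\mathcal P(\R)$, and so you insert a truncation $\varphi_k$ to produce bounded continuous test functionals $F_{g,k}$. For each fixed $k$ the conclusion $\mathbb{E}[F_{g,k}(D_{c\rho_n})]\to\mathbb{E}[F_{g,k}(D_{c\rho})]$ is correct, and you also correctly locate the crux of the matter: an interchange of the limits in $n$ and $k$, requiring a tail estimate on $\rho_n^{(c)}$ uniform in $n$.

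The gap is that uniform tail estimate, and specifically the input you cite for it. You claim that uniform control of $\int_\R\log(z-s)\rho_n(\dv s)$ is ``available from the narrow convergence $\rho_n\to\rho$ together with the log-integrability at the limit.'' It is not: $\log(z-\cdot)$ is unbounded, narrow convergence controls only integrals of bounded continuous test functions, and $\rho_n\in\mathcal V$ gives finiteness of $\int\log(1+s^2)\rho_n(\dv s)$ for each $n$ separately with no uniformity. Take $\rho_n=(1-\epsilon_n)\delta_0+\epsilon_n\delta_{a_n}$ with $\epsilon_n\to0$ and $a_n=e^{1/\epsilon_n}$. Then $\rho_n\to\delta_0$ narrowly and every $\rho_n\in\mathcal V$, yet $\int_\R\log(z-s)\rho_n(\dv s)\to\log z+1\neq\log z$, so the Markov--Krein transforms do not converge; moreover $\rho_n^{(c)}$ is the law of $a_nB_n$ with $B_n\sim\mathrm{Beta}(c\epsilon_n,c(1-\epsilon_n))$, and for every $s>0$ one has $\mathbb{P}(a_nB_n>s)=\mathbb{P}(B_n>se^{-1/\epsilon_n})\to1-e^{-c}>0$, so a fixed fraction of mass escapes to infinity and $\rho_n^{(c)}$ is not even tight. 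The uniform tail bound you require is therefore false under the stated hypotheses; the gap is not a matter of sharpening the Markov--Krein computation but of replacing bare weak convergence with a uniform-integrability-type hypothesis, such as the extended $\eta$-Wasserstein control of Assumption~\ref{a1}, which via Lemma~\ref{lem_unif} does give the needed uniform control of $g_{\rho_n}$.
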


\begin{proof}
    The fact that both $\rho_n$ and $\rho$ satisfy~\eqref{cond_nu} guarantees the existence of the Dirichlet processes $D_{c\rho_n}$ and $D_{c\rho}$. We take an arbitrary bounded continuous function $\varphi:\R\to\R$. Then by the property of the push-forward measure, for any $\rho_n$  satisfying~\eqref{cond_nu}, we have
    \begin{align*}
        \int_\R\varphi(x)\,\rho^{(c)}_n(\dv x)&=\int_\R\varphi(x)\,(\mathrm{ev}^{\mathrm{id}}\circ D_{c\rho_n})_\sharp\mathbb P(\dv x)\\
        &=\int_{\mathcal P(\R)}\varphi(\mathrm{ev}^{\mathrm{id}}(p))\,(D_{c\rho_n})_\sharp\mathbb P(\dv p)\\
        &=\int_{\Omega}\varphi((\mathrm{ev}^{\mathrm{id}}\circ D_{c\rho_n})(\omega))\,\mathbb P(\dv \omega).
    \end{align*}
    By Theorem~\ref{equiv} we know that when $\rho_n$ satisfies~\eqref{cond_nu}, $(\mathrm{ev}^{\mathrm{id}}\circ D_{c\rho_n})(\omega)<+\infty$ almost surely, and thus $\varphi(\mathrm{ev}^{\mathrm{id}}(p))$ is almost surely a bounded function in $p$ (the continuity of this function can be verified easily). Then by~\cite[Thm 1.1]{DelloSchiavo19} one has
    \begin{equation}
        \lim_{N\to+\infty}\int_{\mathcal P(\R)}\varphi(\mathrm{ev}^{\mathrm{id}}(p))\,(D_{c\rho_n})_\sharp\mathbb P(\dv p)=\int_{\mathcal P(\R)}\varphi(\mathrm{ev}^{\mathrm{id}}(p))\,(D_{c\rho})_\sharp\mathbb P(\dv p),
    \end{equation}
    which is the weak convergence $\rho^{(c)}_n\mapsto\rho^{(c)}$.    
\end{proof}

\subsection{Further properties of the Markov-Krein correspondence}
We denote the sets
\begin{align}
    &\mathcal V\,\, :=\left\{\rho:\int_\R\log(1+x^2)\rho(\dv x)<+\infty\right\},\label{2.27}\\
    &\mathcal V_c:=\left\{\rho^{(c)}:\rho^{(c)}\text{ is the distribution for }\int x\,D_{c\rho}(\dv x);\, \rho\in\mathcal V\right\}.
\end{align}
The following theorem shows a bijective relation between $\mathcal V$ and $\mathcal V_c$. The proof can be found in~\cite[Theorem 2]{LR04}.
\begin{theorem}[\cite{LR04}]
    For a given $c>0$, the Markov-Krein correspondence defines a bijection between $\mathcal V$ and $\mathcal V_c$.
\end{theorem}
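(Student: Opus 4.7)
The plan is to read the Markov--Krein correspondence as a map $\Phi_c\colon\rho\mapsto \rho^{(c)}$ from $\mathcal{V}$ into $\mathcal{V}_c$ and verify bijectivity. Theorem~\ref{Exs_unq} already hands us that $\Phi_c$ is well-defined on $\mathcal{V}$ and that $\Phi_c(\rho)$ coincides with the law of the random mean $\int_\R x\,D_{c\rho}(\dv x)$. Since $\mathcal{V}_c$ is by definition the image of $\mathcal{V}$ under exactly this assignment, surjectivity is immediate. All the work sits in injectivity.

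To prove injectivity, I would assume $\rho_1,\rho_2\in\mathcal{V}$ satisfy $\Phi_c(\rho_1)=\Phi_c(\rho_2)$ and set
\[
F_j(z) := \int_\R \log(z-s)\,\rho_j(\dv s),\quad j=1,2,
\]
using the principal branch of $\log$, so that each $F_j$ is holomorphic on the upper half-plane $\{\im z>0\}$. The two sides of~\eqref{MKR} then read $\exp(-cF_1(z))=\exp(-cF_2(z))$, so $c(F_1(z)-F_2(z))\in 2\pi i\,\Z$ for every such $z$, and continuity forces $F_1-F_2$ to be a constant.

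To pin this constant down to zero, I would examine the limit along $z=iy$ with $y\to+\infty$, writing
\[
F_j(iy)=\log(iy)+\int_\R\log\!\left(1-\frac{s}{iy}\right)\rho_j(\dv s);
\]
for $y\ge 1$ the residual integrand is dominated in absolute value by a constant multiple of $\log(1+s^2)+1$, which is $\rho_j$-integrable precisely because $\rho_j\in\mathcal{V}$. Dominated convergence therefore gives $F_j(iy)-\log(iy)\to 0$ as $y\to+\infty$, so the constant $F_1-F_2$ must vanish and $F_1\equiv F_2$ on $\{\im z>0\}$. Differentiating in $z$ yields equality of Cauchy transforms
\[
\int_\R\frac{\rho_1(\dv s)}{z-s}=\int_\R\frac{\rho_2(\dv s)}{z-s},\quad \im z>0,
\]
and the Stieltjes inversion theorem then gives $\rho_1=\rho_2$.

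The only mildly delicate step is the branch-cut bookkeeping together with the $y\to+\infty$ asymptotic that excludes a nonzero element of $(2\pi i/c)\Z$; this is exactly where the log-moment defining condition of $\mathcal{V}$ is used, and no further ingredients beyond Theorem~\ref{Exs_unq} and standard Stieltjes inversion are needed.
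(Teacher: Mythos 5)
Your argument is mathematically sound and self-contained. Since the paper defers entirely to~\cite[Theorem~2]{LR04} for this statement and supplies no proof of its own, there is nothing to compare line-by-line; I can only confirm that your construction is correct.

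Surjectivity is indeed automatic because $\mathcal V_c$ is defined as the image of $\rho\mapsto\rho^{(c)}$, and Theorem~\ref{Exs_unq} certifies that this map is well-defined. Your injectivity argument is the crux, and every step works: holomorphy of $F_j(z)=\int\log(z-s)\,\rho_j(\dv s)$ on $\{\im z>0\}$ follows from differentiation under the integral (or Morera plus Fubini, both legal because the log-moment condition gives locally uniform absolute integrability); $\exp(-cF_1)=\exp(-cF_2)$ forces $c(F_1-F_2)\in 2\pi i\Z$ pointwise, and continuity on the connected upper half-plane then makes $F_1-F_2$ a constant; the normalisation $F_j(iy)-\log(iy)\to 0$ uses exactly the domination $\bigl|\log(1-s/(iy))\bigr|\le\tfrac12\log(1+s^2)+\pi$ for $y\ge 1$, which is integrable precisely because $\rho_j\in\mathcal V$; and Stieltjes inversion applied to $F_1'=F_2'$ recovers $\rho_1=\rho_2$. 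It is worth noting that your argument mirrors, at the ``$\rho$ end,'' what the paper does at the ``$\rho^{(c)}$ end'' in its proof sketch of Theorem~\ref{Exs_unq}, where Fourier inversion supplies uniqueness of $\rho^{(c)}$ given $\rho$; you instead invoke Stieltjes inversion to get uniqueness of $\rho$ given $\rho^{(c)}$. The two are complementary integral-transform uniqueness arguments, and together they exhaust the bijectivity claim.

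One small stylistic remark: you could shortcut the dominated-convergence asymptotic by observing that $F_1-F_2$ is a bounded holomorphic function on the upper half-plane (being a constant), so it suffices to show its real part tends to $0$ along any sequence, which follows from $\re F_j(iy)=\tfrac12\int\log(y^2+s^2)\,\rho_j(\dv s)$ and the fact that $\re F_1(iy)-\re F_2(iy)$ cannot be a non-zero constant if both $\rho_j$ are probability measures in $\mathcal V$; but your version is equally clean and arguably more transparent.
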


We will show a new result stating that the tail behaviour of $\rho^{(c)}$ is governed by the one for $\rho$.

\begin{theorem}\label{thm_tail}
    Let $\rho$ be a probability measure and $\rho^{(c)}$ be the random mean of the Dirichlet process $D_{c\rho}$, with some $c>0$. Let $\varphi$ be a non-negative, even and convex function, i.e.
    \begin{equation}\label{K_ineq}
        \varphi\left(\sum_{j=1}^n\sigma_jx_j\right)\le \sum_{j=1}^n\sigma_j\varphi\left(x_j\right),
    \end{equation}
    and for all $\sigma_1,\ldots,\sigma_n\in[0,1], \sigma_1+\ldots+\sigma_n=1$ and $x_1,\ldots,x_n\in\R$. We furthermore assume that there is an even and positive function $h(x)$ which is increasing when $x$ is large enough, such that
    \begin{equation}
        \int h(x)\rho(\dv x)<+\infty,\quad h(x)\ge \varphi(x).
    \end{equation}
    Then one has
    \begin{equation}
        \int_\R\varphi(x)\rho^{(c)}(\dv x)\le \frac{1}{c}\int_\R\varphi(x)\rho(\dv x)<+\infty.
    \end{equation}
\end{theorem}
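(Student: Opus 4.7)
The plan is to prove the bound via Sethuraman's stick-breaking construction of the Dirichlet process together with convexity of $\varphi$, followed by an iterative refinement that targets the $1/c$ prefactor. By Sethuraman's theorem, $D_{c\rho} \overset{d}{=} \sum_{i \ge 1} w_i \delta_{X_i}$, where $X_i \overset{\text{iid}}{\sim} \rho$ are independent of the $\mathrm{GEM}(c)$-distributed weights $w_i = V_i \prod_{j<i}(1-V_j)$ with $V_j \overset{\text{iid}}{\sim} \mathrm{Beta}(1,c)$. The random mean then equals $M := \int x\,D_{c\rho}(\dv x) = \sum_{i} w_i X_i$ almost surely. Integrability of $\varphi$ against $\rho$ (via $\varphi \le h$ and $\int h\,\rho(\dv x) < \infty$) together with Theorem~\ref{equiv} guarantees $M$ is finite a.s.\ and $\E[\varphi(M)]$ is finite.

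First I apply Jensen pointwise using convexity of $\varphi$: on the event $\{M < \infty\}$ one has $\varphi(M) \le \sum_i w_i \varphi(X_i)$. Taking expectation over the independent $X_i$ yields the baseline bound $\int \varphi\,\rho^{(c)}(\dv x) \le \bigl(\sum_i \E[w_i]\bigr) \int \varphi\,\rho(\dv x) = \int \varphi\,\rho(\dv x)$. To upgrade the constant to $1/c$, I exploit the distributional self-similarity $M \overset{d}{=} V X_1 + (1-V) M'$ with $V \sim \mathrm{Beta}(1,c)$ and $M'$ an independent copy of $M$. Applied iteratively to the recursion $\E[\varphi(M)] \le \E[V]\int \varphi\,\rho(\dv x) + \E[1-V]\,\E[\varphi(M')]$, the Beta-moment factor $\E[V] = 1/(c+1)$ rearranges self-consistently to constant $1$; to obtain the sharper prefactor, I would use evenness of $\varphi$ at each recursion step to pair sign-flipped contributions $\pm V X_1 \pm (1-V) M'$ and extract a strict gain, so that the telescoping recursion produces the improved coefficient $1/c$ in place of $1/(c+1)$ in the limit.

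The main obstacle I anticipate is precisely the extraction of the constant $1/c$ rather than the naive $1$. The variance identity $\int x^2\,\rho^{(c)}(\dv x) = \tfrac{1}{c+1}\int x^2\,\rho(\dv x) + \tfrac{c}{c+1}\bigl(\int x\,\rho(\dv x)\bigr)^2$ illustrates the delicacy: a direct Jensen argument does not see the mean-shift contribution, and any proof of the $1/c$ bound must absorb it cleanly. The structural tools available are exactly the hypotheses of the theorem, namely the evenness of $\varphi$ and the existence of the even, increasing-tail dominator $h$, and the technical heart of the proof will be to use these to compensate the cross terms that appear when iterating the self-similarity, without ever having to separately assume $\rho$ is centered or symmetric.
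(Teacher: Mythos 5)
Your first step — Sethuraman's stick-breaking representation, pointwise Jensen $\varphi(M)\le\sum_i w_i\varphi(X_i)$, and then Tonelli plus independence to get $\E[\varphi(M)]\le\bigl(\sum_i\E[w_i]\bigr)\int\varphi\,\dv\rho=\int\varphi\,\dv\rho$ — is correct and is essentially the same mechanism as the paper's proof (which substitutes the convexity inequality into the finite-dimensional Dirichlet representation and then takes the mean of the Dirichlet weights, followed by a discrete-approximation step that your stick-breaking route conveniently bypasses). The genuine gap is everything after that: the proposed "upgrade" from the constant $1$ to the constant $1/c$ is not an argument but a hope, and the sign-flip pairing idea cannot be made to work, because $X_1$ and $M'$ in the self-similarity $M\overset{d}{=}VX_1+(1-V)M'$ carry no symmetry for evenness of $\varphi$ to exploit. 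Your own recursion already tells you this: it closes self-consistently at constant $1$, not $1/c$.

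In fact you should not try to reach $1/c$, because the inequality with constant $1/c$ is false for $c>1$. Take $\rho=\delta_1$, $\varphi(x)=x^2$, $h(x)=1+x^2$: then $D_{c\delta_1}=\delta_1$ almost surely, so $\rho^{(c)}=\delta_1$ and $\int\varphi\,\dv\rho^{(c)}=1>\tfrac1c=\tfrac1c\int\varphi\,\dv\rho$. The source of the discrepancy is the paper's assertion that the mean of the $j$-th coordinate under $\mathrm{Dir}_{(ca_1,\ldots,ca_n)}$ is $a_j/c$; the correct value is $ca_j/(c\sum_k a_k)=a_j$, which yields exactly the constant $1$ that your Jensen argument produces (and which is consistent with the variance identity you quote). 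So the correct conclusion is $\int_\R\varphi(x)\rho^{(c)}(\dv x)\le\int_\R\varphi(x)\rho(\dv x)<+\infty$, which is what your first paragraph proves and which is all that is needed downstream (Corollary~\ref{cor_tail} is used only for finiteness of exponential and power moments of $\rho^{(c)}$). Write up the Jensen step carefully — in particular, justify the countable Jensen inequality via the random probability measure $\sum_i w_i\delta_{X_i}$ having finite first moment almost surely (Theorem~\ref{equiv}), and justify the interchange $\E\bigl[\sum_i w_i\varphi(X_i)\bigr]=\sum_i\E[w_i]\,\E[\varphi(X_i)]$ by non-negativity and the independence of the weights from the atoms — and drop the second half of the proposal.
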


\begin{proof}[Proof of Theorem~\ref{thm_tail}]
    We firstly show this result for the case when $\rho$ is finitely supported, i.e. it is given by $\rho=\sum_{j=1}^Na_j\delta_{x_j}$ for $x_1,\ldots,x_n\in\R$ and $a_1,\ldots,a_n\in[0,1]$ with $a_1+\ldots+a_n=1$. Substituting~\eqref{K_ineq} into~\eqref{Dir2}, we see that 
    \begin{equation}\label{2.38}
        \int\varphi(x)\rho^{(c)}(\dv x))=\int\varphi\left(\sum_{j=1}^n\sigma_jx_j\right)\mathrm{Dir}_{ca_1,\ldots,ca_n}(\dv\sigma)\le\sum_{j=1}^n K\varphi(x_j)\int\sigma_j\mathrm{Dir}_{ca_1,\ldots,ca_n}(\dv\sigma).
    \end{equation}
    One computes that the mean of a Dirichlet distribution $\mathrm{Dir}_{ca_1,\ldots,ca_n}$ is given by $a_j/c$. Thus the previous inequality can be rewritten as
    \begin{equation}\label{discrete_case}
        \int\varphi(x)\rho^{(c)}(\dv x)\le \frac{1}{c}\sum_{j=1}^n a_j\varphi(x_j)=\frac{1}{c}\int\varphi(x)\rho(\dv x)
    \end{equation}
    which holds for any discrete measure $\rho$.

    Now for a general probability measure $\rho$, we approximate it using a sequence of discrete measures $\rho_n$ with the index set $J_n$
    \begin{equation}
        \rho_n(\dv x)=\sum_{j\in J_n}a_j\delta_{x_j}(\dv x),
    \end{equation}
    such that $x_1=-n, x_{|J_n|}=n$ and that $x_j$ is given uniquely by
    \begin{equation}
        \rho((-n,x_j])=\frac{j-1}{|J_n|-1}\rho((-n,n]).
    \end{equation}
    We take $a_j=\rho((x_{j-1},x_j])$ for $j=1,\ldots,|J_n|-1$. Then $\rho_n$ is a discrete approximation of $\rho \ind_{[-n,n]}$. It can be checked that
    \begin{equation}
    \begin{split}
        \left|\int_{(-n,n]}\varphi(x)(\rho_n(\dv x)-\rho(\dv x))\right|&\le\frac{\sup_{|x|\le n}|h(x)|}{|J_n|-1}\rho((-n,n])\\
        &\le \frac{h(n)}{|J_n|-1}.
    \end{split}
    \end{equation}
    Using a Markov-inequality one has
    \begin{equation}
    \begin{split}
        \left|\int_{\R\setminus(-n,n]}\varphi(x)(\rho_n(\dv x)-\rho(\dv x))\right|&=\left|\int_{\R\setminus(-n,n]}\varphi(x)\rho(\dv x)\right|
        \\
        &\le\frac{\int_\R h(x)\rho(\dv x)}{h(n)}.
    \end{split}
    \end{equation}
    Thus taking $J_n=n\lceil h(n)\rceil$ and combining the previous two inequalities, we obtain
    \begin{equation}\label{sup_bound}
        \sup_n\left|\int_{\R}\varphi(x)(\rho_n(\dv x)-\rho(\dv x))\right|\le\sup_n\left|\frac{h(n)}{n\lceil h(n)\rceil-1}+\frac{\int_\R h(x)\rho(\dv x)}{h(n)}\right|\to 0
    \end{equation}
    as $N\to+\infty$.

    By Corollary~\ref{weak_conv0} we see that $\rho^{(c)}$ is the weak limit of $\rho^{(c)}_n$ (which is the distribution of the random mean of the Dirichlet process with parameter measure $c\rho_n$). Therefore, one has
    \begin{equation}
    \begin{split}
        \int_{\R}\varphi(x)\rho^{(c)}(\dv x)&=\lim_{R\to\infty}\lim_{N\to+\infty}\int_{-R}^R\varphi(x)\rho^{(c)}_n(\dv x)
        \\&\le \lim_{R\to\infty}\limsup_{N\to+\infty}\int_{-R}^R\varphi(x)\rho^{(c)}_n(\dv x)\\&\le\lim_{R\to\infty}\limsup_{N\to+\infty}\int_\R\varphi(x)\rho^{(c)}_n(\dv x)=\limsup_{N\to+\infty}\int_\R\varphi(x)\rho^{(c)}_n(\dv x).
    \end{split}
    \end{equation}
    By~\eqref{discrete_case}, one has that
    \begin{equation}
        \limsup_{N\to+\infty}\int\varphi(x)\rho^{(c)}_n(\dv x)\le \frac{1}{c}\lim_{N\to+\infty}\int\varphi(x)\rho_n(\dv x)= \frac{1}{c}\int\varphi(x)\rho(\dv x),
    \end{equation}
    where the equation is due to~\eqref{sup_bound} and Lebesgue's dominated convergence theorem. Combining the above inequalities gives the final result, where the existence of the left side can be verified simply by the monotone convergence theorem.
\end{proof}

\begin{corollary}\label{cor_tail}
    We have the following bounds: for the exponential function
    \begin{equation}\label{tail_exp}
        \E_{\rho^{(c)}}[\exp(a|X|)]\le \frac{1}{c}\E_{\rho}[\exp(a|X|)],\quad a>0;
    \end{equation}
    and for the power function we have
    \begin{equation}\label{tail_polyn}
        \E_{\rho^{(c)}}[|X|^\alpha]\le\frac{1}{c}\E_{\rho}[|X|^\alpha],\quad \alpha\ge 1.
    \end{equation}
    In the above inequalities the right sides are not necessarily finite.
\end{corollary}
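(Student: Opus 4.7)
The plan is to deduce the corollary from Theorem~\ref{thm_tail} by checking that $\varphi(x)=e^{a|x|}$ (with $a>0$) and $\varphi(x)=|x|^{\alpha}$ (with $\alpha\ge 1$) both satisfy the hypotheses of that theorem, and then specializing the inequality. First I would dispose of the trivial case: if $\E_\rho[\varphi(X)]=+\infty$ on the right-hand side, there is nothing to prove, so in each case we may assume $\int\varphi\,\dv\rho<+\infty$.

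Next I would verify the three structural hypotheses on $\varphi$. Both $e^{a|x|}$ and $|x|^{\alpha}$ (with $\alpha\ge 1$) are non-negative and even. For convexity, $|x|^{\alpha}$ with $\alpha\ge 1$ is the composition of the convex increasing function $t\mapsto t^{\alpha}$ on $[0,\infty)$ with the convex function $|x|$, and $e^{a|x|}$ is similarly a composition of convex increasing $e^{at}$ with $|x|$; hence both are convex on $\R$, so the Jensen-type inequality~\eqref{K_ineq} is satisfied with $\sigma_1,\ldots,\sigma_n$ forming a convex combination.

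For the majorant $h$ required by Theorem~\ref{thm_tail}, I would simply set $h(x):=\varphi(x)+1$. Then $h$ is even, strictly positive, increasing for $|x|$ large (in fact, for all $|x|\ge 0$ in both cases), and dominates $\varphi$. Under the standing assumption $\int\varphi\,\dv\rho<+\infty$ we have $\int h\,\dv\rho=\int\varphi\,\dv\rho+1<+\infty$, so the hypothesis of Theorem~\ref{thm_tail} is met.

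Applying Theorem~\ref{thm_tail} then yields
\begin{equation*}
\int_\R\varphi(x)\,\rho^{(c)}(\dv x)\le \frac{1}{c}\int_\R\varphi(x)\,\rho(\dv x),
\end{equation*}
which are exactly~\eqref{tail_exp} and~\eqref{tail_polyn} in the two cases. I do not anticipate any real obstacle: the only subtlety is the mild mismatch between ``non-negative'' $\varphi$ and ``positive'' $h$, which is handled by the $+1$ shift, and the case where the right-hand side is infinite, which is trivial.
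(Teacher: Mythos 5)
Your proposal is correct and follows essentially the same route as the paper: reduce to the finite case, observe that $e^{a|x|}$ and $|x|^{\alpha}$ ($\alpha\ge 1$) are non-negative, even, and convex, and then invoke Theorem~\ref{thm_tail}. You are slightly more careful than the paper in explicitly supplying the majorant $h$ (the shift $h=\varphi+1$ is needed for $|x|^\alpha$ since that $\varphi$ vanishes at $0$ and Theorem~\ref{thm_tail} requires $h$ strictly positive), but this is the detail the paper leaves implicit rather than a different approach.
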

\begin{proof}
    We only need to show the inequalities where the right sides are finite. Since both $e^{a|x|}$ and $|x|^{\alpha}$ are non-negative even convex functions, the inequality~\ref{K_ineq} holds and thus by Theorem~\ref{thm_tail} we have the result.
\end{proof}

Another interesting property for $\rho^{(c)}$ is the relation between the supports. The original proof of the following theorem is given in~\cite{Majumdar92}. 
\begin{theorem}[~\cite{Majumdar92}]\label{conv_hull}
    For a given $c>0$ and $\rho\in\mathcal V$, the support of $\rho^{(c)}$ is a convex hull of the support of $\rho$.
\end{theorem}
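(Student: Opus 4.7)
The plan is to prove the two inclusions $\mathrm{supp}\,\rho^{(c)} \subseteq \mathrm{conv}(\mathrm{supp}\,\rho)$ and the reverse separately. For the easy inclusion, observe that $\E [D_{c\rho}(A)] = \rho(A)$ for every Borel set $A \subseteq \R$, which follows from the first-moment formula for the Dirichlet distribution applied to the partition $\{A, \R \setminus A\}$. Taking $A = \R \setminus \mathrm{supp}\,\rho$ and using non-negativity yields $D_{c\rho}(\R \setminus \mathrm{supp}\,\rho) = 0$ almost surely, so the Dirichlet process is a.s.\ concentrated on $\mathrm{supp}\,\rho$ and its random mean therefore lies in the closed convex hull $\mathrm{conv}(\mathrm{supp}\,\rho)$ almost surely.

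For the reverse inclusion, $\mathrm{conv}(\mathrm{supp}\,\rho)$ is a closed interval in $\R$ and $\mathrm{supp}\,\rho^{(c)}$ is closed, so it suffices to show that every interior point $y$ belongs to $\mathrm{supp}\,\rho^{(c)}$. Fix such $y$ and $\varepsilon > 0$. Since $y$ is interior, pick $a, b \in \mathrm{supp}\,\rho$ and $\lambda \in (0, 1)$ with $y = \lambda a + (1-\lambda) b$, choose neighborhoods $U_a = (a - \delta, a + \delta)$ and $U_b = (b - \delta, b + \delta)$ of positive $\rho$-mass, and set $R = \R \setminus (U_a \cup U_b)$. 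The joint law of $(D_{c\rho}(U_a), D_{c\rho}(U_b), D_{c\rho}(R))$ is Dirichlet with all-positive parameters, so it has positive density on a neighborhood of $(\lambda, 1-\lambda-\eta, \eta)$ for any small $\eta > 0$. Conditional on these masses, the three restrictions of $D_{c\rho}$ to the cells (renormalized) are independent Dirichlet processes on $U_a, U_b, R$ with base measures $\rho|_{U_a}, \rho|_{U_b}, \rho|_{R}$, so the random mean decomposes as
\[
M = D_{c\rho}(U_a)\, m_a + D_{c\rho}(U_b)\, m_b + D_{c\rho}(R)\, m_r,
\]
where the first part gives $m_a \in [a-\delta, a+\delta]$ and $m_b \in [b-\delta, b+\delta]$ almost surely, while $m_r$ is a.s.\ finite by Theorem~\ref{equiv}. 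Intersecting with $\{|m_r| \leq T\}$ for $T$ large and then taking $\eta, \delta$ small, $M$ lies within $\varepsilon$ of $y$ on a positive-probability event, which gives $\rho^{(c)}((y-\varepsilon, y+\varepsilon)) > 0$ and hence $y \in \mathrm{supp}\,\rho^{(c)}$.

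The main technical obstacle is the conditional independence of the cell restrictions of $D_{c\rho}$, a standard property of Dirichlet processes derivable from Ferguson's definition by partition refinement but not explicitly recorded in the excerpt. An alternative way to circumvent it is to use the Sethuraman stick-breaking representation $D_{c\rho} = \sum_k W_k \delta_{Y_k}$ with $Y_k$ i.i.d.\ $\rho$ and weights built from $V_k \sim \mathrm{Beta}(1, c)$: conditioning on the positive-probability event $\{Y_1 \in U_a,\ Y_2 \in U_b,\ V_1 \approx \lambda,\ V_2 \approx 1\}$ forces the tail weight $(1-V_1)(1-V_2)$ to be small, and the identity $M = W_1 Y_1 + W_2 Y_2 + (1-V_1)(1-V_2)\, M'$, with $M'$ an independent copy of the random mean, finishes the estimate once one further restricts to $\{|M'| \leq T\}$.
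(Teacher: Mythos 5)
The paper does not prove Theorem~\ref{conv_hull}; it cites Majumdar (1992), and the only in-source attempt is a commented-out sketch that handles just the easy inclusion $\mathrm{supp}\,\rho^{(c)}\subseteq\mathrm{conv}(\mathrm{supp}\,\rho)$ via discrete approximation and Corollary~\ref{weak_conv0}. Your proposal is therefore more ambitious than anything the paper commits to: you prove both inclusions, and the argument is essentially correct.

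Your easy direction is fine: $\E[D_{c\rho}(A)]=\rho(A)$ (a one-line consequence of the Dirichlet marginals) gives $D_{c\rho}$ a.s.\ concentrated on $\mathrm{supp}\,\rho$, so the random mean lands in the closed convex hull. It is worth a sentence noting that in $\R$ the convex hull of a closed set is already closed (its endpoints $\inf$, $\sup$ are attained when finite), so there is no gap between ``closed convex hull'' and ``convex hull.'' The reverse inclusion is the genuinely harder direction, and you are right that its crux is the conditional independence of cell restrictions of the Dirichlet process (equivalently, the self-similarity/tail-free property). That fact is true but is not stated anywhere in the paper, so as written the first version of your argument has a real but standard gap. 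Your second route via Sethuraman's stick-breaking representation is the cleaner fix: the identity $M=V_1Y_1+(1-V_1)V_2Y_2+(1-V_1)(1-V_2)M'$ with $M'\stackrel{d}{=}M$ independent of $(V_1,V_2,Y_1,Y_2)$ follows directly by shifting indices, and then conditioning on $\{Y_1\in U_a,\ Y_2\in U_b,\ V_1\approx\lambda,\ V_2\approx 1,\ |M'|\le T\}$ (a positive-probability event since the five factors are independent and $M'$ is a.s.\ finite by Theorem~\ref{equiv}) delivers $|M-y|<\varepsilon$. Two small loose ends you should tidy if this is written out in full: (i) the degenerate case $\mathrm{supp}\,\rho=\{a\}$ has empty interior, but then $\rho^{(c)}=\delta_a$ trivially; (ii) in the first version, when $\rho(R)=0$ the third Dirichlet parameter vanishes and the third cell carries no mass a.s., so the decomposition collapses to two cells --- the conclusion still holds, but the ``positive density on the 3-simplex'' claim needs the caveat. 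Overall: a correct and more complete argument than the reference the paper leans on, with the stick-breaking route being the one I would keep.
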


\subsection{Fourier transform of \texorpdfstring{$\rho^{(c)}$}{TEXT}}\label{sec:Fourier}

Fourier transform of $\rho^{(c)}$ has been extensively studied in~\cite{LR04,Yamamoto84,DelloSchiavo19}. We will show a contour integral representation for the Fourier transform of $\rho^{(c)}$, which will be useful in the proofs of our main theorems.

We say the contour $\mathcal C$ is the Hankel loop, if it starts and ends at $-\infty$, enclosing $(-\infty,x]$ oriented counter-clockwise. A typical choice is that, for some $M>x$ it goes horizontally from $-\infty-iM$ to $M-iM$, then vertically to $M+iM$, and then horizontally to $-\infty+iM$; see Fig.~\ref{contour1}. 
\begin{figure}[t]
\begin{center}
\begin{tikzpicture}[
    my contour/.style={thick, 
        postaction={decorate}, 
        decoration={markings, 
            mark=at position 0.15 with {\arrow{Stealth}},
            mark=at position 0.55 with {\arrow{Stealth}},
            mark=at position 0.88 with {\arrow{Stealth}}
        }
    },
    axis/.style={-{Stealth[scale=1.5]}, semithick},
]

\def\infOffset{3.5}
\def\BigM{1}

\draw[my contour] 
    (-\infOffset, -\BigM) node[below left] {\scriptsize $-\infty - iM$} 
    -- 
    (\BigM, -\BigM) coordinate (bottom) 
    -- 
    (\BigM, \BigM) coordinate (top) 
    -- 
    (-\infOffset, \BigM) node[above left] {\scriptsize $-\infty + iM$};

\node[below right] at (bottom) {\scriptsize $M - iM$};
\node[above right] at (top) {\scriptsize $M + iM$};

\draw[axis] (-\infOffset-0.5, 0) -- (\BigM+1.5, 0) node[below] {$\mathrm{Re}(z)$};
\draw[axis] (0, -\BigM-1) -- (0, \BigM+1) node[left] {$\mathrm{Im}(z)$};
\node at (0,0) [below left] {\scriptsize $0$};

\end{tikzpicture}
\caption{A standard choice for the Hankel loop $\mathcal C$.}
\label{contour1}
\end{center}
\end{figure}
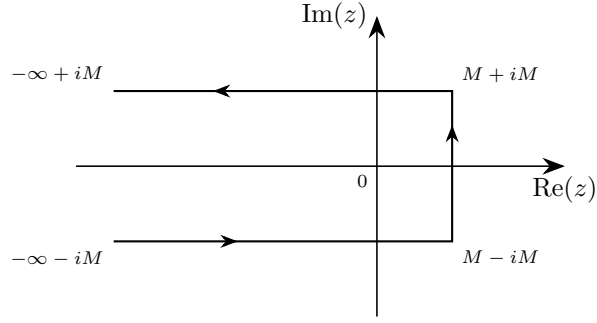

Also, we denote a sign function for complex numbers as
\begin{equation}\label{sgn}
    \sgn(u):=
    \begin{cases}
        1,&\re{u}>0;\\
        -1,&\re{u}<0;\\
        -i,&\re u=0\text{ and }\im u>0;\\
        i,&\re u=0\text{ and }\im u<0.
    \end{cases}
\end{equation}

\begin{theorem}\label{Fourier}
    Let $u\in\C\setminus\{0\}$. Assuming that 
    \begin{equation}
        \int_\R e^{x\re u }\, \rho(\dv x)<+\infty,
    \end{equation}
    then
    \begin{equation}\label{fourier1}
        \int_\R \rho^{(c)}(\dv x)e^{x u}=\lim_{M\to+\infty}\frac{\Gamma(c)}{u^{c-1}}\frac{1}{2\pi i}\int_{\sgn(u)\mathcal C}\dv z\,\exp\left(u z-c\int_\R\log(z-s)\rho^{[M]}(\dv s)\right)
    \end{equation}
    and the left side converges. The contour $\mathcal C$ is the Hankel loop enclosing the branch cut $(-\infty,M]$. If $\rho$ is compactly supported in $[-M,M]$, one has instead
    \begin{equation}\label{fourier2}
        \int_\R \rho^{(c)}(\dv x)e^{x u}=\frac{\Gamma(c)}{u^{c-1}}\frac{1}{2\pi i}\int_{\sgn(u)\mathcal C}\dv z\,\exp\left(u z-c\int_\R\log(z-s)\rho(\dv s)\right).
    \end{equation}
    
   In the special case $u=it$ and $t\ne 0$, the contour $\sgn(u)\mathcal C$ in~\eqref{fourier1} and~\eqref{fourier2} can be deformed to $\R-i\delta \sgn(t)$ for any $\delta>0$. If furthermore $c>1$, Equation~\eqref{fourier2} also holds for non-compactly supported $\rho$.
\end{theorem}

Theorem~\ref{Fourier} might be formally derived from a slightly different form in~\cite[Theorem 7]{LR04} using a change of variable, but in the arguments of~\cite{{LR04}} the convergence of integrals have not been taken care of. We will provide an detailed proof in Appendix~\ref{A:Fourier}, taking care of all issues of convergence.

\subsection{Mellin transform of \texorpdfstring{$\rho^{(c)}$}{rho(c)}}

Similar to the Fourier transform, the Mellin transform of $\rho^{(c)}$ can also be rewritten into a contour integral formula, which has not been discussed in the literature.

\begin{lemma}\label{lem_Beta}
    For all $c>0$, $u\in\C\setminus \{0\}$ with $\re u>0$, $x>0$ and $t\in\R\setminus\{0\}$, we have
    \begin{equation}\label{beta}
    x^{u}=\frac{\Gamma(c)\Gamma(u+1)}{\Gamma(u+c)}\frac{1}{2\pi i}\int_{\mathcal C}\dv z\,\frac{e^{uz}}{(1-xe^{-z})^{c}}
    \end{equation}
    where the contour $\mathcal C$ is the Hankel loop enclosing $(-\infty, \log x]$.
\end{lemma}

\begin{proof}
    Let us recall the contour integral representation for Beta function
    \begin{equation}
        \frac{1}{bB(a,b)}=\frac{1}{2\pi i}\int_{\gamma+i\R}\dv t\, t^{-a}(1-t)^{-1-b},\quad 0<\gamma<1, \re (a+b)>0.
    \end{equation}
    Taking the change of variables $1-t\mapsto xe^{-z}$, $a\mapsto c$, $b\mapsto u$ gives
    \begin{equation}
        \frac{1}{uB(c,u)}=\frac{1}{2\pi i}\int_{\mathcal C'}\dv z\, (1-xe^{-z})^{-c}x^{-u}e^{uz},\quad \re (u+c)>0.
    \end{equation}
    One can check that the new contour $\mathcal C'$ is given by the set of points $z$ satisfying
    \begin{equation}
        \re (1-xe^{-z})=\gamma\quad\Longleftrightarrow \quad e^{-\re z}\cos \im z=\frac{1-\gamma}{x},
    \end{equation}
    and thus it is a loop enclosing the branch cut at $(-\infty,\log x]$ counter-clockwise with asymptotes $\im z=\pm \frac{\pi}{2}$. We also observe that the integrand is analytic except for the branch cut. At $-\infty$, we can check that for large $R$ and any $M>0$,
    \begin{equation}
    \begin{split}
        \left|\int_{-R-Mi}^{-R+Mi}\dv z \frac{e^{uz}}{(1-x e^{-z})^c}\right|\le \int_{-M}^M \dv y\frac{e^{-R\re u-y\im u}}{|1-xe^{R}|^c}\le \frac{e^{-R\re u}e^{M|\im u|}}{|1-xe^{R}|^c}\to 0
    \end{split}
    \end{equation}
    as $R\to\infty$. Therefore the contour can be deformed to the Hankel loop, for example the one given in Lemma~\ref{lem_Gamma} (provided $M>\log x$).
\end{proof}

\begin{theorem}\label{Mellin}
    Let $y\in\C$ and $\re u>0$. Assuming that 
    \begin{equation}
        \int_{0}^{+\infty} x^{\re u}\, \rho(\dv x)<+\infty,
    \end{equation}
    then
    \begin{equation}\label{mellin1}
        \int_\R \rho^{(c)}(\dv x)\,x^{u}=\lim_{M\to+\infty}\frac{\Gamma(c)\Gamma(u+1)}{\Gamma(u+c)}\frac{1}{2\pi i}\int_{\mathcal C}\dv z\,\exp\left(u z-c\int_\R\log(1-se^{-z})\rho^{[M]}(\dv s)\right)
    \end{equation}
    and the left side converges. If $\rho$ is compactly supported, one has instead
    \begin{equation}\label{mellin2}
        \int_\R \rho^{(c)}(\dv x)\,x^{y}=\frac{\Gamma(c)\Gamma(u+1)}{\Gamma(u+c)}\frac{1}{2\pi i}\int_{\mathcal C}\dv z\,\exp\left(u z-c\int_\R\log(1-se^{-z})\rho(\dv s)\right).
    \end{equation}
    Here $\mathcal C$ is the Hankel loop enclosing $(-\infty,\log M]$.
\end{theorem}

\begin{proof}
    The proof is similar to the one for Theorem~\ref{Fourier}. By Theorem~\ref{thm_tail}, the left side of~\eqref{mellin1} is finite. Let $\rho$ be compactly supported, with support contained in $[0,M]$ for some $M>0$. By Theorem~\ref{conv_hull}, the support of $\rho^{(c)}$ is also contained in $[0,M]$. By substituting~\eqref{beta} into the Mellin transform of $\rho^{(c)}$, one has
    \begin{equation}
        \int_{0}^M \rho^{(c)}(\dv x)\,x^{y}=\int_{0}^M\rho^{(c)}(\dv x)\frac{\Gamma(c)\Gamma(y+1)}{\Gamma(y+c)}\frac{1}{2\pi i}\int_{\mathcal C}\dv z\,\frac{e^{uz}}{(1-xe^{-z})^{c}}
    \end{equation}
    for some Hankel loop $\mathcal C$ enclosing $(-\infty,\log M]$. By the exponential decay of $e^{uz}$ along $\mathcal C$ (because $\re u>0$) and that $|1-xe^{-z}|$ is bounded from below, the Fubini theorem can be applied to interchange the two integrals. One also recalls that by~\eqref{MKR'}
    \begin{equation}
        \int_{0}^M\frac{\rho^{(c)}(\dv x)}{(1-xe^{-z})^c}=\exp\left(-c\int_\R\log(1-se^{-z})\rho(\dv s)\right),
    \end{equation}
    and thus we obtain~\eqref{mellin2} for compactly supported $\rho$. Equation~\eqref{mellin1} can be again obtained by Lemma~\ref{weak_conv} and the Lebesgue dominated convergence theorem.

    For general $\rho$, we have the weak convergence $(\rho^{[M]})^{(c)}\to\rho^{(c)}$ by Lemma~\ref{weak_conv}, which  finalises the proof.
\end{proof}

\section{Proof of the main theorems}\label{s3}

\subsection{Integral representations for symmetric functions}\label{s3.1}

To prove the main theorems, the first step is to investigate integral representations of the symmetric functions. It is found in~\cite{Cuenca18} that in the case $|x|\ge 1$, a particular ratio of Macdonald polynomials can be rewritten as a contour integral
\begin{equation}\label{HO_integral}
\begin{split}
    \frac{P_\lambda(x,t,t^2,\ldots,t^{N-1};q,t)}{P_\lambda(1,t,t^2,\ldots,t^{N-1};q,t)}=&\frac{\log q}{q-1}\frac{(x^{-1}t^N;q)_\infty}{(x^{-1}q;q)_\infty}\frac{\Gamma_q(\theta N)}{2\pi i}\times
    \\
    &\int_{\mathcal C}\dv z\, x^z\prod_{j=1}^N\frac{\Gamma_q(z-\lambda_j+\theta j-\theta))}{\Gamma_q(z-\lambda_j-\theta j)}.
\end{split}
\end{equation}
A consequence of this, shown in~\cite[Thm 5.1 and Thm 5.2]{Cuenca21}, is an integral representation of the multivariate Bessel function. The following theorem extends the two theorems in~\cite{Cuenca21} to the case $\re u=0$.

\begin{proposition}[for $\re u\ne 0$ see Thm 5.1 and Thm 5.2 in~\cite{Cuenca21}]\label{Bessel}
    Let $\theta>0$, $N\in\N$, $a_1\le\ldots\le a_N$ in $\R$ and $u\in\C\setminus\{0\}$. Then
    \begin{equation}
        \mathcal B_{\vec a}(u;N,\theta)=\frac{\Gamma(\theta N)}{u^{N\theta-1}}\frac{1}{2\pi i}\int_{\sgn(u)\mathcal C} \dv z\,e^{uz}\prod_{j=1}^N(z-a_j)^{-\theta},
    \end{equation}
    where $\sgn(u)$ is given by~\eqref{sgn}
\end{proposition}

\begin{proof}
    The case $\re u\ne 0$ has been proved in~\cite{Cuenca21}. It remains to check the case $\re u=0$. In the proof we take the following contour $\mathcal C$: it starts at $-\infty-Mi$ going horizontally to $M-Mi$, then vertically to $M+Mi$ and then horizontally to $-\infty+Mi$, where $\{a_1,\ldots,a_M\}\subset[-M,M]$.

    We let $u=\varepsilon+it$ where $\varepsilon>0$ and $t>0$. One aims at deforming the contour $\mathcal C$ to $-i\mathcal C$. Taking $R>0$ large enough, we let $\mathcal C_1$ be the arc centered at $M+iM$ connecting $M+iM+iR$ and $M+iM-R$, and $\mathcal C_2$ be the arc centered at $-M+iM$ connecting $-M+iM+iR$ and $-M+iM-R$. Then for any small $\delta>0$, there exists a large enough $R$ such that
    \begin{equation}
    \begin{split}
        \left|\int_{\mathcal C_1}\dv z\,e^{uz}\prod_{j=1}^N(z-a_j)^{-\theta}\right|
        &\le \int_{\pi/2}^\pi R\dv \theta\frac{e^{(\varepsilon-t)M}e^{R(\varepsilon \cos\theta-t\sin\theta)}}{|R-3M|^{\theta N}}\\
        &\le \frac{\pi}{2}\frac{e^{(\varepsilon-t)M}}{|R-3M|^{\theta N}}\left(\frac{e^{\varepsilon R}}{t R}+\delta\right)\to 0
    \end{split}
    \end{equation}
    holds as $R\to\infty$, where the final bound is obtained from applying Lemma~\ref{lem_laplace}. A similar inequality holds for the contour $\mathcal C_2$ as
    \begin{equation}
        \left|\int_{\mathcal C_2}\dv z\,e^{uz}\prod_{j=1}^N(z-a_j)^{-\theta}\right|\le \frac{\pi}{2}\frac{e^{-(\varepsilon-t)M}}{|R-3M|^{\theta N}}\left(\frac{e^{\varepsilon R}}{t R}+\delta\right)\to 0.
    \end{equation}
    Thus by Cauchy's theorem,
    \begin{equation}
    \begin{split}
        \int_{\mathcal C} \dv z\,e^{(\varepsilon+it)z}\prod_{j=1}^N(z-a_j)^{-\theta}&=\left(\int_{-i\mathcal C}+\lim_{R\to\infty}\int_{\mathcal C_1}+\lim_{R\to\infty}\int_{\mathcal C_2}\right)\dv z\,e^{(\varepsilon+it)z}\prod_{j=1}^N(z-a_j)^{-\theta}\\
        &=\int_{-i\mathcal C}\dv z\,e^{(\varepsilon+it)z}\prod_{j=1}^N(z-a_j)^{-\theta}.
    \end{split}
    \end{equation}
    It remains to check that
    \begin{equation}
        \int_{-i\mathcal C}\dv z\,(e^{\varepsilon z}-1)e^{itz}\prod_{j=1}^N(z-a_j)^{-\theta}\to 0
    \end{equation}
    as $\varepsilon\to 0$. This can be verified by using the dominated convergence theorem, since $e^{itz}$ decays exponentially fast on the contour $-i\mathcal C$. Thus, by continuity of multivariate Bessel function we have
    \begin{equation}
        \mathcal B_{\vec a}(it;N,\theta)=\lim_{\varepsilon\to 0^+}\mathcal B_{\vec a}(\varepsilon+it;N,\theta)=\frac{\Gamma(\theta N)}{(it)^{N\theta-1}}\frac{1}{2\pi i}\int_{-i\mathcal C}\dv z\,e^{itz}\prod_{j=1}^N(z-a_j)^{-\theta}.
    \end{equation}

    For the case $y=\varepsilon+it$ where $\varepsilon>0$ and $t<0$, we deform the contour to $i\mathcal C$. Taking $R>0$ large enough, we let $\mathcal C_1'$ be the arc centered at $M-iM$ connecting $M-iM-iR$ and $M-iM-R$, and $\mathcal C_2'$ be the arc centered at $-M+iM$ connecting $-M+iM-iR$ and $-M+iM-R$. The two integrals on $\mathcal C_1$ and $\mathcal C_2$ is small for large $R$, and so similar analysis applies to have
    \begin{equation}
    \begin{split}
        \mathcal B_{\vec a}(it;N,\theta)=\lim_{\varepsilon\to 0^+}\frac{\Gamma(\theta N)}{(it)^{N\theta-1}}\frac{1}{2\pi i}\int_{\mathcal C} \dv z\,e^{(\varepsilon+it)z}\prod_{j=1}^N(z-a_j)^{-\theta}
        &=\frac{\Gamma(\theta N)}{(it)^{N\theta-1}}\frac{1}{2\pi i}\int_{i\mathcal C}\dv z\,e^{itz}\prod_{j=1}^N(z-a_j)^{-\theta}
    \end{split}
    \end{equation}
    which finishes the proof.
\end{proof}

Let us turn to the integral expression for the Heckman-Opdam hypergeometric function. The following integral expression has been shown in~\cite{MP20}. We attach the proof here for completeness, taking care of convergence of the integrals.

\begin{proposition}[see Section 3.1 in~\cite{MP20}]\label{Heckman-Opdam}
    Let $\theta>0$, $N\in\N$, $0<a_1\le\ldots\le a_N$ in $\R_+$ and $u\in\C$ with $\re y>0$. Then
    \begin{equation}
        \mathcal F_{\log\vec a}(u;N,\theta)=\frac{\Gamma(\theta N)\Gamma(u+1)}{\Gamma(u+\theta N)}\frac{1}{2\pi i}\int_{\mathcal C} \dv z\,e^{uz}\prod_{j=1}^N(1-a_je^{-z})^{-\theta},
    \end{equation}
    where $\mathcal C$ is the Hankel loop enclosing $(-\infty,a_N]$.
\end{proposition}

\begin{proof}
    By~\eqref{HO_macpoly} and \eqref{HO_integral} one has
    \begin{equation}
    \begin{split}
        \mathcal F_{\log\vec a}(u;N,\theta)=&\lim_{\varepsilon\to 0^+}\frac{-\varepsilon}{e^{-\varepsilon}-1}\frac{(e^{-\varepsilon (u+\theta N)};e^{-\varepsilon})_\infty}{(e^{-\varepsilon (u+1)};e^{\varepsilon})_\infty}\frac{\Gamma_{e^{-\varepsilon}}(\theta N)}{2\pi i}\times\\
        &\int_{-\mathcal C}\dv z\,e^{\varepsilon uz }\prod_{j=1}^N\frac{\Gamma_{e^{-\varepsilon}}(z-(\lfloor \varepsilon^{-1}\log a_j\rfloor-\theta j+\theta))}{\Gamma_{e^{-\varepsilon}}(z-(\lfloor \varepsilon^{-1}\log a_j\rfloor-\theta j))}.
    \end{split}
    \end{equation}
    Here the contour $\mathcal C$ encloses the negative real line. We make a change of variable $\varepsilon z\mapsto z$. Using the expression $\Gamma_q(z)=(1-q)^{1-z}\frac{(q;q)_\infty}{(q^z;q)_\infty}$, we have
    \begin{equation}
        \frac{\Gamma_{e^{-\varepsilon}}(\varepsilon^{-1}z-(\lfloor \varepsilon^{-1}\log a_j\rfloor-\theta j+\theta))}{\Gamma_{e^{-\varepsilon}}(\varepsilon^{-1}z-(\lfloor \varepsilon^{-1}\log a_j\rfloor-\theta j))}=(1-e^{-\varepsilon})^{\theta}\frac{(e^{-(z-\varepsilon\lfloor \varepsilon^{-1}\log a_j\rfloor)}e^{-\varepsilon\theta(1-j)};e^{-\varepsilon})_\infty}{(e^{-(z-\varepsilon\lfloor \varepsilon^{-1}\log a_j\rfloor)}e^{-\varepsilon\theta(-j)};e^{-\varepsilon})_\infty}.
    \end{equation}
    Denoting $x_j(\varepsilon):=z-\varepsilon\lfloor \varepsilon^{-1}\log a_j\rfloor$, the above integral simplifies to
    \begin{equation}
    \begin{split}
        \mathcal F_{\log\vec a}(u;N,\theta)=&\lim_{\varepsilon\to 0^+}\frac{-\varepsilon}{e^{-\varepsilon}-1}(1-e^{-\varepsilon})^{\theta N}\frac{(e^{-\varepsilon (u+\theta N)};e^{-\varepsilon})_\infty}{(e^{-\varepsilon (y+1)};e^{\varepsilon})_\infty}\frac{\Gamma_{e^{-\varepsilon}}(\theta N)}{2\pi i}\times\\
        &\frac{1}{\varepsilon}\int_{\mathcal C}\dv z\,e^{-uz}\prod_{j=1}^N\frac{(e^{-x_j(\varepsilon)}e^{-\varepsilon\theta(1-j)};e^{-\varepsilon})_\infty}{(e^{-x_j(\varepsilon)}e^{-\varepsilon\theta(-j)};e^{-\varepsilon})_\infty}.
    \end{split}
    \end{equation}
    It is immediate that $\Gamma_{e^{-\varepsilon}}(\theta N)\to \Gamma(\theta N)$ as $\varepsilon\to 0^+$. Noting that $(1-q)^{a-b}\frac{(q^a;q)_\infty}{(q^b;q)_\infty}\to\frac{\Gamma(b)}{\Gamma(a)}$ in the $q\to 1^-$ limit, we have
    \begin{equation}\label{3.20}
        \lim_{\varepsilon\to 0^+}(1-e^{-\varepsilon})^{\theta N-1}\frac{(e^{-\varepsilon (u+\theta N)};e^{-\varepsilon})_\infty}{(e^{-\varepsilon (u+1)};e^{\varepsilon})_\infty}=\frac{\Gamma(u+1)}{\Gamma(u+\theta N)}.
    \end{equation}
    Also, noting the limits $\frac{(wq^a;q)_\infty}{(wq^b;q)_\infty}\to (1-w)^{b-a}$ as $q\to 1^-$ and $x_j(\varepsilon)\to z-\log a_j$ as $\varepsilon\to0^+$, we have
    \begin{equation}\label{3.21}
        \lim_{\varepsilon\to 0^+}\frac{(e^{-x_j(\varepsilon)}e^{-\varepsilon\theta(1-j)};e^{-\varepsilon})_\infty}{(e^{-x_j(\varepsilon)}e^{-\varepsilon\theta(-j)};e^{-\varepsilon})_\infty}=(1-a_je^{-z})^{-\theta}.
    \end{equation}
    Consequently, on the left side of the previous equation, the ratio of q-Pochhammer symbols are absolutely bounded (for a small enough $\varepsilon$). Thus the integral is absolutely bounded, because of its exponential decay along the contour. Assembling~\eqref{3.20},~\eqref{3.21} gives the result.
\end{proof}

Combining Propositions~\ref{Bessel}, \ref{Heckman-Opdam}, \ref{Fourier}, and~\ref{Mellin}, we obtain the following corollary.

\begin{corollary}\label{cor}
    Let $\rho_N=\frac{1}{N}\sum_{j=1}^N\delta_{a_j}$, and let $\rho_N^{(c_N)}$ be the mean of the Dirichlet process with the discrete parameter measure 
    $c_N\rho_N$ for some $c_N=\theta N,\theta>0$. Then
    \begin{equation}
        \mathcal B_{\vec a}(u;N,\theta)=\int_\R e^{ux}\rho_N^{(c_N)}(\dv x).
    \end{equation}
    If furthermore, $a_1,\ldots,a_N\in\R_+$, then
    \begin{equation}
        \mathcal F_{\log\vec a}(u;N,\theta)=\int_{0}^{+\infty }x^{u}\rho_N^{(c_N)}(\dv x).
    \end{equation}
    In other words, the multivariate Bessel function and the Heckman-Opdam hypergeometric function exactly equal to the Fourier and Mellin transforms of the measure $\rho_N^{(c_N)}$.
\end{corollary}

With the previous corollary, the main theorems are to show that in the classical regime, the measure $\rho_N^{(c_N)}$ converges weakly to $\rho$; in the high-temperature regime, $\rho_N^{(c_N)}$ converges weakly to $\rho^{(c)}$ instead.

\subsection{Approximating measures}\label{s3.2}
    Before proving the main theorem, we establish several lemmas on measure approximation and their $g$-functions.
    
    \begin{lemma}\label{lem_unif}
    Let us denote
    \begin{equation}\label{r_N}
        g_{\rho_N-\rho}(z):=\int_\R\log(z-s)(\rho_N(\dv s)-\rho(\dv s)).
    \end{equation}
    Let $z$ be outside a neighbourhood $U_\delta\subset\C$ of the supports of $\rho_N$ and $\rho$ such that 
    \begin{equation}
    \inf\{|z-s|:z\not\in U_\delta,s\in\mathrm{supp} (\rho_N)\cup\mathrm{supp}(\rho)\}\ge\delta>0.
    \end{equation}
    With Assumption~\ref{a1}, then for any $0<\eta<1$ there exists $C_{\eta,\delta}>0$ (given by~\eqref{C_eta}) such that
    \begin{equation}\label{unif_approx}
        |g_{\rho_N-\rho}(z)|\leq C_{\eta,\delta}\, d_{\mathrm W_\eta}(\rho_N,\rho).
    \end{equation}
    \end{lemma}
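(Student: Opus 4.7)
The plan is to realize the complex-valued test function $s\mapsto \log(z-s)$, split into its real and imaginary parts $\log|z-s|$ and $\arg(z-s)$, as (scaled) admissible functions for the extended $\eta$-Wasserstein distance. I would produce a single constant $C_{\eta,\delta}$, depending only on $\eta$ and $\delta$ but not on $z$, such that both $\log|z-\cdot|/C_{\eta,\delta}$ and $\arg(z-\cdot)/C_{\eta,\delta}$ belong to $\mathcal H_\eta$ when restricted to $\mathrm{supp}(\rho_N)\cup\mathrm{supp}(\rho)$. The desired inequality \eqref{unif_approx} then follows immediately by applying the defining supremum of $d_{W_\eta}$ to each real-valued piece and combining them via $|\log(z-s)|\le |\log|z-s||+|\arg(z-s)|$ and the triangle inequality.

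Verifying membership in $\mathcal H_\eta$ amounts to two separate bounds: a Lipschitz bound for $|s-s'|\le 1$ (where the $\min$ in the definition equals $|s-s'|$) and an $\eta$-Hölder bound for $|s-s'|\ge 1$. The Lipschitz bound is immediate: $\partial_s\log(z-s)=-(z-s)^{-1}$, whose modulus is at most $1/\delta$ by the standing gap assumption, so both the real and imaginary parts are Lipschitz on the supports with constant $1/\delta$. The large-scale bound for the imaginary part is trivial since $\arg$ is bounded by $\pi$, giving $|\arg(z-x)-\arg(z-y)|\le 2\pi\le 2\pi|x-y|^\eta$ whenever $|x-y|\ge 1$. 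The only estimate that requires any genuine work is the large-scale bound for $\log|z-s|$: using the reverse triangle inequality together with $|z-s|\ge\delta$ one obtains
\[
\bigl|\log|z-x|-\log|z-y|\bigr| \le \log\!\Bigl(1+\tfrac{|x-y|}{\delta}\Bigr),
\]
which grows only logarithmically in $|x-y|$ and is therefore dominated by $|x-y|^\eta$ for $|x-y|\ge 1$ with constant $K_{\eta,\delta}:=\sup_{L\ge 1} L^{-\eta}\log(1+L/\delta)<+\infty$.

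Setting
\begin{equation}\label{C_eta}
C_{\eta,\delta} := 2\bigl(\tfrac{1}{\delta}+\pi+K_{\eta,\delta}\bigr)
\end{equation}
absorbs both scales and both components. The conclusion comes from applying the defining supremum of $d_{W_\eta}$ to each of $\log|z-\cdot|/C_{\eta,\delta}$ and $\arg(z-\cdot)/C_{\eta,\delta}$ and summing. Since $C_{\eta,\delta}$ depends only on $\eta$ and $\delta$, the resulting bound is uniform in $z\notin U_\delta$, which is exactly what is claimed. The only substantive step is the logarithmic-versus-polynomial comparison singled out above; everything else is elementary bookkeeping, so I do not anticipate a significant obstacle in the execution.
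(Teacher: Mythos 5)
Your proof is correct and follows the same overall strategy as the paper's: realize the real and imaginary parts of $s\mapsto\log(z-s)$ (up to a scaling constant depending only on $\eta,\delta$) as admissible test functions in $\mathcal H_\eta$, then apply the defining supremum of $d_{W_\eta}$ and sum the two contributions. The constant you produce is of course different from the paper's formula~\eqref{C_eta}, but the lemma only claims existence of \emph{some} $C_{\eta,\delta}$, so that is immaterial.

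There is, however, a substantive difference worth recording. The paper attempts to verify the $\mathcal H_\eta$-bound directly for the complex-valued map $f(s)=\log(z-s)$, via the estimate
\[
|\log(z-x)-\log(z-y)|=\Bigl|\log\Bigl(1+\tfrac{y-x}{z-y}\Bigr)\Bigr|\le\log\Bigl(1+\tfrac{|x-y|}{\delta}\Bigr),
\]
and then observes that the real and imaginary parts inherit the bound. But the inequality $|\log(1+w)|\le\log(1+|w|)$ is false for complex $w$: with $z=i\varepsilon$, $x=-1$, $y=1$, $\delta=1$ (so the gap hypothesis is satisfied) and $\varepsilon\to0^+$, the argument contribution makes $|\log(z-x)-\log(z-y)|\to\pi$ while $\log(1+|x-y|/\delta)=\log 3<\pi$. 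Your decomposition avoids this pitfall: the estimate $\bigl|\log|z-x|-\log|z-y|\bigr|\le\log(1+|x-y|/\delta)$ is genuinely correct (by the reverse triangle inequality in the modulus), while the argument part is simply dominated by its trivial $O(1)$ bound and absorbed into the Hölder regime. So your route is not merely equivalent — it repairs a gap in the paper's verification. (You may also wish to note that the paper's final constant is written with a $\min$ where a $\max$ is needed, a secondary slip your proof does not share.) The only point you pass over in silence is the same one the paper also leaves implicit: the $\mathcal H_\eta$-bound is verified for $x,y$ in the union of the supports, and one needs the observation that $\min\{|x-y|,|x-y|^\eta\}$ is a metric on $\R$ so the restricted test function admits a McShane-type extension to all of $\R$; this can be inserted in one line and does not affect the argument.
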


    \begin{proof}

    One only needs to show that $\log(z-s)$, as functions in $s$, satisfy the inequality for all $x,y\in\R$,
    \begin{equation}\label{log_Was}
        \left|f(x)-f(y)\right|\leq \frac{C_{\eta,\delta}}{2}\min\{|x-y|,|x-y|^\eta\}.
    \end{equation}
    Then both the real and imaginary parts of $\log(z-s)$ satisfies the same inequality. By the triangle inequality one has
    \begin{equation}
    \begin{split}
        |g_{\rho_N-\rho}(z)|&\leq \left|\int_\R\re \log(z-s)(\rho_N(\dv s)-\rho(\dv s))\right|+\left|\int_\R\im \log(z-s)(\rho_N(\dv s)-\rho(\dv s))\right|\\
        &\leq C_{\eta,\delta}\sup_{f\in\mathcal H_\eta}\left|\int f(s)(\rho_N(\dv s)-\rho(\dv s))\right|= C_{\eta,\delta} d_{\mathrm{W}_\eta}(\rho_N,\rho).
    \end{split}
    \end{equation}
    By Assumption~\ref{a1} this is uniformly small independent of $z$.

    To show~\eqref{log_Was} we write
    \begin{equation}
        |\log(z-x)-\log(z-y)|=\left|\log\left(1+\frac{y-x}{z-y}\right)\right|\le \log\left(1+\frac{|x-y|}{\delta}\right).
    \end{equation}
    For any $\eta\in(0,1]$, we take an upper bound
    \begin{equation}\label{M_eta}
        \sup_{u>0}\frac{\log(1+u)}{u^\eta}\le \frac{e^{\eta-1}}{\eta}=:M_\eta.
    \end{equation}
    To verify this, in the case $|u|\le 1$ it is
    \begin{equation}
        \frac{\log(1+u)}{u^\eta}\le u^{1-\eta}\le 1,
    \end{equation}
    and for $|u|>1$ one has
    \begin{equation}
        \frac{\log(1+u)}{u^\eta}\le \frac{\log(eu)}{u^\eta}=\frac{1+\log u}{u^\eta}\le \frac{e^{\eta-1}}{\eta}.
    \end{equation}
    Also since $\frac{e^{\eta-1}}{\eta}\ge 1$, combining the previous two bounds we claim~\eqref{M_eta}. Then one has
    \begin{equation}
        \log\left(1+\frac{|x-y|}{\delta}\right)\le M_\eta \left(\frac{|x-y|}{\delta}\right)^\eta=\frac{M_\eta}{\delta^\eta}|x-y|^\eta.
    \end{equation}
    Thus
    \begin{equation}
        |\log(z-x)-\log(z-y)|\le \min\left\{\frac{M_1}{\delta}|x-y|,\frac{M_\eta}{\delta^\eta}|x-y|^\eta\right\}
    \end{equation}
    for any $\eta\in(0,1)$. Taking 
    \begin{equation}\label{C_eta}
        C_{\eta,\delta}:=2\max\left\{\frac{M_1}{\delta},\frac{M_\eta}{\delta^\eta}\right\}
    \end{equation}
    with $M_\eta$ given in~\eqref{M_eta} we reclaims~\eqref{log_Was}. For future use, we remark that when $\delta=1$, 
    \begin{equation}\label{C_eta1}
        C_{\eta,1}=2\max\left\{1,\frac{e^{\eta-1}}{\eta}\right\}\le \frac{e^{\eta-1}}{\eta}.
    \end{equation}
    \end{proof}

    A direct consequence of Lemma~\ref{lem_unif} is the following inequality.

    \begin{lemma}\label{lem_logmuN}
        With Assumption~\ref{a1} and $|\im z|\ge\delta>0$, we have
        \begin{align}
        &\left|\int_\R\log(z-s)\rho(\dv s)\right|< \log|z|+K_{\delta,\rho},\label{bound_nu}\\
        &\left|\int_\R\log(z-s)\rho_N(\dv s)\right|<\log|z|+K_{\delta,\rho}+|g_{\rho_N-\rho}(z)|\label{bound_nuN}.
        \end{align}
        for some constant $K_{\delta,\rho}$ given by~\eqref{K_delta} independent of $z$.
    \end{lemma}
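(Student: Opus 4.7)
The plan is to derive a pointwise upper bound on $|\log(z-s)|$ that is uniform in $z$ away from $\R$, and then integrate. First I would observe that for $|\im z|\ge\delta$ the principal-branch logarithm $\log(z-s)$ is well-defined with $|z-s|\ge\delta$ and $|\arg(z-s)|\le\pi$, so
$$|\log(z-s)|\le\bigl|\log|z-s|\bigr|+\pi.$$
The imaginary part is already bounded by a constant; the work is all in bounding the real part.

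Next I would estimate $\bigl|\log|z-s|\bigr|$ by combining the upper bound $|z-s|\le|z|+|s|$ with the lower bound $|z-s|\ge\delta$. Writing $\log(|z|+|s|)\le\log|z|+\log(1+|s|/|z|)$ and splitting into the cases $|z|\ge 1$ and $\delta\le|z|<1$, one obtains a pointwise bound of the form
$$|\log(z-s)|\le\log|z|+\log(1+|s|)+C_\delta,$$
for a constant $C_\delta$ depending only on $\delta$ (incorporating $\pi$, $\log(1/\delta)$, and an extra $\log(1/\delta)$ absorbing the worst case $\log|z|\ge\log\delta$ when $|z|<1$).

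Integrating against $\rho$ and using the triangle inequality then gives
$$\left|\int_\R\log(z-s)\rho(\dv s)\right|\le\log|z|+C_\delta+\int_\R\log(1+|s|)\rho(\dv s).$$
The remaining integral is finite because $\rho\in\mathcal V$: the elementary inequality $\log(1+|s|)-1<\log(1+s^2)<2\log(1+|s|)$ (already noted in the remark after Theorem~\ref{equiv}) shows that the logarithmic-tail condition~\eqref{cond_nu} is equivalent to $\int\log(1+|s|)\rho(\dv s)<+\infty$. Defining
$$K_{\delta,\rho}:=C_\delta+\int_\R\log(1+|s|)\rho(\dv s)$$
yields~\eqref{bound_nu}. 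For~\eqref{bound_nuN}, I would simply decompose
$$\int_\R\log(z-s)\rho_N(\dv s)=\int_\R\log(z-s)\rho(\dv s)+g_{\rho_N-\rho}(z)$$
and apply the triangle inequality together with~\eqref{bound_nu}.

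No serious obstacle is anticipated; the only delicate point is the case $\delta\le|z|<1$ where $\log|z|$ may be negative, which is handled cleanly by the lower bound $\log|z|\ge\log\delta$ absorbed into $K_{\delta,\rho}$.
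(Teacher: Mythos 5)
Your proof is correct and follows essentially the same route as the paper's: derive a pointwise bound $|\log(z-s)|\le\log|z|+(\text{log growth in }s)+(\text{$\delta$-dependent constant})$, integrate against $\rho$, and obtain~\eqref{bound_nuN} from~\eqref{bound_nu} plus the triangle inequality applied to the decomposition $\int\log(z-s)\rho_N=\int\log(z-s)\rho+g_{\rho_N-\rho}$. The small cosmetic difference is that the paper bounds $|z-s|\le|z|\sqrt{1+s^2}\sqrt{1+|z|^{-2}}$ via Cauchy--Schwarz (hence $\frac{1}{2}\log(1+s^2)$) where you use $|z-s|\le|z|+|s|$ (hence $\log(1+|s|)$); as you note, the two integrability conditions are interchangeable by the elementary inequality recorded after Theorem~\ref{equiv}. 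Worth flagging: you are actually somewhat more careful than the paper on one point. You explicitly invoke the lower bound $|z-s|\ge\delta$ to control the case $\log|z-s|<0$ and absorb the resulting $\log(1/\delta)$ into $C_\delta$, whereas the paper derives only the one-sided bound $\log|z-s|\le\log|z|+\frac{1}{2}\log(1+s^2)+\frac{1}{2}\log(1+\delta^{-2})$ and then treats it as a bound on $|\log(z-s)|$ without addressing the sign of $\log|z-s|$; your two-sided control is exactly the step needed to get a genuine, $z$-independent constant when $\delta<1$ and $|z-s|$ is close to $\delta$.
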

    
\begin{proof}
    We show the previous two equations by firstly showing an inequality
    \begin{equation}\label{ineq}
        |\log(z-s)|\le \log|z|+\frac{1}{2}\log(1+s^2)+\frac{1}{2}\log(1+\delta^{-2})+\pi
    \end{equation}
    for $|\im z|\ge \delta>0$.

    To see this, we note the following inequalities
    \begin{equation}
        |z-s|\le|z|\left|1-\frac{s}{z}\right|\le |z|\left(1+\frac{|s|}{|z|}\right)\le |z|\sqrt{1+s^2}\sqrt{1+\frac{1}{|z|^2}}
    \end{equation}
    where the last step is due to the Cauchy-Schwartz inequality. Since the logarithm function is increasing, we have
    \begin{equation}
        \log|z-s|\le \log|z|+\frac{1}{2}\log(1+s^2)+\frac{1}{2}\log(1+|\delta|^{-2}).
    \end{equation}
    The argument is simply bounded by $\pi$, and thus we claim~\eqref{ineq}.

    Setting
    \begin{equation}\label{K_delta}
        K_{\delta,\rho}:=\frac{1}{2}\int_\R\log(1+s^2)\rho(\dv s)+\frac{1}{2}\log(1+|\delta|^{-2})+\pi,
    \end{equation}
    one reclaims~\eqref{bound_nu}. The second bound~\eqref{bound_nuN} follows directly from~\eqref{unif_approx} and the triangular inequality.
\end{proof}

We will end this subsection with the following simple inequality, which will be used in the next subsection.

\begin{lemma}
For a given $z\in \C\setminus\R$, the following inequality holds
\begin{equation}\label{upper_bd}
    \left|\frac{\xi\log(z-s)^2}{(z-s)^{\xi+1}}\right|\le \frac{\log|\im z|^2+\pi^2}{|\im z|^{\xi+1}}
\end{equation}
for $\xi\in(0,1)$ and $s\in\R$.
\end{lemma}

\begin{proof}
    Let $z-s=re^{it}$ be its polar form. Then
    \begin{equation}
    \left|\frac{\xi\log(z-s)^2}{(z-s)^{\xi+1}}\right|\le \frac{(\log r)^2+\pi^2}{r^{\xi+1}}.
    \end{equation}
    Since $r^{\xi+1}$ decays faster than $(\log r)^2$, the right side of the previous function decays for $r>0$. Noting that $r=|z-s|\ge|\im z|$ gives the result.
\end{proof}

\subsection{Weak convergence in the classical regime}\label{s3.3}

We will show the following proposition.

\begin{proposition}
    With the notations above, in the regime where $N\to+\infty$ and $\beta_N\ll N^{-1}$, or equivalently, $c_N\to0^+$, we have $\rho_N^{(c_N)}$ converges weakly to $\rho$.
\end{proposition}

\begin{proof}
Recalling the Markov-Krein correspondence between $\rho_N^{(c_N)}$ and $\rho_N$ given by
\begin{equation}
    \int_\R\frac{\rho_N^{(c_N)}(\dv x)}{(z-x)^{c_N}}=\exp\left(-c_N\int_\R\log(z-s)\rho_N(\dv s)\right),\quad \text{for all }z\in\C\setminus\R,
\end{equation}
and recalling the notation $g_{\rho_{N}-\rho}:=\int_\R\log(z-s)(\rho_N(\dv s)-\rho(\dv s))$ given by~\eqref{r_N}, we have
\begin{align}
    \int_\R\frac{\rho_N^{(c_N)}(\dv x)}{(z-x)^{c_N}}&=\exp\left(-c_N\left(\int_\R\log(z-s)\rho(\dv s)+g_{\rho_{N}-\rho}(z)\right)\right).
\end{align}
Now we take a Taylor expansion on both sides around $c_N=0$. On the left hand side one has
\begin{equation}
    \frac{1}{(z-x)^{c_N}}=1-c_N\log(z-s)+\frac{c_N^2}{2}\frac{\log(z-s)^2}{(z-s)^\xi},\quad \text{for some }\xi\in(0,c_N).
\end{equation}
By Theorem~\ref{conv_hull}, the support of $\rho_N^{(c_N)}$is finite, and hence compact. By the dominated convergence theorem, the expansion can be interchanged with the integral, and thus we have
\begin{equation}
    \int_\R\frac{\rho_N^{(c_N)}(\dv x)}{(z-x)^{c_N}}=1-c_N\int_\R\log(z-s)\rho_N^{(c_N)}(\dv s)+\frac{c_N^2}{2}\int_\R\frac{\log(z-s)^2}{(z-s)^\xi}\rho_N^{(c_N)}(\dv s).
\end{equation}
On the right hand side an expansion of the exponential function gives
\begin{equation}
\begin{split}
    &\exp\left(-c_N\left(\int_\R\log(z-s)\rho(\dv s)+g_{\rho_{N}-\rho}(z)\right)\right)
    \\\quad&=1-c_N\left(\int_\R\log(z-s)\rho(\dv s)+g_{\rho_{N}-\rho}(z)\right)+\frac{c_N^2}{2}g_{\rho_{N}}(z)^2\exp\left(-\zeta g_{\rho_{N}}(z)
    \right),\quad \text{for some }\zeta\in(0,c_N).
\end{split}
\end{equation}
A simplification gives
\begin{equation}
    \int_\R\log(z-s)\left(\rho_N^{(c_N)}(\dv s)-\rho(\dv s)\right)=g_{\rho_{N}-\rho}(z)-\frac{c_N}{2}\left(g_{\rho_{N}}(z)^2\exp\left(-\zeta g_{\rho_{N}}(z)
    \right)-\int_\R\frac{\log(z-s)^2}{(z-s)^\xi}\rho_N^{(c_N)}(\dv s)\right).
\end{equation}

Now we take the derivative with respect to $z$ on both sides, and then interchange all integrals with the derivatives (it is not difficult to verify the convergence as all measures involved are probability measures). One obtains
\begin{equation}\label{3.39}
\begin{split}
    \int_\R\frac{\rho_N^{(c_N)}(\dv s)}{z-s}=\int_\R\frac{\rho_N(\dv s)}{z-s}&-\frac{c_N}{2}\bigg((2g_{\rho_{N}}(z)-\zeta)g'_{\rho_{N}}(z)\exp\left(-\zeta g_{\rho_{N}}(z)
    \right)
    \\&+\int_\R\frac{\xi\log(z-s)^2-2}{(z-s)^{\xi+1}}\rho_N^{(c_N)}(\dv s)\bigg).
\end{split}
\end{equation}
Since $\rho_N$ is a probability measure, one obtains the upper bound for $g'_{\rho_N}$ as
\begin{equation}
    |g'_{\rho_N}(z)|=\left|\int_\R\frac{\rho_N(\dv s)}{z-s}\right|\le \sup_{s\in\R}\frac{1}{|z-s|}\le \frac{1}{|\im z|}.
\end{equation}
Using the upper bound~\eqref{upper_bd}, we have
\begin{equation}
    \left|\int_\R\frac{\xi\log(z-s)^2-2}{(z-s)^{\xi+1}}\rho_N^{(c_N)}(\dv s)\right|\le \frac{(\log|\im z|)^2+\pi^2}{|\im z|^{\xi+1}}+\frac{2}{|\im z|^{\xi+1}}<+\infty.
\end{equation}
 Also, $|g_{\rho_N}|$ is bounded by~\eqref{bound_nuN} and ~\eqref{unif_approx}. Thus~\eqref{3.39} becomes
\begin{equation}
    \int_\R\frac{\rho_N^{(c_N)}(\dv s)}{z-s}=\int_\R\frac{\rho_N(\dv s)}{z-s}+o(1)
\end{equation}
in the $N\to+\infty$ (i.e. $c_N\to 0^+$) limit, pointwise for $z\in\C\setminus\R$.

Now we make use of the convergence theorem for Stieltjes transform (see e.g.~\cite[Theorem 2.4.4]{AGZ}): a sequence of probability measure weakly converges to another probability measure, if and only if their Stieltjes transforms converge pointwise for all $z\in\C\setminus\R$. Noting that
\begin{equation}
    \int_\R\frac{\rho_N^{(c_N)}(\dv s)}{z-s}=\int_\R\frac{\rho_N(\dv s)}{z-s}+o(1)=\int_\R\frac{\rho(\dv s)}{z-s}+o(1),\quad\text{ for all }z\in\C\setminus\R,
\end{equation}
the weak convergence $\rho_N^{(c_N)}\to\rho$ can then be concluded.
\end{proof}

\subsection{Weak convergence in the high temperature regime}\label{s3.4}

In this subsection we will show tha following proposition.

\begin{proposition}\label{prop_ht}
    With then notations above, in the regime where $\beta_N\ll N^{-1}$ and $N\to+\infty$. Equivalently, $c_N=c+o(1)$ for some $c>0$, one has $\rho_N^{(c_N)}$ converges weakly to $\rho^{(c)}$ for all $c>1$.
\end{proposition}

 By the L\'evy continuity theorem, it is equivalent to show the point-wise convergence between their characteristic functions, i.e.
\begin{equation}
    \int_\R \rho_N^{(c_N)}(\dv x)e^{itx}\to\int_\R \rho^{(c)}(\dv x)e^{itx},\quad\text{ for all }t\in\R.
\end{equation}
Since $\rho_N$ is compactly supported, by Theorem~\ref{conv_hull}, $\rho_N^{(c_N)}$ is also compactly supported. By Proposition~\ref{Fourier} we have its characteristic function being expressed by a contour integral
\begin{align}
    &\int_\R \rho_N^{(c_N)}(x)e^{it x}=\frac{\Gamma(c_N)}{(it)^{c_N-1}}\frac{1}{2\pi i}\int_{\R-i\delta\mathrm{sgn}(t)}\dv z\,\exp\left(itz-c_Ng_{\rho_N}(z)\right).    
\end{align}
We only discuss the case where $c>1$. By Proposition~\ref{Fourier} we have
\begin{align}
    &\int_\R \rho^{(c)}(x)e^{it x}=\frac{\Gamma(c)}{(it)^{c-1}}\frac{1}{2\pi i}\int_{\R-i\delta\mathrm{sgn}(t)}\dv z\,\exp\left(itz-cg_{\rho}(z)\right).
\end{align}
One aims at showing those two quantities are close as $N\to+\infty$. For convenience, let us denote
\begin{align}
    F(t)&:=\int_{\R-i\delta\mathrm{sgn}(t)}\dv z\,\exp\left(itz-cg_{\rho}(z)\right),\\
    F_N(t)&:=\int_{\R-i\delta\mathrm{sgn}(t)}\dv z\,\exp\left(itz-c_Ng_{\rho_N}(z)\right).
\end{align}
By the triangle inequality one has
\begin{align}
    \left|\int_\R \rho_N^{(c_N)}(x)e^{it x}-\int_\R \rho^{(c)}(x)e^{it x}\right|&\le\underbrace{\left|\frac{\Gamma(c_N)}{(it)^{c_N-1}}\frac{1}{2\pi i}\right|}_{I_1}\cdot\underbrace{\left|F_N(t)-F(t)\right|}_{I_2}+\underbrace{\left|\frac{\Gamma(c_N)}{(it)^{c_N-1}}-\frac{\Gamma(c)}{(it)^{c-1}}\right|\cdot\left|\frac{1}{2\pi i}F(t)\right|}_{I_3}.
\end{align}
To give an upper bound of this quantity, it requires to give bounds to each terms $I_1,I_2$ and $I_3$.
\\

One firstly states the following lemma for bounds for $I_1$ and $I_3$.
\begin{lemma}[Bounds for $I_1$ and $I_3$]\label{lem_I1I3}
    If $|c_N-c|<\varepsilon$ for some $\varepsilon>0$, then
    \begin{equation}
        I_1=\left|\frac{\Gamma(c_N)}{(it)^{c_N-1}}\frac{1}{2\pi i}\right|\le \frac{\max\{\Gamma(c-\varepsilon),\Gamma(c+\varepsilon)\}|t|^{|c+\varepsilon-1|}}{2\pi}
    \end{equation} 
    and
    \begin{equation}
        I_3=\left|\frac{\Gamma(c_N)}{(it)^{c_N-1}}-\frac{\Gamma(c)}{(it)^{c-1}}\right|\cdot\left|\frac{1}{2\pi i}F(t)\right|\le|c_N-c|B,
    \end{equation}
    where
    \begin{equation}
        B:=\left(\sup_{|w-c|\le\epsilon}|\psi(w)|+|\log|t||+\frac{\pi}{2}\right)e^{\varepsilon\left(\sup_{|w-c|\le\epsilon}|\psi(w)|+|\log|t||+\frac{\pi}{2}\right)}.
    \end{equation}
\end{lemma}
\begin{proof}
For $I_1$, the bound can be observed by notating that the Gamma function is convex on $\R^+$ and so $\Gamma(c_n)\le \max\{\Gamma(c-\varepsilon),\Gamma(c+\varepsilon)\}$, and that
\begin{equation}
    |t^{-(c_N-1)}|\le \exp|(c_N-1)\log |t||\le \exp|(c+\varepsilon-1)\log |t||.
\end{equation}

For $I_3$, we have
\begin{align}
    \left|\frac{\Gamma(c_N)}{(it)^{c_N-1}}-\frac{\Gamma(c)}{(it)^{c-1}}\right|\cdot\left|\frac{1}{2\pi i}F(t)\right|
    &\le\left|\frac{\Gamma(c_N)(it)^{c-1}}{\Gamma(c)(it)^{c_N-1}}-1\right|\cdot \left|\int_\R \rho^{(c)}(x)e^{it x}\right|\\
    &\le\left|\frac{\Gamma(c_N)}{\Gamma(c)}(it)^{c-c_N}-1\right|.
\end{align}
It remains to give an upper bound of the right side. Taking
\begin{equation}
    z:=\log \Gamma(c_N)-\log\Gamma(c)+(c-c_N)\log(it),
\end{equation}
we can use the notion of digamma function $\psi(u)=(\log\Gamma(u))'$ to obtain
\begin{equation}
    z=(c-c_N)\left(\int_0^1\psi(c+u(c-c_N))\dv u-\log(it)\right).
\end{equation}
Since $\psi$ is analytic near $c>1$, the quantity $\sup_{|w-c|\le\epsilon}|\psi(w)|<+\infty$ is bounded and one has
\begin{equation}
    |z|\le |c_N-c|\left(\sup_{|w-c|\le\epsilon}|\psi(w)|+|\log|t||+\frac{\pi}{2}\right).
\end{equation}
By using the inequalities $|e^z-1|\le |z|e^{|z|}$ and 
\begin{equation}
    \left|\frac{1}{2\pi i}F(t)\right|=\left|\int_\R e^{itx}\rho^{(c)}(\dv x)\right|\le 1
\end{equation}
we have
\begin{equation}
    \left|\frac{\Gamma(c_N)}{\Gamma(c)}(it)^{c-c_N}-1\right|\le |c_N-c|\left(\sup_{|w-c|\le\epsilon}|\psi(w)|+|\log|t||+\frac{\pi}{2}\right)e^{\varepsilon\left(\sup_{|w-c|\le\epsilon}|\psi(w)|+|\log|t||+\frac{\pi}{2}\right)}
\end{equation}
which finishes the proof.
\end{proof}

It remains to show $I_2$, the difference between $F$ and $F_N$, is small. For this purpose, using the identity
\begin{align}
    c_Ng_{\rho_N}(z)&=cg_{\rho}(z)+c_Ng_{\rho_N-\rho}(z)+(c_N-c)g_{\rho}(z)
\end{align}
we have
\begin{align}\label{F_N_ineq}
    \left|F_N(t)-F(t)\right|&=\left|\int_{\R-i\delta\mathrm{sgn}(t)}\dv z\,\underbrace{\exp\left(itz-cg_{\rho}(z)\right)}_{I_4}\underbrace{\left(\exp\left(-c_Ng_{\rho_N-\rho}(z)-(c_N-c)g_{\rho}(z)\right)-1\right)}_{I_5}\right|.
\end{align}
The rest of this subsection is to show the above quantity goes to $0$ for all $t$. For this purpose, we present some lemmas giving explicit bounds for some terms in~\eqref{F_N_ineq}.

\begin{lemma}[Bound for $I_5$]\label{lem_14}
    With the above notations, for any $\varepsilon>0$, one has
    \begin{equation}
    \left|\exp\left(-c_Ng_{\rho_N-\rho}(z)-(c_N-c)g_{\rho}(z)\right)-1\right|\leq \left(\frac{b_N}{\delta^{\varepsilon}}+\frac{|c_N-c|}{2e\varepsilon}\right)e^{b_N}(x^2+\delta^2)^{|c_N-c|/2+\varepsilon}
    \end{equation}
    with $z=x-i\delta\sgn(t)$, $x\in\R$, $\delta>0$ and $t\in\R\setminus\{0\}$, where the constant $b_N$ is given by~\eqref{b_N}.
\end{lemma}
\begin{proof}
    Using the inequality
    \begin{equation}
        |e^z-1|\leq |z|e^{|z|},\quad z\in\C,
    \end{equation}
    one has
    \begin{equation}\label{3.58}
    \begin{split}
    &\left|\exp\left(-c_Ng_{\rho_N-\rho}(z)-(c_N-c)g_{\rho}(z)\right)-1\right|\\
    &\quad\le(|c_Ng_{\rho_N-\rho}(z)|+|(c_N-c)g_{\rho}(z)|)\,\exp(|c_Ng_{\rho_N-\rho}(z)|+|(c_N-c)g_{\rho}(z)|).
    \end{split}
    \end{equation}
    By Lemma~\ref{lem_unif} and Lemma~\ref{lem_logmuN} we have
    \begin{equation}
        |g_{\rho_N-\rho}(z)|\le C_{\eta,\delta} \dv_{W_\eta}(\rho_N,\rho),\quad |g_\rho(z)|\le \log|z|+K_{\delta,\rho}.
    \end{equation}
    Substituting these two inequalities into~\eqref{3.58} we have
    \begin{equation}
    \begin{split}
    &\left|\exp\left(-c_Ng_{\rho_N-\rho}(z)-(c_N-c)g_{\rho}(z)\right)-1\right|\\
    &\quad\leq \left(c_N C_{\eta,\delta} \dv_{W_\eta}(\rho_N,\rho)+|c_N-c|(\log |z|+K_{\delta,\rho})\right)e^{c_N C_{\eta,\delta} \dv_{W_\eta}(\rho_N,\rho)+|c_N-c|K_{\delta,\rho}}\,|z|^{|c_N-c|}.
    \end{split}
    \end{equation}   
    Denote
    \begin{equation}\label{b_N}
        b_N:=c_N C_{\eta,\delta} \dv_{W_\eta}(\rho_N,\rho)+|c_N-c|K_{\delta,\rho}.
    \end{equation}
    Note that $b_N$ is independent of $z$. With $z=x-i\delta\sgn(t)$, we have $|z|=(x^2+\delta^2)^{1/2}$ and so
    \begin{equation}
    \left|\exp\left(-c_Ng_{\rho_N-\rho}(z)-(c_N-c)g_{\rho}(z)\right)-1\right|\leq \left(b_N+\frac{|c_N-c|}{2}\log(x^2+\delta^2)\right) e^{b_N}(x^2+\delta^2)^{|c_N-c|/2}.
    \end{equation}
    To simplify the previous inequality, one takes the inequalities
    \begin{equation}
        b_Ne^{b_N}(x^2+\delta^2)^{|c_N-c|/2}\le \frac{b_Ne^{b_N}}{\delta^{\varepsilon}}(x^2+\delta^2)^{|c_N-c|/2+\varepsilon},
    \end{equation}
    \begin{equation}
        \frac{|c_N-c|}{2}e^{b_N} \log(x^2+\delta^2) \le\frac{|c_N-c|}{2e\varepsilon}e^{b_N}(x^2+\delta^2)^{\varepsilon},
    \end{equation}
    which lead to the final result.
\end{proof}

\begin{proposition}[Bound for $I_4$]\label{lem_15}
Fix $\delta>0$ and $c>0$. Then for every $M>0$ there exist a constant
$A$ such that for all $x\in\mathbb R$
\begin{equation}
\exp\left(-\frac{c}{2}\int_{\mathbb R}\log((x-s)^{2}+\delta^{2})\,\rho(\mathrm{d}s)\right)
\le e^{-\frac{cA}{2}}(x^{2}+\delta^{2})^{-\tfrac{c\rho([-M,M])}{2}}.
\end{equation}
where
\begin{equation}
    A:=\rho([-M,M])\log\min\left\{\frac{1}{4},\frac{\delta^2}{\delta^2+4M^2}\right\} +\rho(\R\setminus[-M,M])\log\delta^{2}.
\end{equation}
\end{proposition}

\begin{proof}
Let us decompose
\begin{equation}
J(x):=\int_{\mathbb R}\log((x-s)^{2}+\delta^{2})\,\rho(\mathrm{d}s)
= J_{\mathrm{in}}(x)+J_{\mathrm{out}}(x),
\end{equation}
where
\begin{equation}
J_{\mathrm{in}}(x):=\int_{|s|\le M}\log((x-s)^{2}+\delta^{2})\,\rho(\mathrm{d}s), 
\qquad
J_{\mathrm{out}}(x):=\int_{|s|>M}\log((x-s)^{2}+\delta^{2})\,\rho(\mathrm{d}s).
\end{equation}

Step 1 is to show a lower bound for $J_{\mathrm{in}}(x)$. For each $|s|\le M$ and every $x\in\mathbb R$, we will check that
\begin{equation}\label{3.65}
(x-s)^{2}+\delta^{2}\ge\min\left\{\frac{1}{4},\frac{\delta^2}{\delta^2+4M^2}\right\}(x^2+\delta^2).
\end{equation}
Let us show~\eqref{3.65} first. In the case $|x|<2M$, one has
\begin{equation}\label{ineq1}
    (x-s)^{2}+\delta^{2}\ge\delta^2\ge \frac{\delta^2}{\delta^2+4M^2}(x^2+\delta^2).
\end{equation}
In the case $|x|\ge 2M$, one has $|x|-M\ge \frac{x}{2}$ and so a triangualar inequality gives
\begin{equation}
    (x-s)^2\ge ||x|-M|^2\ge \frac{x^2}{4}.
\end{equation}
This implies
\begin{equation}\label{ineq2}
    (x-s)^{2}+\delta^{2}\ge\delta^{2}+\frac{x^2}{4}\ge\frac{1}{4}(\delta^2+x^2).
\end{equation}
Combining the two inequalities~\eqref{ineq1} and~\eqref{ineq2} we find the constant $\min\left\{\frac{1}{4},\frac{\delta^2}{\delta^2+4M^2}\right\}$. Taking logarithms and integrating over $\rho$ for~\eqref{3.65}
gives
\begin{equation}\label{Jin}
    J_{\mathrm{in}}(x)\ge \rho([-M,M])\log(x^2+\delta^2)+\rho([-M,M])\log\min\left\{\frac{1}{4},\frac{\delta^2}{\delta^2+4M^2}\right\}.
\end{equation}

Next we will show a lower bound for $J_{\mathrm{out}}(x)$. Trivially, $(x-s)^{2}+\delta^{2}\ge \delta^{2}$ for all $s,x$, hence
\begin{equation}\label{Jout}
J_{\mathrm{out}}(x)\ge\rho(\R\setminus[-M,M])\log\delta^{2}.
\end{equation}
Combining~\eqref{Jin} and~\eqref{Jout} gives
\begin{equation}
J(x)\ge \rho([-M,M])\log(x^2+\delta^2)+\rho([-M,M])\log\min\left\{\frac{1}{4},\frac{\delta^2}{\delta^2+4M^2}\right\} +\rho(\R\setminus[-M,M])\log\delta^{2}.
\end{equation}
Exponentiating this gives the desired result.
\end{proof}

\begin{lemma}[Bound for $I_2$]\label{lem_I2}
    We take $\delta=1$ and $c>1$. For any $\varepsilon\in(0,\frac{c-1}{5})$, $M>0$ with $\rho([-M,M])\in(\frac{1+5\varepsilon}{c},1]$ and $N>N_0$ (such that $|c_N-c|<\varepsilon$), one has
    \begin{equation}
    \left|F_N(t)-F(t)\right| \leq 2\sqrt{\frac{\pi}{\varepsilon}}e^{\frac{c}{2}\log(1+4M^2)+|t|}\left(b_N+\frac{|c_N-c|}{2e\varepsilon}\right)e^{b_N}
\end{equation}
where
\begin{equation}
    b_N\le \frac{(c+\varepsilon) e^{\eta-1}}{\eta} \dv_{W_\eta}(\rho_N,\rho)+|c_N-c|\left(\frac{1}{2}\int_\R\log(1+s^2)\rho(\dv s)+4\right),\quad \eta\in(0,1).
\end{equation}
\end{lemma}

\begin{proof}

Lemma~\eqref{lem_15} shows that for $z\in\R-i\delta\sgn(t)$
\begin{equation}\label{3.73}
    \left|\exp\left(itz-cg_{\rho}(z)\right)\right|\le e^{-\frac{cA}{2}}(x^2+\delta^2)^{-c\rho([-M,M])/2}e^{|t|\delta}.
\end{equation}
Substituting the inequality in Lemma~\ref{lem_14} and~\eqref{3.73} into~\eqref{F_N_ineq} gives
\begin{equation}
\begin{split}
    \left|F_N(t)-F(t)\right|& \leq e^{-\frac{cA}{2}+|t|\delta}\left(\frac{b_N}{\delta^{\varepsilon}}+\frac{|c_N-c|}{2e\varepsilon}\right)e^{b_N}\int_\R\dv x\,(x^2+\delta^2)^{-c\rho([-M,M])/2+|c_N-c|/2+\varepsilon}.
\end{split}
\end{equation}
As $c>1$, one takes $\varepsilon\in(0,\frac{c-1}{5})$ and $M>0$ with $\rho([-M,M])\in(\frac{1+5\varepsilon}{c},1]$. Then there exists $N_0$ such that for all $N>N_0$ one has  $|c_N-c|<\varepsilon$. Then the inequality
\begin{equation}
    \alpha_N:=c\rho([-M,M])/2-|c_N-c|/2-\varepsilon\in\left(\frac{1}{2}+\varepsilon,\frac{c}{2}\right).
\end{equation}
holds. So the previous integral is absolute integrable and can be computed as
\begin{equation}
\begin{split}
    \int_\R\dv x\,(x^2+\delta^2)^{-\alpha}=\frac{\sqrt{\pi}\Gamma(\alpha_N-\frac{1}{2})}{\Gamma(\alpha_N)}\delta^{1-2\alpha_N}.
\end{split}
\end{equation}
Note the two inequalities simplify the bound
\begin{equation}
    \left|\delta^{1-2\alpha_N}\right|\le 1+\delta^{1-c},\qquad
    \left|\frac{\sqrt{\pi}\Gamma(\alpha_N-\frac{1}{2})}{\Gamma(\alpha_N)}\right|\le\sqrt{\frac{\pi}{\alpha_N-1/2}}\le\sqrt{\frac{\pi}{\varepsilon}}.
\end{equation}
We thus have
\begin{equation}
    \left|F_N(t)-F(t)\right| \leq \sqrt{\frac{\pi}{\varepsilon}}(1+\delta^{1-c})e^{-\frac{cA}{2}+|t|\delta}\left(\frac{b_N}{\delta^{\varepsilon}}+\frac{|c_N-c|}{2e\varepsilon}\right)e^{b_N}.
\end{equation}
To further simplify this bound, we take $\delta=1$ (by the arbitrariness of $\delta$) and $M>\delta$. Then
\begin{equation}
    A\ge -\log(1+4M^2)
\end{equation}
and one thus has
\begin{equation}
    \left|F_N(t)-F(t)\right| \leq 2\sqrt{\frac{\pi}{\varepsilon}}e^{\frac{c}{2}\log(1+4M^2)+|t|}\left(b_N+\frac{|c_N-c|}{2e\varepsilon}\right)e^{b_N},
\end{equation}
where
\begin{equation}
    b_N\le \frac{(c+\varepsilon) e^{\eta-1}}{\eta} \dv_{W_\eta}(\rho_N,\rho)+|c_N-c|\left(\frac{1}{2}\int_\R\log(1+s^2)\rho(\dv s)+4\right),\quad \eta\in(0,1).
\end{equation}
(we use the inequality $\frac{1}{2}\log 2+\pi<4$ only for convenience). 
\end{proof}

Now we can prove Proposition~\ref{prop_ht}.
\begin{proof}[Proof of Proposition~\ref{prop_ht}] 
    Theorem~\ref{Fourier} tells us that $\delta$ can be arbitrary. We take $\delta=1$ and $c>1$. Summarising Lemma~\eqref{lem_I1I3} and Lemma~\eqref{lem_I2} we see that for any $\varepsilon\in(0,\frac{c-1}{5})$, $M_\varepsilon>0$ with $\rho([-M_\varepsilon,M_\varepsilon])\in(\frac{1+5\varepsilon}{c},1]$ and $N>N_0$ one has
    \begin{equation}\label{3.99}
    \begin{split}
        &\left|\int_\R \rho_N^{(c_N)}(x)e^{it x}-\int_\R \rho^{(c)}(x)e^{it x}\right|\le \tilde C\left(\tilde b_N+\frac{|c_N-c|}{2e\varepsilon}\right)e^{\tilde b_N}+|c_N-c|B,
    \end{split}
    \end{equation}
    where
    \begin{equation}
        \tilde C=\frac{\max\{\Gamma(c-\varepsilon),\Gamma(c+\varepsilon)\}|t|^{c+\varepsilon-1}}{2\pi}\cdot 2\sqrt{\frac{\pi}{\varepsilon}}e^{\frac{c}{2}\log(1+4M_\varepsilon^2)+|t|},
    \end{equation}
    \begin{equation}\label{3.101}
        \tilde b_N:=\frac{(c+\varepsilon) e^{\eta-1}}{\eta} \dv_{W_\eta}(\rho_N,\rho)+|c_N-c|\left(\frac{1}{2}\int_\R\log(1+s^2)\rho(\dv s)+4\right),
    \end{equation}
    \begin{equation}
        B=\left(\sup_{|w-c|\le\epsilon}|\psi(w)|+|\log|t||+\frac{\pi}{2}\right)e^{\varepsilon\left(\sup_{|w-c|\le\epsilon}|\psi(w)|+|\log|t||+\frac{\pi}{2}\right)}.
    \end{equation}
    The only $N$-dependent terms are $\tilde b_N$ and $|c_N-c|$, both of which are small when $N\to+\infty$. Thus we have shown the Fourier transforms of $\rho_N^{(c_N)}$ and $\rho^{(c)}$ are close to each other, which by L\'evy continuity implies the weak convergence.
\end{proof}

\subsection{Proofs for Theorem~\ref{thm_main} and~\ref{thm_main_m}}

Let us firstly show~\eqref{1.20}. By Corollary~\ref{cor} we have 
\begin{equation}
    \mathcal B_{\vec a^{(N)}}\left(u;N,\frac{\beta_N}{2}\right)=\int_\R e^{ux}\rho_N^{(c_N)}(\dv x),
\end{equation}
with $c_N=N\beta_N/2$, where $\rho_N$ is given by
\begin{equation}
    \rho_N(\dv x)=\frac{1}{N}\sum_{j=1}^N\delta_{a_j^{(N)}}(\dv x)
\end{equation}
and $\rho_N^{(c_N)}$ is the distribution of the random mean of the Dirichlet process with parameter measure $c_N\rho$. By Section~\ref{s3.3}, in the limit where $c_N:=N\beta_N/2\to 0$ as $N\to+\infty$, the measure $\rho_N^{(c_N)}$ converges weakly to $\rho$. Thus, by the dominated convergence theorem,
\begin{equation}
    \lim_{N\to+\infty}\int_\R e^{ux}\rho_N^{(c_N)}(\dv x)=\int_\R e^{ux}\rho(\dv x),
\end{equation}
which finishes the proof.

Now for~\eqref{1.24}, its proof is almost identical, except that in the $N\to+\infty$ one has instead $c_N:=N\beta_N/2\to c>0$. Then by Section~\ref{s3.4}, one has the weak convergence $\rho_N^{(c_N)}\to\rho^{(c)}$. Then applying the dominated convergence theorem gives us the result.

For Equations~\eqref{1.22} and~\eqref{1.26}, we note that by Corollary~\ref{cor},
\begin{equation}
    \mathcal F_{\log\vec a^{(N)}}\left(u;N,\frac{\beta_N}{2}\right)=\int_0^{+\infty}x^u\rho_N^{(c_N)}(\dv x).
\end{equation}
By the weak convergences to $\rho$ in the classical regime and to $\rho^{(c)}$ in the high-temperature regime, it remains to use the dominated convergence theorem to conclude that
\begin{equation}
    \displaystyle\int_0^{+\infty}x^u\rho_{N}^{(c_N)}(\dv x)\to
    \begin{cases}
    \displaystyle\int_0^{+\infty}x^u\rho(\dv x),\text{ in the classical regime,}
    \vspace{.5em}
    \\
    \displaystyle\int_0^{+\infty}x^u\rho^{(c)}(\dv x),\text{ in the high temperature regime.}
    \end{cases}
\end{equation}
In the high-temperature case, we require in addition $\re u\ge 1$ such that the absolute integrability of both sides is given by Corollary~\ref{cor_tail}. The arguments will be similar to the multivariate Bessel function case.

\subsection{Proof for Theorem~\ref{thm_3}}

Theorem~\ref{thm_3} is only a simple corollary of~\eqref{3.99}. Now since $a_1^{(N)},\ldots,a_N^{(N)}$ are random points, we have that $\rho^{(c_N)}_N(\dv x)$ is a random measure. Then extended $\eta$-Wasserstein distance $\dv_{W_\eta}(\rho_N,\rho)$ is then a random variable. For~\eqref{3.99}, both sides are random variables. One has
\begin{equation}\label{3.108}
    \E\left|\int_\R e^{itx}(\rho_N^{(c_N)}-\rho^{(c)})(\dv x)\right|\le \E\left[\tilde C\left(\tilde b_N+\frac{|c_N-c|}{2e\varepsilon}\right)e^{\tilde b_N}+|c_N-c|B\right].
\end{equation}
On the right hand side, the randomness only comes from $\dv_{W_\eta}(\rho_N,\rho)$ in $\tilde b_N$ which by~\eqref{3.101} can be rewritten as
\begin{equation}
    \tilde b_N=\lambda \dv_{W_\eta}(\rho_N,\rho)+|c_N-c|\kappa
\end{equation}
and both $\lambda$ and $\kappa$ are deterministic. Therefore, with $|c_N-c|\to 0$, the convergence of the expectation on the right side of~\eqref{3.108} boils down to the convergence of
\begin{equation}
    \E \left[\dv_{W_\eta}(\rho_N,\rho)\exp \dv_{W_\eta}(\rho_N,\rho)\right]\to 0,\qquad \sup_n\E \left[\exp \dv_{W_\eta}(\rho_N,\rho)\right]<+\infty,
\end{equation}
which is implied directly by Assumption~\ref{a2}.

\section{Discussions}\label{s4}

\subsection{Comparison with results in \texorpdfstring{\cite{BGCG22}}{BGCG22}}

As mentioned in the introduction section, Benaych-Georges, Cuenca and Gorin~\cite{BGCG22} develop the notion of \textit{Bessel generating function} $G_{N,\theta}(\vec x;\mu)$ of a symmetric probability measure $\mu$ on $\R^N$, which is an integral transform with the multivariate Bessel function as the kernel:
\begin{equation}
    G_{N,\theta}(\vec x;\mu):=\E\mathcal B_{\vec a}(\vec u;\theta)=\int_{\R^N}\mathcal B_{\vec a}(\vec u;\theta)\mu(\dv \vec a).
\end{equation}
They define the multivariate $\beta$-convolution of two measures $\mu_1$ and $\mu_2$ by multiplying their Bessel generating functions. The multivariate high temperature convolution will be the high temperature limit $\beta N\to c>0$ of the previous convolution.
	
They consider a sequence of measures satisfying particular limiting assumptions, such that law of large numbers can be described. It is stated that there is a relation between the limiting one-dimensional marginal distribution of $\mu_N$ and the high temperature limit of the rank-one Bessel generating function of $\mu_N$. This relation is described by using moments and $\gamma$-cumulants.

\begin{theorem}[~\cite{BGCG22}]\label{BGCG}
    Let $a_1^{(N)},\ldots,a_N^{(N)}\in\R$ be random and has a (uniformly in $n$) compactly supported empirical distribution
    \begin{equation}
        \rho_N(\dv x):=\frac{1}{N}\sum_{j=1}^N\delta_{a_j^{(N)}}(\dv x).
    \end{equation}
    If they are $c$-LLN–appropriate (for details check~\cite{BGCG22}) and $\rho_N\to\rho$ in moments, under suitable exponential tail control, one has
    \begin{equation}
        \lim_{N\to+\infty,\,\theta N\to c}G(\underbrace{u,0,\ldots,0}_{N\text{ entries}};\mu_N)=\int_{\R}e^{ux}\rho^{(c)}(\dv x),\quad u\in\C,
    \end{equation}
    where $\rho^{(c)}$ and $\rho$ satisfies a Markov-Krein correspondence. 
\end{theorem}

\begin{remark}
    In the case $c>1$, the previous theorem is comparable to our results (Theorem~\ref{thm_3}), though there are several difference.
    \begin{enumerate}
        \item Theorem~\ref{BGCG} makes the assumption that every measure encountered has compact support, and thus moments and cumulants can be defined without a problem, while we do not admit such an assumption.
        
        \vspace{.5em}
        \item Also, the convergence $\rho_N\to\rho$ together with the requirement of being $c$-LLN-appropriate are comparable to our Assumption~\ref{a2}.
        
        \vspace{.5em}
        \item Theorem~\ref{BGCG} considers a general $u\in\C$, while we only have $u\in i\R$.
    \end{enumerate} 
\end{remark}

\subsection{\texorpdfstring{$c$-}-cumulants}

In the proofs in~\cite{BGCG22}, a notion of $c$-cumulants has been introduced (in their language it is the $\gamma$-cumulant, where their $\gamma$ is the same as our $c$) for distributions with finite moments. Let $\rho$ be a probability measure that has finite moments $\{m_k\}$. The $c$-cumulants of $\rho$ is a sequence $\{\kappa_l\}$ defined by some specific linear combinations of $m_k$.

Here we give an alternative definition of the $c$-cumulants. Noting that if we set
\begin{equation}\label{kappa}
    \tilde\kappa_l=(l-1)!\kappa_l
\end{equation}
and if there is another measure $\mu$ whose (classical) cumulants match the sequence $\{\tilde\kappa_l\}$, then $\mu$ and $\rho$ satisfy the Markov-Krein correspondence,
\begin{equation}
    \int_\R\frac{\mu(\dv x)}{(z-x)^c}=\exp\left(-c\int_\R\log(z-x)\rho(\dv x)\right).
\end{equation}
From Section~\ref{s2} we know that $\mu$ exists and it is uniquely given by $\rho^{(c)}$, the distribution of the random mean of the Dirichlet process with parameter measure $c\rho$. The paper~\cite{BGCG22} shows that moments and $c$-cumulants of $\rho$ are in a one-to-one correspondence. We claim that it can also be implied by our results. Firstly we state the following proposition regarding existence of moments.

\begin{proposition}
    Let $\rho$ and $\rho^{(c)}$ be specified above. Then $\rho$ has finite absolute moments if and only if $\rho^{(c)}$ has finite absolute moments.
\end{proposition}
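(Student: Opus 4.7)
The plan is to handle the two implications separately. The forward direction is immediate from Corollary~\ref{cor_tail}: the bound $\E_{\rho^{(c)}}[|X|^n]\le c^{-1}\E_\rho[|X|^n]$ (valid for all $n\ge 1$) applied order by order shows that finiteness of all absolute moments of $\rho$ forces the same for $\rho^{(c)}$.

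For the reverse direction, I will use Sethuraman's stick-breaking representation of the Dirichlet process. Writing $D_{c\rho}=\sum_{i\ge 1}\pi_i\delta_{X_i}$ with $X_i\sim\rho$ i.i.d., independent of the weights $\pi_i=V_i\prod_{j<i}(1-V_j)$ for $V_i\sim\mathrm{Beta}(1,c)$ i.i.d., and then peeling off the first atom via the self-similarity of the stick-breaking construction, I obtain the distributional identity
\begin{equation*}
Y \,\stackrel{d}{=}\, VX+(1-V)Y',
\end{equation*}
where $Y,Y'$ are i.i.d.\ copies of the random mean (both with law $\rho^{(c)}$), $V\sim\mathrm{Beta}(1,c)$, $X\sim\rho$, with $V$, $X$, $Y'$ mutually independent. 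Justifying this identity rigorously—verifying convergence of $\sum_i\pi_iX_i$ and matching the required independence and distributions—will be the single technical step that needs some care; once it is granted, the rest is elementary.

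With this identity in hand, I derive a reverse tail bound. Fix $n\ge 1$, $\epsilon\in(0,1)$, and $M>0$ large enough that $\mathbb{P}(|Y'|\le M)>0$ (guaranteed by Theorem~\ref{equiv}, since $Y'$ is almost surely finite). Setting $R:=2\epsilon M/(1-\epsilon)$, on the event $\{V>1-\epsilon,\,|Y'|\le M,\,|X|\ge R\}$ the triangle inequality gives $|VX+(1-V)Y'|\ge V|X|-(1-V)|Y'|\ge (1-\epsilon)|X|/2$. Taking $n$-th powers and using independence of $V$, $X$, $Y'$ yields
\begin{equation*}
\E[|Y|^n]\,\ge\, \left(\frac{1-\epsilon}{2}\right)^n \mathbb{P}(V>1-\epsilon)\,\mathbb{P}(|Y'|\le M)\,\E\!\left[|X|^n\ind_{|X|\ge R}\right].
\end{equation*}
Since $\mathbb{P}(V>1-\epsilon)=\epsilon^c>0$, rearranging and using the trivial bound $\E[|X|^n\ind_{|X|<R}]\le R^n$ produces an explicit finite bound for $\E_\rho[|X|^n]$ in terms of $\E_{\rho^{(c)}}[|Y|^n]$. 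The argument works order by order, and in fact establishes the slightly stronger per-order equivalence $\E_\rho[|X|^n]<+\infty\Leftrightarrow\E_{\rho^{(c)}}[|X|^n]<+\infty$ for each fixed $n\ge 1$.
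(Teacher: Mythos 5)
Your proof is correct, and your forward direction (via Corollary~\ref{cor_tail}) coincides with what the paper does. In the reverse direction, however, you take a genuinely different route. The paper multiplies both sides of the Markov--Krein identity by $z^c$ and argues analytically: since $z^c(z-x)^{-c}$ is analytic at $z=\infty$ and $\rho^{(c)}$ has all moments, the left-hand side admits a power-series expansion at infinity, whence $\int(\log(z-x)-\log z)\rho(\dv x)$ is analytic at infinity and its coefficients produce the moments of $\rho$. You instead exploit the self-similarity $Y\stackrel{d}{=}VX+(1-V)Y'$ coming from Sethuraman's stick-breaking, with $V\sim\mathrm{Beta}(1,c)$, $X\sim\rho$, $Y'\sim\rho^{(c)}$ mutually independent, and extract a quantitative reverse bound $\E_\rho[|X|^n\ind_{|X|\ge R}]\le (2/(1-\epsilon))^n\,\epsilon^{-c}\,\mathbb P(|Y'|\le M)^{-1}\,\E_{\rho^{(c)}}[|Y|^n]$; I have checked the elementary estimate (note $\mathbb P(V>1-\epsilon)=\epsilon^c$ for $\mathrm{Beta}(1,c)$, and on $\{V>1-\epsilon,\ |Y'|\le M,\ |X|\ge R\}$ with $R=2\epsilon M/(1-\epsilon)$ one indeed gets $|VX+(1-V)Y'|\ge(1-\epsilon)|X|/2$) and it works. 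Your approach is more probabilistic and yields the stronger order-by-order equivalence $\E_\rho[|X|^n]<\infty \Leftrightarrow \E_{\rho^{(c)}}[|X|^n]<\infty$ for each fixed $n\ge 1$, whereas the paper's argument, as written, gives the all-moments equivalence in one analytic stroke but would need extra care to localize to a single $n$. The one step you flag as needing care is genuine but unproblematic: under $\rho\in\mathcal V$ (which is part of the standing hypothesis ``$\rho$ and $\rho^{(c)}$ as specified above''), Theorem~\ref{equiv} already gives a.s.\ absolute convergence of $\sum_i\pi_i X_i$, and the self-similarity identity then follows from re-indexing the stick-breaking weights, since $\pi_i=(1-V_1)\cdot V_i\prod_{2\le j<i}(1-V_j)$ for $i\ge 2$. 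One small wording issue: calling $Y$ and $Y'$ ``i.i.d.\ copies'' is misleading, since $Y$ is a deterministic function of $(V,X,Y')$ and not independent of $Y'$; what you use (and all you need) is that $Y'\sim\rho^{(c)}$ and $V,X,Y'$ are mutually independent, with $Y\stackrel{d}{=}VX+(1-V)Y'$.
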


\begin{proof}
    For the if part, We multiply $z^c$ on the both sides to obtain
    \begin{equation}
        \int_\R\frac{z^c\rho^{(c)}(\dv x)}{(z-x)^c}=\exp\left(-c\int_\R(\log(z-x)-\log(z))\rho(\dv x)\right).
    \end{equation}
    Since $z^c(z-x)^{-c}$ is analytic in $z$ at $\infty$, if $\rho^{(c)}$ has finite absolute moments, by the dominated convergence theorem the expansion can be swapped with the integral, and so the left side has a series expansion in $z$ at $\infty$. Thus both sides are analytic at infinity and so is $\int_\R(\log(z-x)-\log(z))\rho(\dv x)$, the expansion of which gives moments of $\rho$.

    The only if direction can be implied by Corollary~\ref{cor_tail}.
\end{proof}

In the case either (hence both) $\rho$ or $\rho^{(c)}$ has all absolute moments, their moments satisfy the Markov-Krein correspondence, and so one can say that the relate $c$-cumulants, cumulants of $\rho^{(c)}$ and moments of $\rho$ by the following proposition.

\begin{proposition}[\cite{BGCG22} for the case $\rho$ being compactly supported]
    Let $\rho$ and $\rho^{(c)}$ be specified above and $\rho$ has finite absolute moments. Then $c$-cumulants of $\rho$ and cumulants of $\rho^{(c)}$ are related by~\eqref{kappa}. The moments of $\rho^{(c)}$ and $\rho$ are related by an expansion in $z$ around $\infty$ of the Markov-Krein correspondence.
\end{proposition}

\section*{Acknowledgements}
We thanks discussions with Tianshu Cong, Cesear Cuenca, Joon Lee, Pierre Mergny and Jiaming Xu and the proofreading of the first draft by Joon Lee and Pierre Mergny. JZ acknowledges financial support of FWO Odysseus grant No. G0DDD23N and FWO Fellowship Junior 1234325N.

\appendix

\section{Proofs in Section~\ref{s2}}

\subsection{Proofs of Theorem~\ref{equiv} and Theorem~\ref{Exs_unq}}\label{A1}

Although the arguments are already contained in~\cite{LR04}, we provide them in line with our notions, only for the sake of completeness.

To do so, let us firstly consider the simple case when $\rho$ is a discrete measure, finitely supported on $\{x_1,\ldots,x_n\}\subset\R$, given by
\begin{equation}\label{discrete_measure}
    \rho(\dv x)=\sum_{j=1}^n a_j\delta_{x_j}(\dv x)
\end{equation}
where the weights satisfy $a_1,\ldots,a_n> 0$ and $a_1+\ldots+a_n=1$. It is not difficult to verify the Lauricella formula
\begin{align}\label{Dir1}
    \int_{\R^n}\frac{\mathrm{Dir}_{(ca_1,\ldots,ca_n)}(\dv \sigma_1,\ldots,\dv \sigma_n)}{(z-\sum_{j=1}^nx_j\sigma_j)^c}&=\prod_{j=1}^n\frac{1}{(z-x_j)^{a_j}}=\exp\left(-c\int_\R\log(z-s)\rho(\dv s)\right).
\end{align}

This identity leads us to the Markov-Krein correspondence between $\rho^{(c)}$ and $\rho$ when $\rho$ is finitely supported. We consider the random measure $D_{c\rho}$. By definition of the Dirichlet measure, the random vector 
\begin{equation}
(D_{c\rho}(\{x_1\}),\ldots,D_{c\rho}(\{x_n\}),D_{c\rho}(\R\setminus\{x_1,\ldots,x_n\}))
\end{equation}
follows a Dirichlet distribution $\mathrm{Dir}_{(ca_1,\ldots,ca_n,0)}$. By convention of the Dirichlet distribution, this also implies its first $n$ entries follows a Dirichlet distribution $\mathrm{Dir}_{(ca_1,\ldots,ca_n)}$ while the last entry is $D_{c\rho}(\omega)(\R\setminus\{x_1,\ldots,x_n\})=0$ almost surely. Hence
\begin{align}\label{Dir2}
    \int_{\R^n}\varphi\left(\sum_{j=1}^nx_j\sigma_j\right)\mathrm{Dir}_{(ca_1,\ldots,ca_n)}(\dv \sigma_1,\ldots,\dv \sigma_n)&=\E\left[\varphi\left(\int_{\{x_1,\ldots,x_n\}} x D_{c\rho}(\dv x)\right)\right]
\end{align}
holds for any bounded measurable function $\varphi$. Combining~\eqref{Dir1} and~\eqref{Dir2} we have
\begin{equation}
    \E\left[\left(z-\int_{\{x_1,\ldots,x_n\}} x D_{c\rho}(\dv x)\right)^{-c}\right]=\exp\left(-c\int_\R\log(z-s)\rho(\dv s)\right),
\end{equation}
which holds for all $z\in\C\setminus\R$. 

Let $\{\xi_n(x)\}_{n=1,2,\ldots}$ be a sequence of simple functions on $\R$, such that for $x\geq 0$, $\xi_n(x)$ approaches $x$ from below, while for $x<0$, $\xi_n(x)$ approaches $x$ from above, and $|\xi_n(x)|$ approaches $|x|$ from below for all $x$. This function has an image 
\begin{equation}
    \xi_n(\R)=\left\{y_j:j\in J_n\right\},
\end{equation}
which contains only finitely many different values. We also denote the preimage $\xi_n^{-1}(y_j)$ for some $j\in J_n$, which are measurable sets forming a partition of $\R$, i.e. $\R=\bigcup_{j\in J_n}\xi_n^{-1}(y_j)$.

Then we can define the push-forward measure $\xi_{n,\sharp}\rho$ (under the mapping $\xi_n$) by
\begin{equation}
    \int_B\xi_{n,\sharp}\rho(\dv x):=\int_\R \ind_{\xi_n(x)\in B}\,\rho(\dv x),\text{ for all Borel set }B.
\end{equation}
Recalling the change-of-variable property for the push-forward measure, we have for any measurable function $\varphi$
\begin{equation}
    \int_\R \varphi(x)\xi_{n,\sharp}\rho(\dv x)=\int_\R \varphi(\xi_n(x))\rho(\dv x),
\end{equation}
provided that either side is integrable. If we furthermore let $\varphi$ to be bounded, taking $n\to+\infty$ on both sides yields $\xi_{n,\sharp}\rho$ converges weakly to $\rho$ by the dominated convergence theorem.

\begin{lemma}[Eq. (3.1) in~\cite{LR04}]
    Let $\varphi(x)$ be a measurable function with $|\varphi(x)|\leq C|x|$ for some constant $C>0$. With the notations above and the assumption~\eqref{cond_nu},
    \begin{align}\label{Lauricella1}
        \E\left[\left(z-\int_\R\varphi(\xi_n(x))D_{c\rho}(\dv x)\right)^{-c}\right]=\exp\left(-c\int_\R\log(z-\varphi(s))\xi_{n,\sharp}\rho(\dv s)\right).
    \end{align}
\end{lemma}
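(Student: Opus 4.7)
The plan is to reduce~\eqref{Lauricella1} to the finite-support Lauricella identity~\eqref{Dir1} already proved for the discrete measure~\eqref{discrete_measure}. The crucial observation is that $\xi_n$ takes only the finitely many values $\{y_j:j\in J_n\}$, hence $\varphi\circ\xi_n$ is a simple function; setting $A_j:=\xi_n^{-1}(y_j)$ and $a_j:=\rho(A_j)=(\xi_{n,\sharp}\rho)(\{y_j\})$, one has
\begin{equation*}
\int_\R \varphi(\xi_n(x))\,D_{c\rho}(\dv x)=\sum_{j\in J_n}\varphi(y_j)\,D_{c\rho}(A_j)
\end{equation*}
almost surely. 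Thus the a priori infinite-dimensional randomness in $D_{c\rho}$ enters the identity only through the finite-dimensional random vector $(D_{c\rho}(A_j))_{j\in J_n}$.

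First I would invoke the defining property of the Dirichlet process: since $\{A_j\}_{j\in J_n}$ is a finite measurable partition of $\R$, the vector $(D_{c\rho}(A_j))_{j\in J_n}$ has exact law $\mathrm{Dir}_{(ca_j)_{j\in J_n}}$ on the simplex. The left-hand side of~\eqref{Lauricella1} therefore rewrites as
\begin{equation*}
\int_{\Delta_{|J_n|}}\frac{\mathrm{Dir}_{(ca_j)_{j\in J_n}}(\dv\sigma_1,\ldots,\dv\sigma_{|J_n|})}{\bigl(z-\sum_{j\in J_n}\varphi(y_j)\sigma_j\bigr)^c}.
\end{equation*}
Next I would apply~\eqref{Dir1} with the support points $x_j$ replaced by $\varphi(y_j)$ and weights $a_j$, producing $\prod_{j\in J_n}(z-\varphi(y_j))^{-ca_j}$. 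Finally, since $\xi_{n,\sharp}\rho=\sum_j a_j\delta_{y_j}$, this product is precisely $\exp\bigl(-c\int_\R \log(z-\varphi(s))\,\xi_{n,\sharp}\rho(\dv s)\bigr)$, matching the right-hand side.

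The only genuine point requiring care is branch-selection for the non-integer complex power and the logarithm. Because $\im z\ne 0$ while every $\varphi(y_j)\in\R$ and $\sigma_j\in[0,1]$, all the quantities $z-\sum_j\varphi(y_j)\sigma_j$ and $z-\varphi(y_j)$ have imaginary part equal to $\im z$ and hence lie in a fixed open half-plane disjoint from $\R_-$ after a harmless rotation; the principal branch of $\log$ and of $(\cdot)^{-c}$ therefore yields single-valued, holomorphic expressions, and the finite Lauricella identity transfers verbatim. Note that the growth hypothesis $|\varphi(x)|\leq C|x|$ and condition~\eqref{cond_nu} play no role in this single identity—they become essential only in the next step of~\cite{LR04}, when one passes $n\to\infty$ to recover~\eqref{MKR} for general $\rho$ via dominated convergence together with the weak convergence $\xi_{n,\sharp}\rho\to\rho$.
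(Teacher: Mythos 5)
Your proof is correct and follows the paper's argument step for step: identify the finite partition $\{A_j\}$ induced by the simple function $\xi_n$, note that $\int\varphi(\xi_n(x))\,D_{c\rho}(\dv x)$ is a finite linear combination of the Dirichlet-distributed coordinates $D_{c\rho}(A_j)$, apply the Lauricella identity~\eqref{Dir1}, and rewrite the resulting product as an exponential integral against the push-forward measure $\xi_{n,\sharp}\rho$. Your added remarks on principal-branch selection (all arguments share $\im z\neq 0$) and on the fact that the growth bound $|\varphi|\leq C|x|$ and condition~\eqref{cond_nu} are not actually used in this finite identity are correct observations that the paper leaves implicit.
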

\begin{proof}
    We take the partition $\R=\bigcup_{j\in J_n}\xi_n^{-1}(y_j)$, and we denote the value $a_j:=\rho(\xi_n^{-1}(y_j))$. The random vector
    \begin{equation}
    (\sigma_j)_{j\in J_n}=(D_{c\rho}(\xi_n^{-1}(y_j)))_{j\in J_n}
    \end{equation}
    follows a Dirichlet distribution $\mathrm{Dir}_{(ca_j)_{j\in J_n}}$. Then by definition of the Dirichlet process, one has
    \begin{equation}
        \int_{\R} \varphi(\xi_n(x)) D_{c\rho}(\dv x)=\sum_{j\in J_n} \varphi(y_j)\sigma_j.
    \end{equation}
    The $j$-sum is finite and hence there is no convergence issue. Then by the Lauricella formula~\eqref{Dir1} we have
    \begin{align}\label{2.18}
        \E\left[\left(z-\int_{\R} \varphi(\xi_n(x)) D_{c\rho}(\dv x)\right)^{-c}\right]
        &=\int_{\R^n}\frac{\mathrm{Dir}_{(ca_j)_{j\in J_n}}(\dv \vec \sigma)}{(z-\sum_{j\in J_n} \varphi(y_j)\sigma_j)^c}=\prod_{j\in J_n}\frac{1}{(z-\varphi(y_j))^{ca_j}}.
    \end{align}
    Since $a_j=\rho(\xi_n^{-1}(y_j))$, one has
    \begin{equation}\label{2.19}
    \begin{split}
        \prod_{j\in J_n}\frac{1}{(z-\varphi(y_j))^{ca_j}}&=\exp\left(-c\sum_{j\in J_n}\log(z-\varphi(y_j))\rho(\xi_n^{-1}(y_j))\right)
        \\&=\exp\left(-c\int_\R\log(z-\varphi(\xi_n(s)))\rho(\dv s)\right)
        =\exp\left(-c\int_\R\log(z-\varphi(s))\xi_{n,\sharp}\rho(\dv s)\right).
    \end{split}
    \end{equation}
    Combining~\eqref{2.18} and~\eqref{2.19} finishes the proof.
\end{proof}

Now we are in a good shape to prove Theorem~\ref{equiv} and Theorem~\ref{Exs_unq}.

\begin{proof}[Proof of Theorem~\ref{equiv}] Using~\eqref{Lauricella1} and taking $z=1$ and $\varphi(x)=-|x|$, one obtains
\begin{align}
    \E\left[\left(1+\int_\R|\xi_n(x)|D_{c\rho}(\dv x)\right)^{-c}\right]=\exp\left(-c\int_\R\log(1+|s|)\xi_{n,\sharp}\rho(\dv s)\right).
\end{align}
Assuming~\eqref{cond_nu}, one takes the limit $n\to+\infty$ on both sides. On the left hand side the integrand is bounded by the assumption, while on the right hand side the limit can be interchanged with the integral because of the weak convergence. Thus by the dominated convergence theorem we obtain
\begin{align}\label{2.25}
    \E\left[\left(1+\lim_{n\to+\infty}\int_{\R} |\xi_n(x)| D_{c\rho}(\dv x)\right)^{-c}\right]
    &=\exp\left(-c\int_\R\log(1+|s|)\rho(\dv s)\right).
\end{align}
Since the right side is none-zero, it implies that the left side is none-zero, and so the limit $\lim_{n\to+\infty}\int_{\R} |\xi_n(x)| D_{c\rho}(\dv x)$ is finite almost surely. 

On the other hand, assuming $\int_{\R} |x| D_{c\rho}(\dv x)$ is finite almost surely, then by a similar argument the right side of~\eqref{2.25} is non-zero, and thus we have~\eqref{cond_nu}.
\end{proof}

\begin{proof}[Proof of Theorem~\ref{Exs_unq}]
Here we will prove a slightly more general result: for all $z,u\in\C$ such that $z-us\not\in\R$ for all $s\in\R$, we have
\begin{equation}\label{MKR'}
    \int_\R\frac{\rho^{(c)}(\dv x)}{(z-ux)^c}=\exp\left(-c\int_\R\log(z-us)\rho(\dv s)\right).
\end{equation}

For the existence of $\rho^{(c)}$ let us again start with~\eqref{Lauricella1} and take $\varphi(x)=ux$. Now one takes the limit $n\to+\infty$ on both sides, and again by the dominated convergence theorem we obtain
\begin{align}
    \E\left[\left(z-u\lim_{n\to+\infty}\int_{\R} \xi_n(x) D_{c\rho}(\dv x)\right)^{-c}\right]
    &=\exp\left(-c\int_\R\log(z-us)\rho(\dv s)\right).
\end{align}
The limit $n\to+\infty$ can be again interchanged with the $x$-integral because of the existence of the random mean.

The uniqueness can be checked without too much effort using Theorem~\ref{Fourier}. Since each $\rho$ gives a Fourier transform of $\rho^{(c)}$, by the Fourier inversion formula this uniquely defines $\rho^{(c)}$.
\end{proof}

\subsection{Proof of Theorem~\ref{Fourier}}\label{A:Fourier}


We assume that $\rho^{(c)}$ is a probability measure satisfying~\eqref{cond_nu}, i.e. $\rho\in\mathcal V$. For any $M>0$ we define a truncated version of $\rho$ as
\begin{equation}
    \rho^{[M]}(\dv s):=\rho((-\infty,-M])\delta_{-M}(\dv s)+\ind_{(-M,M)}(s)\rho(\dv s)+\rho([M,+\infty))\delta_{M}(\dv s).
\end{equation}
Now $\rho^{[M]}$ is also a probability measure and its support is $[-M,M]$. One can also see that $\rho^{[M]}\to \rho$ weakly as $M\to\infty$.

The following Lemma has been used implicitly in~\cite[Section 7]{LR04} without giving a proof. It is a special case of Corollary~\ref{weak_conv0} (which is based on results in~\cite{DelloSchiavo19}), while here we provide a alternative elementary proof which does not rely on Corollary~\ref{weak_conv0} and results in~\cite{DelloSchiavo19}.

\begin{lemma}\label{weak_conv}
    Let $\rho^{[M]}$ and $\rho$ be stated above. For $c>0$, let $(\rho^{[M]})^{(c)}$ be the random mean of $D_{c\rho^{[M]}}$, and $\rho^{(c)}$ be the random mean of $D_{c\rho}$. Then $(\rho^{[M]})^{(c)}\to\rho^{(c)}$ weakly as $M\to\infty$.
\end{lemma}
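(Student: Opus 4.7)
The strategy is to realize both families of measures on a single probability space through one Dirichlet process $D_{c\rho}$, which upgrades the weak convergence question to an almost sure convergence of real-valued random variables and avoids any direct analysis on $\mathcal{P}(\mathcal{P}(\R))$.

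Let $\xi_M(x):=\max\{-M,\min\{x,M\}\}$, so that $\xi_M\colon\R\to[-M,M]$ is bounded and measurable with $|\xi_M(x)|\le|x|$ and $\xi_M(x)\to x$ pointwise as $M\to\infty$. A direct computation of preimages yields the identity $\xi_{M,\sharp}\rho=\rho^{[M]}$. The first step is to apply the Lauricella identity~\eqref{Lauricella1} with $\varphi=\mathrm{id}$ along a sequence of simple functions $\eta_k$ converging pointwise to $\xi_M$ with $|\eta_k|\le M$, and then let $k\to\infty$ on both sides. On the left, for each realization $D_{c\rho}(\omega)$ one has $\int\eta_k(x)\,D_{c\rho}(\omega)(\dv x)\to\int\xi_M(x)\,D_{c\rho}(\omega)(\dv x)$ by bounded convergence; the outer expectation is controlled by $|\im z|^{-c}$, so $\E$ may also pass to the limit. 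On the right, $\log(z-s)$ is bounded on $[-M,M]$ and the weak convergence $\eta_{k,\sharp}\rho\to\xi_{M,\sharp}\rho=\rho^{[M]}$ passes the exponent to the limit. Both sides thus converge and yield
\begin{equation}
	\E\Bigl[(z-Y_M)^{-c}\Bigr]=\exp\Bigl(-c\int_\R\log(z-s)\,\rho^{[M]}(\dv s)\Bigr),\quad Y_M:=\int_\R\xi_M(x)\,D_{c\rho}(\dv x),
\end{equation}
for every $z\in\C\setminus\R$. Since $\rho^{[M]}$ is compactly supported it satisfies~\eqref{cond_nu} trivially, and the uniqueness part of Theorem~\ref{Exs_unq} identifies the law of $Y_M$ as exactly $(\rho^{[M]})^{(c)}$.

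The second step is an almost sure convergence at the level of random means. By Theorem~\ref{equiv}, there is a full-$\mathbb{P}$-measure event on which $D_{c\rho}(\omega)$ is absolutely integrable. On that event, $|\xi_M(x)|\le|x|$ together with the pointwise convergence $\xi_M(x)\to x$ permits Lebesgue's dominated convergence theorem to conclude $Y_M(\omega)\to Y(\omega):=\int_\R x\,D_{c\rho}(\omega)(\dv x)$ as $M\to\infty$. Hence $Y_M\to Y$ almost surely and therefore in distribution. Since $\mathcal{L}(Y)=\rho^{(c)}$ by definition, this is precisely the desired weak convergence $(\rho^{[M]})^{(c)}\to\rho^{(c)}$.

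The only mildly delicate point is the passage $k\to\infty$ in the Lauricella identity; it is handled by two routine applications of dominated convergence, one inside the $D_{c\rho}$-integral for each sample and one under $\E$, both of which go through because $\xi_M$ is bounded and $\rho^{[M]}$ is compactly supported. No tightness estimate for the family $\{(\rho^{[M]})^{(c)}\}_M$ and no continuity statement on $\mathcal{P}(\mathcal{P}(\R))$ enter the argument, which is precisely the point of providing this alternative proof independent of~\cite{DelloSchiavo19}.
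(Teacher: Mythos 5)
Your proof is correct and rests on the same central coupling as the paper's argument: both constructions realize a random variable $Y_M=\int_\R\xi_M(x)\,D_{c\rho}(\dv x)$ (with $\xi_M$ the truncation $\max\{-M,\min\{x,M\}\}$) on the \emph{same} probability space as $D_{c\rho}$, show that $\mathcal{L}(Y_M)=(\rho^{[M]})^{(c)}$, and then pass $M\to\infty$ through dominated convergence using the almost-sure absolute integrability of $|x|$ under $D_{c\rho}$ furnished by Theorem~\ref{equiv}. Where you depart from the paper is in how the identity $\mathcal{L}(Y_M)=(\rho^{[M]})^{(c)}$ is established. You take the Laplace-Lauricella transform route: apply~\eqref{Lauricella1} with $\varphi=\mathrm{id}$ along bounded simple approximants $\eta_k\to\xi_M$, obtain $\E[(z-Y_M)^{-c}]=\exp(-c\int\log(z-s)\rho^{[M]}(\dv s))$, and then invoke the uniqueness clause of Theorem~\ref{Exs_unq}. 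The paper instead matches the two underlying Dirichlet distributions \emph{directly} on the partition induced by a concrete simple function (the key is $\rho^{[M]}(\xi_n^{-1}(y))=\rho(\xi_n^{-1}(y))$ for inner cells, with the boundary weights $a_\pm$ matching the tail masses of $\rho$) and thereby gets equidistribution against every bounded continuous $\varphi$ at once, never needing a uniqueness statement. Your route is slicker but has a latent dependence to watch: as written in the paper, the uniqueness half of Theorem~\ref{Exs_unq} is justified by reference to Theorem~\ref{Fourier}, which itself uses the present lemma in its non-compact step. There is no genuine circularity in your argument because you only invoke uniqueness for the \emph{compactly supported} measure $\rho^{[M]}$, where uniqueness is elementary (expand $\int(z-s)^{-c}\mu(\dv s)$ about $z=\infty$ to read off all moments of $\mu$, which determine a compactly supported measure), but it is worth saying this explicitly rather than citing Theorem~\ref{Exs_unq} wholesale. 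One more small point: the Lauricella lemma as stated in the paper fixes a specific family $\{\xi_n\}$ of simple functions, but you apply it to an arbitrary bounded simple sequence $\eta_k$; this is fine since the lemma's proof only uses that the function in question has finite image, but the extension should be flagged.
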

\begin{proof}
    For a given $\rho^{[M]}$, let us take the sequence of simple functions $\{\xi_n\}$ by
    \begin{equation}
        \xi_n(x)=\begin{cases}
        \lfloor nx\rfloor/n,&x\in[0,n)\setminus\{ M\},\\
        M,&x=M,\\
        n,&x>n,\\
        -\xi_n(-x),& x<0,
        \end{cases}\qquad n> M.
    \end{equation}
    Here without loss of generality we assume $n>M$. It is straight forward to check that $\xi_n(x)$ approaches $x$ from above on the positive axis and from below on the negative axis, and $|\xi_n(x)|\to |x|$ from below for all $x$. The image of this function is
    \begin{equation}
        \xi_n(\R)=\{p/n:p=-n^2,\ldots,-1,0,1,\ldots,n^2\}\cup \{\pm M\}.
    \end{equation}
    
   Let $\varphi$ be a continuous function with $|\varphi|\leq 1$. Then
    \begin{align}
        \E &\left[\varphi\left(\int \xi_n(x)D_{c\rho^{[M]}}(\dv x)\right)\right]=\int\varphi\left(-M\sigma_-+\left(\sum_{\substack{p\in\Z\cap(-nM,nM)}} \frac{p}{n}\sigma_p\right)+M\sigma_+\right){\mathrm{Dir}_{(c\vec a)}(\dv \vec \sigma)}
    \end{align}
    where $\vec \sigma=(\sigma_-,\sigma_{\lceil -nM\rceil},\ldots,\sigma_{\lfloor nM\rfloor},\sigma_+)$ and $\vec a=(a_-,a_{\lceil -nM\rceil},\ldots,a_{\lfloor nM\rfloor},a_+)$ with 
    \begin{align}
        &a_p=\rho^{[M]}(\xi_n^{-1}(\{p/n\})),\quad p\in\Z\cap(-nM,nM),\\
        &a_-=\rho^{[M]}(\{-M\}),\\
        &a_+=\rho^{[M]}(\{M\}).
    \end{align}
    Since $\xi_n^{-1}(\{p/n\})$ in contained in $(-M,M)$, we have
    \begin{align}
        &a_p=\rho(\xi_n^{-1}(\{p/n\})),\quad p\in\Z\cap(-nM,nM),
        \\
        &a_-=\rho((-\infty,-M]),\\
        &a_+=\rho([M,+\infty)).
    \end{align}
    Thus by the definition of the Dirichlet process, one has
    \begin{align}
        \E &\left[\varphi\left(\int \xi_n(x)P_{c\rho^{[M]}}(\dv x)\right)\right]=\E \left[\varphi\left(\int \left(-M\ind_{x\leq -M}+\xi_n(x)\ind_{x\in(-M,M)}+M\ind_{x\geq -M}\right)D_{c\rho}(\dv x)\right)\right].
    \end{align}
    We then take the $N\to+\infty$ limit on both sides. On the left hand side, because of the continuity of $\varphi$ and the boundedness of $\xi_n(x)$ by $|x|$, one can push the limit inside both integrals and obtain
    \begin{equation}
        \lim_{n\to+\infty}\E \left[\varphi\left(\int \xi_n(x)D_{c\rho^{[M]}}(\dv x)\right)\right]=\E \left[\varphi\left(\int xD_{c\rho^{[M]}}(\dv x)\right)\right].
    \end{equation}
    On the right side by the continuity of $\varphi$ and the boundedness of the integrand, the $n$-limit can be pushed in to replace $\xi_n(x)$ by $x$. One thus obtains
    \begin{equation}
        \E \left[\varphi\left(\int xD_{c\rho^{[M]}}(\dv x)\right)\right]=\E \left[\varphi\left(\int \left(-M\ind_{x\leq -M}+x\ind_{x\in(-M,M)}+M\ind_{x\geq -M}\right)D_{c\rho}(\dv x)\right)\right].
    \end{equation}
    Taking $M\to+\infty$ limit on both sides gives
    \begin{equation}
        \lim_{M\to\infty}\E \left[\varphi\left(\int xD_{c\rho^{[M]}(\dv x)}\right)\right]=\E \left[\varphi\left(\int xD_{c\rho(\dv x)}\right)\right].
    \end{equation}
    This proofs that $(\rho^{[M]})^{(c)}\to\rho^{(c)}$ weakly.
\end{proof}

We will show two lemmas that rewrite he exponential function into a contour integral. Though they are somewhat standard results in complex analysis, we provide proofs for the sake of completeness.

\begin{lemma}\label{lem_Gamma}
    For all $c>0$, $u\in\C\setminus \{0\}$ and $x\in\R$, we have
    \begin{equation}\label{gamma}
    e^{ux}={u\Gamma(c)}\frac{1}{2\pi i}\int_{\sgn(u)\mathcal C}\dv z\,\frac{e^{uz}}{(u(z-x))^{c}},
    \end{equation}
    where $\sgn$ is the sign function with complex arguments given by~\eqref{sgn}, and the contour $\mathcal C$ is the Hankel loop enclosing $(-\infty,x]$; see Fig.~\ref{contour1}.
\end{lemma}
\begin{proof}
    By the Hankel integral representation of the reciprocal Gamma function we have
    \begin{equation}\label{gamma_Hankel}
        \frac{1}{\Gamma(c)}=\frac{1}{2\pi i}\int_{\mathcal C}\dv z\,\frac{e^{z}}{z^c}.
    \end{equation}
    Here the contour $\mathcal C$ encloses the usual branch cut $(-\infty,0)$ for the function $z^{-c}$. 
    
    In the case $\re u>0$, one changes variable $z\mapsto u(z-x)$ and so the contour is also changed to $z\in x+u^{-1}\mathcal C$, which encloses the branch cut $x+u^{-1}(-\infty,0]$. When looking away from the branch point, this branch cut lies at the second ( or third) quadrant for $\re u>0$ and $\im u\le 0$ (or $\im u<0$). When $\re u>0$ and $\im u \ge 0$, one can verify that on the arc $C_R$ with radius $R$ in the second quadrant
    \begin{equation}\label{arc_bd}
    \begin{split}
        \left|\int_{C_R}\frac{e^{uz}}{(u(z-x)^c)}\right|&\le \int_{\pi/2}^{\pi }\dv\theta\frac{Re^{\re(uRe^{i\theta})}}{|u(R e^{i\theta}-x)|^c}\\
        &\le \frac{\pi}{2}\frac{R}{|u(R-|x|)|^c}\sup_{\theta\in(\pi/2,\pi)}\exp(R(\re u\cos\theta-\im u\sin\theta)).
    \end{split}
    \end{equation}
    Since $\cos\theta \le 0$ and $\sin\theta \ge 0$, the above quantity goes to $0$ as $R\to\infty$. A similar argument holds true for $\im u\le 0$, since now we are at the third quadrant and $\sin\theta\ge  0$. Thus one can apply Cauchy's theorem to deform the contour back to $\mathcal C$, which starts and ends at $-\infty$, enclosing the branch point $u(z-x)$ with a branch cut going to $-\infty$.

    In the case $\re u<0$, one can see that after the change of variable, the contour is deformed such that it starts and ends at the first (or fourth) quadrant if $\im u\ge 0$ (or $\im u<0$). A similar bound to~\eqref{arc_bd} can be performed to see that applying a Cauchy theorem the contour can be deformed such that it encloses the branch point $u(z-x)$ with a branch cut going to $+\infty$. We have
    \begin{equation}
        \frac{1}{\Gamma(c)}=u\frac{1}{2\pi i}\int_{\mathcal C}\dv z\,\frac{e^{u(z-x)}}{(u(z-x))^c}.
    \end{equation}
    A simplification gives us the result.
    
    Now let us consider the case $u=it$ and $t>0$, one changes variable $z\mapsto it(z-x)$, which deform the contour to $z\in x-i t^{-1}\mathcal C$. After applying Cauchy's theorem we then have
    \begin{equation}
        \frac{1}{\Gamma(c)}=\frac{1}{2\pi i}\int_{-i\mathcal C}it\dv z\,\frac{e^{it(z-x)}}{(it(z-x))^c}.
    \end{equation}
    The contour $i\mathcal C$ encloses the branch cut $i(-\infty,x]$.

    Similarly for the case $t<0$, the same change of variable deforms the contour from $\mathcal C$ to $-i\mathcal C$, which encloses the branch cut $-i(-\infty,x]$.
\end{proof}

\begin{lemma}\label{lem_laplace}
    For $t\ne 0$ and $\varepsilon>0$ small enough,
    \begin{equation}
        \int_{I_t} \exp(R(\varepsilon \cos\theta-t \sin\theta))\dv \theta=O\left(\frac{e^{\varepsilon R}}{t R}\right),
    \end{equation}
    where $I_t=[\pi/2,\pi]$ if $t>0$, and $I_t=[\pi,3\pi/2]$ if $t<0$.
\end{lemma}

\begin{proof}
    The proof is a simple application of Laplace's method. Let us consider firstly the case $t>0$.

    Let $\phi(\theta):=\varepsilon\cos\theta-t \sin\theta$ with $\theta\in[-\pi,\pi]$. Then $\phi$ takes its local maximum and minimum at
    \begin{equation}
    \begin{split}
        &\theta_{\max}=\arctan(-t/\varepsilon)\in\left(-\frac{\pi}{2},0\right),\quad\theta_{\min}=\arctan(-t/\varepsilon)+\pi\in\left(\frac{\pi}{2},\pi\right).
    \end{split}
    \end{equation}
    Since $\theta_{\max}\subset (-\pi/2,0)$, the contribution of the integral is given at the right end point $\pi$. One can apply Laplace's method to find the approximation
    \begin{equation}
        \int_a^b \exp(R(\varepsilon \cos\theta-t \sin\theta))\dv \theta=O\left(\frac{e^{R\phi(\pi)}}{R|\phi'(\pi)|}\right).
    \end{equation}
    We also have
    \begin{equation}
        \phi(\pi)= \varepsilon,\quad \phi'(\pi)=-t\ne 0.
    \end{equation}
    Assembling the above gives the result.

    In the case $t<0$, the contribution of the integral is given at left end point $\pi$. The rest of the analysis remains the same.
\end{proof}

With the previous two lemmas, one can now proof Theorem~\ref{Fourier} fully rigorously.

\begin{proof}[Proof of Theorem~\ref{Fourier}]
    We firstly look at the case where $\rho$ has a compact support contained in an open subset of $[-M,M]$. Note that the support of $\rho^{(c)}$ is also contained in $[-M,M]$ by Theorem~\ref{conv_hull}.

    \vspace{1em}
    \smallskip\noindent\emph{Step I: Compactly supported $\rho$.}
    By Theorem~\ref{thm_tail}, the left side of~\eqref{fourier2} is finite. By Lemma~\ref{lem_Gamma} we have
    \begin{equation}
        \int_\R \rho^{(c)}(\dv x)e^{ ux}=\int_{-M}^M\rho^{(c)}(\dv x)\,{u\Gamma(c)}\frac{1}{2\pi i}\int_{\sgn(u)\mathcal C}\dv z\frac{e^{uz}}{(u(z-x))^c}.
    \end{equation}
    Because of the exponential decay of $e^{uz}$ along the contour $\sgn(u)\mathcal C$, one can apply a Fubini's theorem to interchange the integrals. Thus
    \begin{equation}
    \begin{split}
        \int_\R \rho^{(c)}(\dv x)e^{ux}&=y\Gamma(c)\frac{1}{2\pi i}\int_{\sgn(u)\mathcal C}\dv z\,e^{uz}\int_{-M}^M\frac{\rho^{(c)}(\dv x)}{(u(z-x))^c}\\
        &=u\Gamma(c)\frac{1}{2\pi i}\int_{\sgn(u)\mathcal C}\dv z\,\exp\left(uz-c\int_\R\log(u(z-s))\rho(\dv s)\right),
    \end{split}
    \end{equation}
    where we make use of the Markov-Krein correspondence between $\rho^{(c)}$ and $\rho$. Noting that
    \begin{equation}
        \log(u(z-s))=\log u +\log(z-s)+ 2m\pi i
    \end{equation}
    for some $m\in\Z$, and the $e^{2m\pi i}=1$, we can simplify the previous expressions to
    \begin{align}
        \int_\R \rho^{(c)}(\dv x)e^{ux}
        &=\frac{\Gamma(c)}{u^{c-1}}\frac{1}{2\pi i}\int_{\sgn(u)\mathcal C}\dv z\,\exp\left(uz-c\int_\R\log(z-s)\rho(\dv s)\right).
    \end{align}
    
    \vspace{1em}
    \smallskip\noindent\emph{Step II: $u=it$ with $t>0$.} In the special case where $\rho$ is compactly supported and $u=it$ with $t>0$, we need to deform the contour $\sgn(u)\mathcal C$ to $\R-i\delta$, for every $\delta>0$. For a graphical illustration of the deformation, see Fig.~\ref{contour2}.

    \begin{figure}[h]
\begin{center}
\begin{tikzpicture}[
    contour1/.style={thick, 
        postaction={decorate}, 
        decoration={markings, 
            mark=at position 0.15 with {\arrow{Stealth}},
            mark=at position 0.55 with {\arrow{Stealth}},
            mark=at position 0.88 with {\arrow{Stealth}}
        }
    },
    contour2/.style={thick, 
        postaction={decorate}, 
        decoration={markings, 
            mark=at position 0.5 with {\arrow{Stealth}}
        }
    },
    contour3/.style={thick, 
        postaction={decorate}, 
        decoration={markings, 
            mark=at position 0.5 with {\arrow{Stealth}}
        }
    },
    axis/.style={-{Stealth[scale=1.5]}, semithick},
]

\def\BigR{3.5}
\def\BigM{1}

\coordinate (O) at (0,0);
\coordinate (Mpos) at (\BigM, -\BigM);   
\coordinate (Mneg) at (-\BigM, -\BigM);  

\draw[contour1] 
    (-\BigM, \BigR-\BigM) node[above left] {\scriptsize $-M+i(R-M)$} 
    -- 
    (-\BigM, -\BigM) coordinate (bottom) node[below] {\scriptsize $-M-iM$} 
    -- 
    (\BigM, -\BigM) coordinate (top) node[below] {\scriptsize $M-i M$} 
    -- 
    (\BigM, \BigR-\BigM) node[above right] {\scriptsize $M+i(R-M)$};

\draw[contour2] 
    (-\BigR-\BigM, -\BigM) node[below] {\scriptsize $-M-R-iM$} 
    -- 
    (-\BigM, -\BigM) coordinate (bottom);

\draw[contour2]
    (\BigM, -\BigM) coordinate (top) 
    -- 
    (\BigR+\BigM, -\BigM) node[below] {\scriptsize $M+R-iM$};

\draw[contour2,blue] (Mpos) ++(\BigR, 0) arc (0:90:\BigR) 
    node[midway, above right] {$C_R^+$};

\draw[contour2,blue] (Mneg) ++(0, \BigR) arc (90:180:\BigR)
    node[midway, above left] {$C_R^-$};


\draw[axis] (-\BigR-\BigM-0.5, 0) -- (\BigR+\BigM+0.5, 0) node[above] {$\mathrm{Re}(z)$};
\draw[axis] (0, -\BigM-.5) -- (0, \BigR) node[left] {$\mathrm{Im}(z)$};
\node at (0,0) [below left] {\scriptsize $0$};
\end{tikzpicture}
\caption{Deformation of the contour from $-i\mathcal C$ to $\R-iM$.}
\label{contour2}
\end{center}
\end{figure}
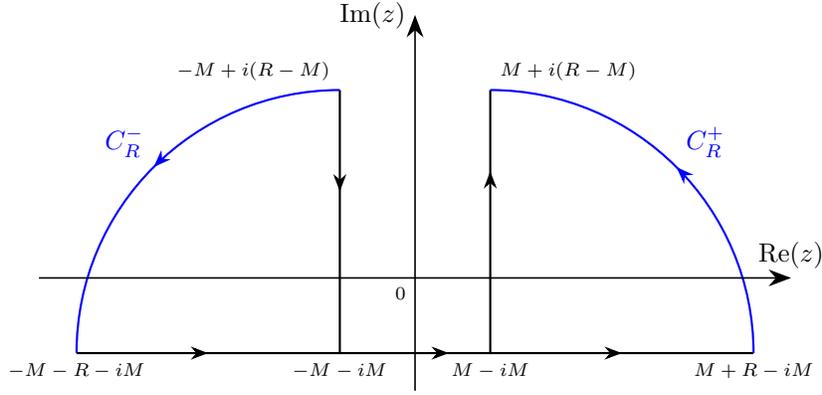
    
    Denote the arc going from $M+R-iM$ to $M+i(R-M)$ centering at $M-iM$ by $\mathcal C_R^+$, and the arc going from $-M+i(R-M)$ to $M+R-iM$ centering at $-M-iM$ by $\mathcal C_R^-$; they both have radius $R$. Then we have the inequality
    \begin{equation}
    \begin{split}
        \left|\int_{\mathcal C_R^+}\dv z\,e^{itz}\int_{-M}^M\frac{\rho^{(c)}(\dv x)}{(it(z-x))^c}\right|&=\left|\int_0^{\pi/2}Rie^{i\theta}\dv \theta\,e^{it(M-iM)}e^{itRe^{i\theta}}\int_{-M}^M\frac{\rho^{(c)}(\dv x)}{t^c(M-iM+Re^{i\theta}-x)^c}\right|\\
        &\leq \int_0^{\pi/2}\dv \theta\, Re^{tM-tR\sin\theta}\int_{-M}^M\frac{\rho^{(c)}(\dv x)}{t^c(R-3M)^c}=\frac{e^{tM}}{t^c(R-3M)^c}O(1)\to 0
    \end{split}
    \end{equation}
    as $R\to+\infty$, where we use the asymptotics
    \begin{equation}
        \int_0^{\pi/2}\dv\theta\, e^{-R\sin\theta}=O(R^{-1}),\text{ as }R\to+\infty,
    \end{equation}
    which can be deduced via a Laplace's method (c.f. Lemma~\ref{lem_laplace}). Similarly for $\mathcal C_R^-$ we have
    \begin{equation}
    \begin{split}
        \left|\int_{\mathcal C_R^-}\dv z\,e^{itz}\int_{-M}^M\frac{\rho^{(c)}(\dv x)}{(it(z-x))^c}\right|&=\left|\int_0^{\pi/2}Rie^{i\theta}\dv \theta\,e^{it(-M-iM)}e^{itRe^{i(\theta+\pi/2)}}\int_{-M}^M\frac{\rho^{(c)}(\dv x)}{t^c(-M-iM+Re^{i(\theta+\pi/2)}-x)^c}\right|\\
        &\leq \int_0^{\pi/2}\dv \theta\, Re^{tM-tR\cos\theta}\int_{-M}^M\frac{\rho^{(c)}(\dv x)}{t^c(R-3M)^c}=\frac{e^{tM}}{t^c(R-3M)^c}O(1)\to 0
    \end{split}
    \end{equation}
    as $R\to+\infty$. Therefore we now have deformed the contour to $\R-iM$.

    Denote the vertical contours $\gamma_{\pm R}$ as the one going from $\pm R-iM$ to $\pm R-i\delta$. Then
    \begin{equation}
    \begin{split}
        \left|\int_{\gamma_{\pm R}}\dv z\,e^{itz}\int_{-M}^M\frac{\rho^{(c)}(\dv x)}{(it(z-x))^c}\right|&=\left|\int_{M}^{\delta}(-i)\dv y\,e^{it(\pm R-iy)}\int_{-M}^M\frac{\rho^{(c)}(\dv x)}{t^c(\pm R-iy-x)^c}\right|\\
        &\le\int_M^{\delta}\dv y\,e^{ty}\int_{-M}^M\frac{\rho^{(c)}(\dv x)}{t^c(R-2M)^c}\to 0.
    \end{split}
    \end{equation}  
    Thus one has
    \begin{equation}
        \int_\R \rho^{(c)}(\dv x)e^{it x}=\frac{\Gamma(c)}{(it)^{c-1}}\frac{1}{2\pi i}\int_{\R-i\delta}\dv z\,\exp\left(itz-c\int_\R\log(it(z-s))\rho(\dv s)\right).
    \end{equation}
    which gives the final result for $t>0$.
    
    \vspace{1em}
    \smallskip\noindent\emph{Step III: $y=it$ with $t<0$.} Here on the contour $\sgn(y)\mathcal C$ the exponential function $e^{itz}$ decays. Thus similar contour deformations to the $t>0$ case can be applied to obtain the final contour $\R+i\delta$ with $\delta>0$. This finishes the proof in the case where $\rho$ is compactly supported.

    \vspace{1em}
    \smallskip\noindent\emph{Step IV: Non-compact suppport.}
    For the case where $\rho$ does not have a compact support, one makes use of the measure $\rho^{[M]}$ and the weak convergence $\rho^{[M]}\to\rho$ (by Lemma~\ref{weak_conv}). Noting that $\rho^{[M]}$ has a compact support, and thus by the Lebesgue dominated convergence theorem, we have the final result.

    \vspace{1em}
    \smallskip\noindent\emph{Step V: Special case.}
    In addition, for the case $u=it$, $t\ne 0$ and $c>1$,
    \begin{equation}
    \begin{split}
        \left|\int_{\R-i\delta\sgn(t)}\dv z\,e^{itz}\int_{\R}\frac{(\rho^{[M]})^{(c)}(\dv x)}{(z-x)^c}\right|&\leq e^{\delta \sgn(t)}\int_{\R}\dv y\, \int_{\R}\frac{(\rho^{[M]})^{(c)}(\dv x)}{((y-x)^2+\delta^2)^{c/2}}.
    \end{split}
    \end{equation}
    The integrand is absolutely integrable. Thus one can swap the $M$-limit with the $z$-integral to obtain
    \begin{equation}
        \int_\R \rho^{(c)}(\dv x)e^{it x}=\frac{\Gamma(c)}{(it)^{c-1}}\frac{1}{2\pi i}\int_{\R-i\delta\,\sgn(t)}\dv z\,e^{itz}\lim_{M\to+\infty}\int_{\R}\frac{(\rho^{[M]})^{(c)}(\dv x)}{(z-x)^c}.
    \end{equation}
    By Lemma~\ref{weak_conv}, the weak convergence $(\rho^{[M]})^{(c)}\to\rho^{(c)}$ finalises the proof.
\end{proof}



\end{document}